\providecommand{\algorithmname}{Algorithm}
 \theoremstyle{definition}
  \newtheorem{example}{\protect\examplename}
  \theoremstyle{remark}
  \newtheorem{rem}{\protect\remarkname}
  \theoremstyle{plain}
  \newtheorem{prop}{\protect\propositionname}
\theoremstyle{plain}
\newtheorem{thm}{\protect\theoremname}
  \theoremstyle{plain}
  \newtheorem{cor}{\protect\corollaryname}
  \theoremstyle{plain}
  \newtheorem{lem}{\protect\lemmaname}
  \providecommand{\examplename}{Example}
  \providecommand{\lemmaname}{Lemma}
  \providecommand{\propositionname}{Proposition}
  \providecommand{\remarkname}{Remark}
\providecommand{\corollaryname}{Corollary}
\providecommand{\theoremname}{Theorem}
\begin{document}

\title{Unbiased approximations of products of expectations}

\author{Anthony Lee\thanks{School of Mathematics, University of Bristol and The Alan Turing Institute}~,
Simone Tiberi\thanks{Institute of Molecular Life Sciences and SIB Swiss Institute of Bioinformatics,
University of Zurich}~, and Giacomo Zanella\thanks{Department of Decision Sciences, BIDSA and IGIER, Bocconi University}}
\maketitle
\begin{abstract}
We consider the problem of approximating the product of $n$ expectations
with respect to a common probability distribution $\mu$. Such products
routinely arise in statistics as values of the likelihood in latent
variable models. Motivated by pseudo-marginal Markov chain Monte Carlo
schemes, we focus on unbiased estimators of such products. The standard
approach is to sample $N$ particles from $\mu$ and assign each particle
to one of the expectations. This is wasteful and typically requires
the number of particles to grow quadratically with the number of expectations.
We propose an alternative estimator that approximates each expectation
using most of the particles while preserving unbiasedness. We carefully
study its properties, showing that in latent variable contexts the
proposed estimator needs only $\mathcal{O}(n)$ particles to match
the performance of the standard approach with $\mathcal{O}(n^{2})$
particles. We demonstrate the procedure on two latent variable examples
from approximate Bayesian computation and single-cell gene expression
analysis, observing computational gains of the order of the number
of expectations, i.e.\ data points, $n$.
\end{abstract}
\noindent {\it Keywords:}  Latent variable models,  Markov chain Monte Carlo, pseudo-marginal, approximate Bayesian computation

\section{Introduction}

Let $X$ be a random variable with probability measure, or distribution,
$\mu$ on a measurable space $(E,\mathcal{E})$, and $L^{1}(\mu)$
the class of integrable, real-valued functions, i.e. $L^{1}(\mu)=\{f:\int_{E}\left|f(x)\right|\mu({\rm d}x)<\infty\}$.
For a sequence of non-negative ``potential'' functions $G_{1},\ldots G_{n}\in L^{1}(\mu)$,
we consider approximations of products of $n$ expectations
\begin{equation}
\gamma:=\prod_{p=1}^{n}\mathbb{E}\left[G_{p}(X)\right]=\prod_{p=1}^{n}\mu(G_{p}),\label{eq:gamma}
\end{equation}
where we denote $\mu(f):=\int_{E}f(x)\mu({\rm d}x)$ for $f\in L^{1}(\mu)$.
These arise, e.g., as values of the likelihood function in latent
variable models. We concentrate on unbiased approximations of $\gamma$;
these can be used, e.g., within pseudo-marginal Markov chain methods
for approximating posterior expectations \citep{Beaumont2003,Andrieu2009}.
To motivate this general problem, and because our main result in the
sequel relates to latent variable models, we provide the following
generic example of such a model.
\begin{example}
\label{example:latent-var-model}Let $g$ be a Markov transition density
and $Y_{1},\ldots,Y_{n}$ be i.i.d. $\mathsf{Y}$-valued random variables
distributed according to the probability density function $\nu$ where
\[
\nu(y):=\mathbb{E}\left[g(X,y)\right]=\int_{E}g(x,y)\mu({\rm d}x),\qquad y\in\mathsf{Y}.
\]
That is, the $Y_{p}$ are independent and distributed according to
$g(X_{p},\cdot)$ where $X_{p}\sim\mu$. For observations $y_{1},\ldots,y_{n}$,
respectively, of $Y_{1},\ldots,Y_{n}$, we can write
\[
\prod_{p=1}^{n}\nu(y_{p})=\prod_{p=1}^{n}\mathbb{E}\left[g(X,y_{p})\right]=\prod_{p=1}^{n}\mathbb{E}\left[G_{p}(X)\right]=\gamma,
\]
where the potential functions are defined via $G_{p}(x):=g(x,y_{p})$,
for $p\in\{1,\ldots,n\}$.
\end{example}
\begin{rem}
\label{rem:likelihood-function-lvarmod}To see that $\gamma$ can
be viewed as a value of the likelihood function, let $\theta\in\Theta$
be a statistical parameter and $\{(\mu_{\theta},g_{\theta}):\theta\in\Theta\}$
parameterized families of distributions and Markov transition densities.
The likelihood function $L$ is then $L(\theta):=\prod_{p=1}^{n}\nu_{\theta}(y_{p})$
where 
\[
\nu_{\theta}(y):=\mathbb{E}\left[g_{\theta}(X,y)\right]=\int_{E}g_{\theta}(x,y)\mu_{\theta}({\rm d}x),\qquad y\in\mathsf{Y},
\]
and clearly $L(\theta)$ is of the form (\ref{eq:gamma}) for any
$\theta\in\Theta$.
\end{rem}
The focus of this paper is approximations of $\gamma$ using $N$
independent and $\mu$-distributed random variables $\zeta:=(\zeta_{1},\ldots,\zeta_{N})$,
which we will refer to throughout as particles. A straightforward
approach to constructing an unbiased approximation of $\gamma$ is
to approximate each expectation $\mathbb{E}\left[G_{p}(X)\right]=\mu(G_{p})$
independently using $M$ particles, where $N=Mn$. To be precise,
we define 
\begin{equation}
\gamma_{{\rm simple}}^{N}:=\prod_{p=1}^{n}\frac{1}{M}\sum_{i=1}^{M}G_{p}(\zeta_{(p-1)M+i}).\label{eq:simple_approx}
\end{equation}
We will often refer to the second moment condition
\begin{equation}
\max_{p\in\{1,\ldots,n\}}\mu(G_{p}^{2})<\infty,\label{eq:second_moment_condition}
\end{equation}
and in order to simplify the presentation we define the normalized
sequence of potential functions $\bar{G}_{1},\ldots,\bar{G}_{n}$
via $\bar{G}_{p}:=G_{p}/\mu(G_{p})$. The following lack-of-bias,
consistency, second moment and variance properties are easily established.
We denote convergence in probability by $\overset{P}{\rightarrow}$.
\begin{prop}
\label{prop:simple_analysis}We have $\mathbb{E}\left[\gamma_{{\rm simple}}^{N}\right]=\gamma$,
$\gamma_{{\rm simple}}^{N}\overset{P}{\rightarrow}\gamma$ as $N\rightarrow\infty$
and 
\begin{equation}
\mathbb{E}\left[\left(\gamma_{{\rm simple}}^{N}/\gamma\right)^{2}\right]=\prod_{p=1}^{n}\left\{ 1+\left[\mu(\bar{G}_{p}^{2})-1\right]/M\right\} ,\label{eq:relvar_simple_approx}
\end{equation}
so ${\rm var}(\gamma_{{\rm simple}}^{N})$ is finite and converges
to $0$ as $M\rightarrow\infty$ if and only if (\ref{eq:second_moment_condition})
holds.
\end{prop}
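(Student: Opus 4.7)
The plan is to exploit the block-independence structure of $\gamma_{{\rm simple}}^{N}$. Setting $S_{p}:=M^{-1}\sum_{i=1}^{M}G_{p}(\zeta_{(p-1)M+i})$, the random variables $S_{1},\ldots,S_{n}$ are mutually independent because they depend on disjoint subsets of the i.i.d.\ particles $\zeta_{1},\ldots,\zeta_{N}$. All four claims then reduce to a single-block computation combined with this independence.

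For unbiasedness, $S_{p}$ is the sample mean of $M$ i.i.d.\ copies of $G_{p}(X)$ with $X\sim\mu$, so $\mathbb{E}[S_{p}]=\mu(G_{p})$, and independence gives $\mathbb{E}[\gamma_{{\rm simple}}^{N}]=\prod_{p=1}^{n}\mu(G_{p})=\gamma$. For consistency, since $G_{p}\in L^{1}(\mu)$ the weak law of large numbers yields $S_{p}\overset{P}{\rightarrow}\mu(G_{p})$ as $M\rightarrow\infty$, and the continuous mapping theorem applied to the product map on $\mathbb{R}^{n}$ (with $n$ fixed) delivers $\gamma_{{\rm simple}}^{N}\overset{P}{\rightarrow}\gamma$ as $N\rightarrow\infty$.

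For the second moment formula I would expand
\[
\mathbb{E}[S_{p}^{2}]=M^{-2}\bigl[M\mu(G_{p}^{2})+M(M-1)\mu(G_{p})^{2}\bigr]=M^{-1}\mu(G_{p}^{2})+(1-M^{-1})\mu(G_{p})^{2},
\]
using that the $M$ particles within one block are i.i.d.\ copies of $X\sim\mu$. Assuming $\mu(G_{p})>0$ and dividing by $\mu(G_{p})^{2}$ gives $\mathbb{E}[(S_{p}/\mu(G_{p}))^{2}]=1+[\mu(\bar{G}_{p}^{2})-1]/M$, and the product over $p$ (again by independence) yields~(\ref{eq:relvar_simple_approx}).

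Finally, the variance statement follows from the identity ${\rm var}(\gamma_{{\rm simple}}^{N})=\gamma^{2}\{\mathbb{E}[(\gamma_{{\rm simple}}^{N}/\gamma)^{2}]-1\}$ combined with (\ref{eq:relvar_simple_approx}): each factor in the product on the right-hand side of (\ref{eq:relvar_simple_approx}) is finite iff $\mu(G_{p}^{2})<\infty$ and tends to $1$ as $M\rightarrow\infty$, so the finite product is finite and tends to $1$ iff (\ref{eq:second_moment_condition}) holds. The only bookkeeping point that needs care is the degenerate case $\mu(G_{p})=0$ for some $p$: there $G_{p}=0$ $\mu$-a.s., so $\gamma_{{\rm simple}}^{N}\equiv 0=\gamma$ and all four claims hold trivially. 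This edge case is the main thing to watch; everything else is a routine moment calculation made transparent by the block-independence.
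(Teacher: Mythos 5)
Your proof is correct and follows essentially the same route as the paper's: the same block decomposition into independent sample means, the weak law of large numbers plus a product-continuity argument (the paper cites Slutsky) for consistency, and the identical single-block second-moment computation. The only addition is your explicit treatment of the degenerate case $\mu(G_{p})=0$, which the paper implicitly excludes since $\bar{G}_{p}=G_{p}/\mu(G_{p})$ must be well-defined for (\ref{eq:relvar_simple_approx}) to make sense.
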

The approximation $\gamma_{{\rm simple}}^{N}$ is straightforward
to compute and analyze since it is a product of averages of independent
random variables. However, each particle is only used to approximate
one of the expectations in the product, and in situations where these
particles are expensive to obtain this may be wasteful. An alternative
approach is to use
\begin{equation}
\gamma_{{\rm biased}}^{N}:=\prod_{p=1}^{n}\frac{1}{N}\sum_{i=1}^{N}G_{p}(\zeta_{i}),\label{eq:biased_approx}
\end{equation}
which is consistent and not wasteful, but also not unbiased in general.
\begin{prop}
\label{prop:biased_analysis}We have $\gamma_{{\rm biased}}^{N}\overset{P}{\rightarrow}\gamma$
as $N\rightarrow\infty$ but $\mathbb{E}\left[\gamma_{{\rm biased}}^{N}\right]\neq\gamma$
in general.
\end{prop}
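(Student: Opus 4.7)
The proposition has two parts that I would handle separately.

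For the consistency claim, the plan is to apply the weak law of large numbers factor-by-factor and then combine via the continuous mapping theorem. Since $\zeta_1,\ldots,\zeta_N$ are i.i.d.\ from $\mu$ and each $G_p\in L^1(\mu)$, for every fixed $p\in\{1,\ldots,n\}$ we have $N^{-1}\sum_{i=1}^N G_p(\zeta_i) \overset{P}{\rightarrow} \mu(G_p)$. Because $n$ is fixed and finite, joint convergence in probability of the $n$ sample averages follows by a union bound, and then the map $(a_1,\ldots,a_n)\mapsto\prod_p a_p$ is continuous, so the continuous mapping theorem delivers $\gamma_{\rm biased}^N \overset{P}{\rightarrow} \prod_{p=1}^n \mu(G_p) = \gamma$.

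For the bias claim, the plan is to exhibit an explicit counterexample rather than attempt a general formula for $\mathbb{E}[\gamma_{\rm biased}^N]$. I would take the simplest non-trivial setting: $n=2$ and $G_1=G_2=G$ with $G\in L^2(\mu)$ non-constant (e.g.\ $\mu$ the uniform distribution on $\{0,1\}$ and $G$ the identity). Then
\[
\mathbb{E}\!\left[\gamma_{\rm biased}^N\right] \;=\; \mathbb{E}\!\left[\left(\tfrac{1}{N}\sum_{i=1}^N G(\zeta_i)\right)^{\!2}\right] \;=\; \tfrac{1}{N}\mu(G^2)+\tfrac{N-1}{N}\mu(G)^2,
\]
by expanding the square and using independence to separate the $N$ diagonal and $N(N-1)$ off-diagonal terms. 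Since $\gamma=\mu(G)^2$, the bias equals $N^{-1}[\mu(G^2)-\mu(G)^2]=N^{-1}\mathrm{var}(G(X))$, which is strictly positive whenever $G$ is not $\mu$-a.s.\ constant, proving $\mathbb{E}[\gamma_{\rm biased}^N]\neq\gamma$ in this instance and hence in general.

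There is essentially no obstacle here: the consistency step is a routine WLLN + continuous mapping argument, and the bias step requires only a direct second-moment calculation on a two-point toy example. The main thing to be careful about is stating the counterexample explicitly enough (picking $n=2$ and $G_1=G_2$) that the cross-term computation becomes a one-line expansion, so that the qualifier ``in general'' is unambiguously justified.
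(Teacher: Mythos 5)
Your proposal is correct. The consistency half is the same argument the paper uses: the paper applies the weak law of large numbers to each factor $Z_{p,N}=N^{-1}\sum_{i=1}^{N}G_{p}(\zeta_{i})$ and combines them with Slutsky's Theorem, which is interchangeable with your union-bound-plus-continuous-mapping phrasing. For the bias half you diverge slightly, and in a way that is arguably more complete: the paper rewrites $\mathbb{E}[\gamma_{{\rm biased}}^{N}]$ as $\mathbb{E}[\prod_{p=1}^{n}G_{p}(\zeta_{K_{p}})]$ with $K_{1},\ldots,K_{n}$ i.i.d.\ ${\rm Uniform}(\left\llbracket 1,N\right\rrbracket )$ and simply observes that the collision events $K_{p}=K_{q}$ prevent this from factorizing into $\gamma$ in general, without producing a witness. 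You instead fix $n=2$, $G_{1}=G_{2}=G$, and compute the bias explicitly as $N^{-1}\,{\rm var}(G(X))$, which is strictly positive for any non-constant $G$; your calculation is exactly the $n=2$ instance of the paper's identity (the diagonal term $K_{1}=K_{2}$ contributes $\mu(G^{2})/N$ in place of $\mu(G)^{2}/N$). The paper's formulation buys generality and a clean structural explanation of where the bias comes from; your concrete counterexample buys an airtight justification of the qualifier ``in general.'' Either is acceptable, and there is no gap in your argument.
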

We propose in the sequel an approximation $\gamma_{{\rm recycle}}^{N}$
that is unbiased like $\gamma_{{\rm simple}}^{N}$ but which is closer
to $\gamma_{{\rm biased}}^{N}$ in that it uses most of the particles
to approximate each expectation in the product while remaining computationally
tractable. The approximation $\gamma_{{\rm recycle}}^{N}$ can be
viewed as an unbiased approximation of $\gamma_{{\rm perm}}^{N}$,
the rescaled permanent of a particular rectangular matrix of random
variables, which is never worse in terms of variance than $\gamma_{{\rm simple}}^{N}$
but is very computationally costly to compute in general. The approximation
$\gamma_{{\rm recycle}}^{N}$ is an extension of the importance sampling
approximation of the permanent of a square matrix proposed by \citet{Kuznetsov1996}
to the case of rectangular matrices. While it is possible for $\gamma_{{\rm recycle}}^{N}$
to have a higher variance than $\gamma_{{\rm simple}}^{N}$, we show
that in many statistical scenarios it requires far fewer particles
to obtain a given variance, e.g.\ in the latent variable setting
described above. In particular, under weak assumptions, one needs
to take $N=\mathcal{O}(n)$ to control the relative variance of $\gamma_{{\rm recycle}}^{N}$
but one requires $N=\mathcal{O}(n^{2})$ to control the relative variance
of $\gamma_{{\rm simple}}^{N}$.

The paper is structured as follows. In Section~\ref{sec:The-associated-permanent}
we draw the connection with matrix permanents, define the proposed
estimator $\gamma_{{\rm recycle}}^{N}$, and analyze basic properties
such as unbiasedness and consistency. In Section~\ref{sec:Scaling-of-the}
we compare the behavior of $\gamma_{{\rm simple}}^{N}$ and $\gamma_{{\rm recycle}}^{N}$
as $n$ increases under various assumptions on the potential functions,
while in Section~\ref{sec:Latent-variable-models} we consider latent
variable models. Finally, Section~\ref{sec:Applications} provides
simulation studies showing the effect of using $\gamma_{{\rm recycle}}^{N}$
rather than $\gamma_{{\rm simple}}^{N}$ in pseudo-marginal Markov
chain Monte Carlo methods for estimating the parameters of a g-and-k
model, commonly used as a test application for approximate Bayesian
computation methodology, and of a Poisson-Beta model for single-cell
gene expression. In both cases we observe computational speedups of
the order of the number of data points $n$. Section~\ref{sec:Discussion}
provides a discussion and potential future works. All proofs are housed
in the appendix or the supplementary materials.

\section{The associated permanent and its approximation\label{sec:The-associated-permanent}}

For integers $i\leq j$ we denote $\left\llbracket i,j\right\rrbracket =\{i,\ldots,j\}$.
We adopt the convention that $\prod_{j\in\emptyset}G_{j}=1$, and
will occasionally use the notation $x_{p:q}=(x_{p},\ldots,x_{q})$
for $p,q\in\mathbb{N}$ with $p\leq q$. An alternative approximation
of $\gamma$ on the basis of the particles $\zeta=(\zeta_{1},\ldots,\zeta_{N})$
is obtained by first rewriting $\gamma$ in (\ref{eq:gamma}) as,
with $X_{1},\ldots,X_{p}$ independent $\mu$-distributed random variables,
\[
\gamma=\prod_{p=1}^{n}\mathbb{E}\left[G_{p}(X)\right]=\mathbb{E}\left[\prod_{p=1}^{n}G_{p}(X_{p})\right].
\]
Indeed, $\gamma_{{\rm biased}}^{N}$ is a V-statistic of order $n$
for $\gamma$, and the corresponding U-statistic for $\gamma$ is

\begin{equation}
\gamma_{{\rm perm}}^{N}:=\sum_{k\in P(N,n)}\left|P(N,n)\right|^{-1}\prod_{p=1}^{n}G_{p}(\zeta_{k_{p}}),\label{eq:perm_approx}
\end{equation}
where $P(N,n)=\{k\in\left\llbracket 1,N\right\rrbracket ^{n}:k_{i}=k_{j}\iff i=j\}$
is the set of $n$-permutations of $N$, whose cardinality is $\left|P(N,n)\right|=N!/(N-n)!$.
We observe that $\gamma_{{\rm perm}}^{N}$ is exactly $\left|P(N,n)\right|^{-1}$
times the permanent of the rectangular matrix $A$ \citep[see, e.g.,][p. 25]{ryser1963combinatorial}
with entries $A_{ij}=G_{i}(\zeta_{j})$ since then
\[
{\rm perm}(A)=\sum_{k\in P(N,n)}\prod_{p=1}^{n}A_{p,k_{p}}=\sum_{k\in P(N,n)}\prod_{p=1}^{n}G_{p}(\zeta_{k_{p}}).
\]
The approximation $\gamma_{{\rm perm}}^{N}$ is unbiased and consistent
since it is a U-statistic and moreover it is less variable than $\gamma_{{\rm simple}}^{N}$
in terms of the convex order \citep[see, e.g.,][Section 3.A]{shaked2007stochastic},
defined by $X\preceq_{{\rm cx}}Y$ if $\mathbb{E}\left[\phi(X)\right]\leq\mathbb{E}\left[\phi(Y)\right]$
for all convex functions $\phi:\mathbb{R}\rightarrow\mathbb{R}$ such
that the expectations are well-defined. Since $x\mapsto x$ and $x\mapsto-x$
are convex functions, convex-ordered random variables necessarily
have the same expectation, and since $x\mapsto x^{2}$ is convex,
$X\preceq_{{\rm cx}}Y$ implies ${\rm var}(X)\leq{\rm var}(Y)$. Convex-ordered
families of random variables also allow one to order the asymptotic
variances of associated pseudo-marginal Markov chains \citep[Theorem~10]{andrieu2016establishing}.
In order to express the second moments of $\gamma_{{\rm perm}}^{N}/\gamma$
and $\gamma_{{\rm recycle}}^{N}/\gamma$, we define the function $\psi_{N}:\left\llbracket 1,N\right\rrbracket ^{n}\rightarrow\mathbb{R}_{+}$
by
\begin{equation}
\psi_{N}(r):=\left\{ \prod_{p=1}^{n}\mu(\bar{G}_{p}\prod_{j:r_{j}=p}\bar{G}_{j})\right\} \left\{ \prod_{i=n+1}^{N}\mu(\prod_{j:r_{j}=i}\bar{G}_{j})\right\} ,\qquad r\in\left\llbracket 1,N\right\rrbracket ^{n},\label{eq:psiN_def}
\end{equation}
where $fg$ denotes pointwise product so that $\mu(fg)=\int f(x)g(x)\mu({\rm d}x)$.
We now state basic properties of $\gamma_{{\rm perm}}^{N}$, which
can be compared with Proposition~\ref{prop:simple_analysis}.
\begin{thm}
\label{thm:permanentisbetter}The following hold:
\begin{enumerate}
\item $\mathbb{E}[\gamma_{{\rm perm}}^{N}]=\gamma$ and $\gamma_{{\rm perm}}^{N}\preceq_{{\rm cx}}\gamma_{{\rm simple}}^{N}$.
\item $\gamma_{{\rm perm}}^{N}\overset{P}{\rightarrow}\gamma$ as $N\rightarrow\infty$.
\item The second moment of $\gamma_{{\rm perm}}^{N}/\gamma$ is, with $K\sim{\rm Uniform}(P(N,n))$,
\[
\mathbb{E}[(\gamma_{{\rm perm}}^{N}/\gamma)^{2}]=\mathbb{E}[\prod_{p=1}^{n}\bar{G}_{p}(\zeta_{p})\bar{G}_{p}(\zeta_{K_{p}})]=\mathbb{E}\left[\psi_{N}(K)\right]=\mathbb{E}[\prod_{p=1}^{n}\mu\big(\bar{G}_{p}\prod_{j:K_{j}=p}\bar{G}_{j}\big)].
\]
\item ${\rm var}(\gamma_{{\rm perm}}^{N})$ is finite and ${\rm var}(\gamma_{{\rm perm}}^{N})\rightarrow0$
as $N\rightarrow\infty$ if and only if (\ref{eq:second_moment_condition})
holds.
\end{enumerate}
\end{thm}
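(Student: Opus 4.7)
For item 1, unbiasedness of $\gamma_{\rm perm}^{N}$ is immediate from the fact that for any fixed $k \in P(N,n)$ the entries $(\zeta_{k_{1}},\ldots,\zeta_{k_{n}})$ are $n$ independent $\mu$-distributed variables, so $\mathbb{E}[\prod_{p} G_{p}(\zeta_{k_{p}})]=\gamma$, and averaging over $P(N,n)$ preserves this. To obtain the convex order $\gamma_{\rm perm}^{N}\preceq_{\rm cx}\gamma_{\rm simple}^{N}$, I would expand the product in (\ref{eq:simple_approx}) as
\[
\gamma_{\rm simple}^{N} \;=\; \frac{1}{M^{n}}\sum_{k\in S}\prod_{p=1}^{n}G_{p}(\zeta_{k_{p}}),
\]
where $S\subset P(N,n)$ is the Cartesian product of the disjoint blocks $\left\llbracket (p-1)M+1,\,pM\right\rrbracket$, and observe that because the particles are i.i.d., the symmetrization of $\gamma_{\rm simple}^{N}$ over all $N!$ relabelings of particle indices coincides with $\gamma_{\rm perm}^{N}$; this symmetrization is exactly the conditional expectation of $\gamma_{\rm simple}^{N}$ given the unordered multiset of particles, so the convex order follows by conditional Jensen's inequality.

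For item 3, I would expand $(\gamma_{\rm perm}^{N}/\gamma)^{2}$ as a double sum over $k,k'\in P(N,n)$ of $\prod_{p}\bar{G}_{p}(\zeta_{k_{p}})\bar{G}_{p}(\zeta_{k'_{p}})$, then use the exchangeability of $\zeta$ to fix $k=(1,\ldots,n)$ at the cost of averaging $k'$ over a $K\sim\mathrm{Uniform}(P(N,n))$ that is independent of $\zeta$, which yields the first equality. Conditioning on $K$ and regrouping the factors by which particle $\zeta_{i}$ they contain --- $\zeta_{i}$ for $i\leq n$ picking up the ``fixed'' factor $\bar{G}_{i}$ together with any $\bar{G}_{p}$ having $K_{p}=i$, and $\zeta_{i}$ for $i>n$ receiving only the latter --- gives $\psi_{N}(K)$ by independence of the particles.

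Items 2 and 4 can then be assembled. For the finite-variance half of item 4, the convex order applied to $\phi(x)=x^{2}$ gives $\mathbb{E}[(\gamma_{\rm perm}^{N}/\gamma)^{2}]\leq\mathbb{E}[(\gamma_{\rm simple}^{N}/\gamma)^{2}]$ whenever $N$ is a multiple of $n$ (and for other $N$ the same bound applies after discarding excess particles), which is finite under (\ref{eq:second_moment_condition}) by Proposition~\ref{prop:simple_analysis}; that proposition also shows the right-hand side tends to $1$ as $M\to\infty$, so ${\rm var}(\gamma_{\rm perm}^{N})\to 0$. Conversely, if $\mu(G_{p}^{2})=\infty$ for some $p$, the identity tuple $(1,\ldots,n)$ receives probability $1/|P(N,n)|>0$ under $K$ and contributes $\prod_{p}\mu(\bar{G}_{p}^{2})=\infty$ to $\mathbb{E}[\psi_{N}(K)]$, so ${\rm var}(\gamma_{\rm perm}^{N})=\infty$. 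Consistency (item 2) then follows via Chebyshev under (\ref{eq:second_moment_condition}), and in general from Hoeffding's strong law for U-statistics applied to the (symmetrized) kernel $\prod_{p}G_{p}$, which requires only $\gamma<\infty$ (automatic since $G_{p}\in L^{1}(\mu)$).

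The main obstacle I expect is the conditional-expectation step in item 1: one must verify that each $k\in P(N,n)$ appears with multiplicity $(N-n)!$ in the symmetrization of $\gamma_{\rm simple}^{N}$ over $S_{N}$, so that this symmetrization collapses exactly to $\gamma_{\rm perm}^{N}$. Once this identification is in hand, the remaining pieces --- the second-moment bookkeeping in item 3 and the comparisons used for items 2 and 4 --- are mechanical.
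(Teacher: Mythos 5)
Your proposal is correct and follows the paper's architecture almost exactly: unbiasedness via the U-statistic property, the convex order by exhibiting $\gamma_{{\rm perm}}^{N}$ as a conditional expectation of a copy of $\gamma_{{\rm simple}}^{N}$ followed by conditional Jensen (the paper averages over the uniform partition of $\left\llbracket 1,N\right\rrbracket$ into blocks of size $M$, you average over all $N!$ relabelings given the unordered multiset --- these are the same device, and your multiplicity count of $(N-n)!$ per element of $P(N,n)$ is the right verification), the second moment via exchangeability and regrouping by particle (the paper's Lemma~\ref{lem:psi_expression}), consistency via Hoeffding's strong law for U-statistics, and the ``only if'' of item~4 via the identity tuple $(1,\ldots,n)$ contributing $\prod_{p}\mu(\bar{G}_{p}^{2})=\infty$ with positive probability.

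The one place you genuinely diverge is the ``if'' half of item~4. The paper argues directly from $\mathbb{E}[\psi_{N}(K)]$: it bounds $\max_{k}\psi_{N}(k)$ by Cauchy--Schwarz under (\ref{eq:second_moment_condition}) and then conditions on the event $A=\{\min_{p}K_{p}>n\}$, whose probability tends to $1$ and on which $\psi_{N}(K)=1$. You instead deduce $\mathbb{E}[(\gamma_{{\rm perm}}^{N}/\gamma)^{2}]\leq\mathbb{E}[(\gamma_{{\rm simple}}^{N}/\gamma)^{2}]$ from the convex order of item~1 and invoke Proposition~\ref{prop:simple_analysis}. This is a legitimate and arguably cleaner shortcut; since both estimators are non-negative, $\phi(x)=x^{2}$ causes no integrability issue, and your remark about handling $N$ not a multiple of $n$ by symmetrizing $\gamma_{{\rm simple}}^{\lfloor N/n\rfloor n}$ over all $N$ particles (so that $\gamma_{{\rm perm}}^{N}\preceq_{{\rm cx}}\gamma_{{\rm simple}}^{\lfloor N/n\rfloor n}$) closes the only gap in that route. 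What the paper's longer computation buys is a template that is reused verbatim for $\gamma_{{\rm recycle}}^{N}$ in Lemma~\ref{lem:variance_characterization}, where no such comparison with $\gamma_{{\rm simple}}^{N}$ is available; your shortcut is specific to $\gamma_{{\rm perm}}^{N}$.
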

This suggests that $\gamma_{{\rm perm}}^{N}$ is a superior approximation
of $\gamma$ in comparison to $\gamma_{{\rm simple}}^{N}$. However,
computing $\gamma_{{\rm perm}}^{N}$ is equivalent to computing the
permanent of a rectangular matrix, which has no known polynomial-time
algorithm. In fact, computing the permanent of a square matrix is
\#P-hard \citep{valiant1979complexity}. Using an extension of the
importance sampling estimator of the permanent of a square matrix
due to \citet{Kuznetsov1996}, we define the following unbiased approximation
of $\gamma_{{\rm perm}}^{N}$ and hence $\gamma$,
\begin{equation}
\gamma_{{\rm recycle}}^{N}:=\prod_{p=1}^{n}\frac{1}{N-p+1}\sum_{j=1}^{N}G_{p}(\zeta_{j})\mathbb{I}\left(j\notin\{K_{1},\ldots,K_{p-1}\}\right),\label{eq:perm_approx_tilde}
\end{equation}
where $K:=(K_{1},\ldots,K_{n})$ is a $\left\llbracket 1,N\right\rrbracket ^{n}$-valued
random variable whose distribution given $\zeta$ is defined by the
sequence of conditional probabilities
\begin{equation}
\mathbb{P}\left[K_{p}=i\mid\zeta,K_{1},\ldots,K_{p-1}\right]\propto G_{p}(\zeta_{i})\mathbb{I}\left(i\notin\{K_{1},\ldots,K_{p-1}\}\right).\label{eq:Kp_conditionals_approx}
\end{equation}
In (\ref{eq:Kp_conditionals_approx}) we take $G_{p}(\zeta_{i})/\sum_{j=1}^{N}G_{p}(\zeta_{j})\mathbb{I}\left(j\notin\{K_{1},\ldots,K_{p-1}\}\right)$
to be $1$ when the denominator is equal to $0$. That is, when the
denominator is $0$, then $K_{p}\mid(\zeta,K_{1},\ldots,K_{p-1})\sim{\rm Uniform}(\left\llbracket 1,N\right\rrbracket \setminus\{K_{1},\ldots,K_{p-1}\})$.
The choice of the conditional distribution of $K_{p}$ when $\sum_{j=1}^{N}G_{p}(\zeta_{j})\mathbb{I}\left(j\notin\{K_{1},\ldots,K_{p-1}\}\right)=0$
is in some sense arbitrary, as in any case $\gamma_{{\rm recycle}}^{N}=0$
whenever this happens. We now state basic properties of $\gamma_{{\rm recycle}}^{N}$,
which can be compared with Theorem~\ref{thm:permanentisbetter}.

\begin{algorithm}

\caption{Computing $\gamma_{{\rm recycle}}^{N}$\label{alg:recycle}}
\begin{enumerate}
\item Sample $\zeta_{1},\ldots,\zeta_{N}\sim\mu$ independently, and set
$Z_{0}\leftarrow1$.
\item For $p=1,\ldots,n$:
\begin{enumerate}
\item Set $Z_{p}\leftarrow Z_{p-1}\sum_{j=1}^{N}G_{p}(\zeta_{j})\mathbb{I}\left(j\notin\{K_{1},\ldots,K_{p-1}\}\right)/(N-p+1)$.
\item Sample $K_{p}\mid(K_{1},\ldots,K_{p-1})$ according to (\ref{eq:Kp_conditionals_approx}).
\end{enumerate}
\item Set $\gamma_{{\rm recycle}}^{N}\leftarrow Z_{n}$.
\end{enumerate}
\end{algorithm}
\begin{thm}
\label{thm:approx_consistency}The following hold:
\begin{enumerate}
\item $\mathbb{E}\left[\gamma_{{\rm recycle}}^{N}\mid\zeta\right]=\gamma_{{\rm perm}}^{N}$,
$\mathbb{E}\left[\gamma_{{\rm recycle}}^{N}\right]=\gamma$ and $\gamma_{{\rm perm}}^{N}\preceq_{{\rm cx}}\gamma_{{\rm recycle}}^{N}$.
\item $\gamma_{{\rm recycle}}^{N}\overset{P}{\rightarrow}\gamma$ as $N\rightarrow\infty$.
\item Let $S=(S_{1},\ldots,S_{n})$ be a vector of independent random variables
with $S_{p}\sim{\rm Uniform}(\left\llbracket p,N\right\rrbracket )$
for $p\in\left\llbracket 1,n\right\rrbracket $. The second moment
of $\gamma_{{\rm recycle}}^{N}/\gamma$ is
\[
\mathbb{E}\left[(\gamma_{{\rm recycle}}^{N}/\gamma)^{2}\right]=\mathbb{E}[\prod_{p=1}^{n}\bar{G}_{p}(\zeta_{p})\bar{G}_{p}(\zeta_{S_{p}})]=\mathbb{E}\left[\psi_{N}(S)\right].
\]
\item ${\rm var}(\gamma_{{\rm recycle}}^{N})$ is finite and ${\rm var}(\gamma_{{\rm recycle}}^{N})\rightarrow0$
as $N\rightarrow\infty$ if and only if
\begin{equation}
\max_{p\in\left\llbracket 1,n\right\rrbracket ,\:B\subseteq\left\llbracket 1,p\right\rrbracket }\mu(G_{p}\prod_{j\in B}G_{j})<\infty.\label{eq:var.finite.cond}
\end{equation}
\end{enumerate}
\end{thm}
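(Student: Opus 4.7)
My approach handles all four items via a single telescoping identity underlying the Kuznetsov-type importance sampling construction. Writing $w_p(\zeta,k_{1:p-1}) := \sum_{j=1}^{N}G_p(\zeta_j)\mathbb{I}(j\notin k_{1:p-1})$, one has $\gamma_{\rm recycle}^N = \prod_{p=1}^n w_p/(N-p+1)$ and, from (\ref{eq:Kp_conditionals_approx}), $\mathbb{P}[K=k\mid\zeta] = \prod_{p=1}^n G_p(\zeta_{k_p})/w_p(\zeta,k_{1:p-1})$ for $k\in P(N,n)$; multiplying these gives the cancellation
$$\gamma_{\rm recycle}^N(\zeta,k)\,\mathbb{P}[K=k\mid\zeta] = \prod_{p=1}^n \frac{G_p(\zeta_{k_p})}{N-p+1}.$$
Summing over $k\in P(N,n)$ and recognising $\prod_p(N-p+1)=|P(N,n)|$ yields $\mathbb{E}[\gamma_{\rm recycle}^N\mid\zeta] = \gamma_{\rm perm}^N$; unbiasedness then follows from Theorem~\ref{thm:permanentisbetter} via the tower property, and $\gamma_{\rm perm}^N\preceq_{\rm cx}\gamma_{\rm recycle}^N$ is the standard conditional-Jensen inequality.

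\textbf{Consistency and second-moment identity.} For part 2 I would decompose each factor as $\frac{N}{N-p+1}\big(\frac{1}{N}\sum_{j=1}^N G_p(\zeta_j) - \frac{1}{N}\sum_{q<p} G_p(\zeta_{K_q})\big)$; the SLLN handles the first sum, while the standard fact that $N^{-1}\max_{j\le N}|X_j|\to 0$ a.s.\ for i.i.d.\ $L^1$ variables makes the second $o_P(1)$, and continuous mapping concludes. For part 3, multiplying the telescoping identity by $\gamma_{\rm recycle}^N(\zeta,k)$ itself gives
$$\mathbb{E}\big[(\gamma_{\rm recycle}^N)^2\mid\zeta\big] = \sum_{k\in P(N,n)}\prod_{p=1}^n\frac{G_p(\zeta_{k_p})\,w_p(\zeta,k_{1:p-1})}{(N-p+1)^2},$$
which, on expanding $w_p$ as a sum over $j_p\notin k_{1:p-1}$ and taking expectation in $\zeta$, becomes $\mathbb{E}[\prod_p G_p(\zeta_{K_p})G_p(\zeta_{J_p})]$ with $K$ uniform on $P(N,n)$ and, conditional on $K$, the $J_p$ independent with $J_p\sim{\rm Uniform}(\{1,\ldots,N\}\setminus K_{1:p-1})$. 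Setting $\sigma := K^{-1}$ and using that $\zeta$ is i.i.d.\ and independent of $K$, the tuples $(\zeta_{K_p},\zeta_{J_p})_{p=1}^n$ and $(\zeta_p,\zeta_{\sigma(J_p)})_{p=1}^n$ have the same joint distribution; since $\sigma(J_p)$ is uniform on $\{p,\ldots,N\}$ independently across $p$, this yields the first claimed equality with $S_p := \sigma(J_p)$. Conditioning on $S$ and grouping factors by which $\zeta_i$ they involve then recovers $\psi_N(S)$ via definition (\ref{eq:psiN_def}).

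\textbf{Variance characterization and main obstacle.} Since $S$ takes finitely many values, $\mathbb{E}[\psi_N(S)]<\infty$ iff $\psi_N(r)<\infty$ on the support of $S$. The $p\le n$ factors in (\ref{eq:psiN_def}) are $\mu(\bar G_p\prod_{j\in B}\bar G_j)$ with $B=\{j:r_j=p\}\subseteq\{1,\ldots,p\}$ (using $S_j\ge j$), and every such $B$ occurs with positive probability; the factors for $i>n$ contribute subsets of $\{1,\ldots,n\}$ that are already subsumed by taking $p=\max B$. This gives (\ref{eq:var.finite.cond}) as the precise iff condition. For the vanishing half, under (\ref{eq:var.finite.cond}) the event that $S_1,\ldots,S_n$ are distinct and all lie in $\{n+1,\ldots,N\}$ has probability tending to $1$ and on it $\psi_N(S)=1$; since $\psi_N$ takes only finitely many values (all finite under the assumption), dominated convergence delivers $\mathbb{E}[\psi_N(S)]\to 1$, i.e.\ ${\rm var}(\gamma_{\rm recycle}^N)\to 0$. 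I expect the main obstacle to be the symmetrization in part 3 passing from the coupled indices $(K_p,J_p)$ to the decoupled $(p,S_p)$, since one must verify that the pattern of equalities among all involved indices, which drives the $\zeta$-expectation, is preserved in joint law.
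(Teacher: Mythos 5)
Your proposal is correct and follows essentially the same route as the paper's proof: the same cancellation of the sequential sampling weights against the conditional law of $K$ for part 1, the same decomposition with a maximum bound for part 2, the same conditional second-moment computation followed by relabelling via exchangeability of $\zeta$ (your $\sigma=K^{-1}$ step, which does go through since the collision pattern of indices is preserved) for part 3, and the same support/event-splitting analysis of $\psi_{N}(S)$ for part 4. The only detail worth adding is in the ``only if'' half of part 4: having produced an $s$ in the support of $S$ whose factor $\mu(\bar{G}_{q}\prod_{j\in B}\bar{G}_{j})$ is infinite, you must also note that the remaining factors of $\psi_{N}(s)$ are non-zero (for the natural witness they are all equal to $1$ or to some $\mu(\bar{G}_{p}^{2})\geq1$), so that $\psi_{N}(s)=\infty$ rather than an indeterminate product.
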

\begin{cor}
\label{cor:moments_for_consistency}If $\max_{p\in\left\llbracket 1,n\right\rrbracket }\mu(G_{p}^{n+1})<\infty$
then ${\rm var}(\gamma_{{\rm recycle}}^{N})\rightarrow0$ as $N\rightarrow\infty$.
\end{cor}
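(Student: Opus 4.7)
My plan is to deduce the corollary directly from Theorem~\ref{thm:approx_consistency}(4) by verifying the finiteness condition~(\ref{eq:var.finite.cond}) via the generalized Hölder inequality. Since ${\rm var}(\gamma_{{\rm recycle}}^{N})\to 0$ is equivalent to
$$\max_{p\in\llbracket 1,n\rrbracket,\: B\subseteq\llbracket 1,p\rrbracket}\mu\Bigl(G_{p}\prod_{j\in B}G_{j}\Bigr)<\infty,$$
it suffices to bound each such integral by a finite quantity depending only on $\max_{p}\mu(G_{p}^{n+1})$.

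First, I would fix an arbitrary $p\in\llbracket 1,n\rrbracket$ and $B\subseteq\llbracket 1,p\rrbracket$, and note that the integrand $G_{p}\prod_{j\in B}G_{j}$ is a product of $1+|B|$ nonnegative factors, each drawn from $\{G_{1},\ldots,G_{n}\}$ (allowing a possible repetition of $G_{p}$ when $p\in B$). The key counting observation is that $1+|B|\leq 1+p\leq n+1$, so the product involves at most $n+1$ factors.

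Next, I would apply the generalized Hölder inequality with common exponent $n+1$. Since $\mu$ is a probability measure and $(1+|B|)/(n+1)\leq 1$, we may pad with constant-$1$ factors to make the reciprocals of the exponents sum to exactly $1$; the padded factors contribute $\mu(1)^{1/(n+1)}=1$ to the resulting bound. This yields
$$\mu\Bigl(G_{p}\prod_{j\in B}G_{j}\Bigr)\leq \mu(G_{p}^{n+1})^{1/(n+1)}\prod_{j\in B}\mu(G_{j}^{n+1})^{1/(n+1)}\leq \max\Bigl\{1,\;\max_{q}\mu(G_{q}^{n+1})\Bigr\}.$$
The hypothesis $\max_{q}\mu(G_{q}^{n+1})<\infty$ makes the right-hand side finite, and since there are only finitely many pairs $(p,B)$ the supremum in~(\ref{eq:var.finite.cond}) is also finite. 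Invoking Theorem~\ref{thm:approx_consistency}(4) completes the argument.

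There is no substantial obstacle here: the heavy lifting is contained in Theorem~\ref{thm:approx_consistency}(4), and the corollary is a routine moment bound. The only mild subtlety worth flagging explicitly is that the exponent $n+1$ in the hypothesis is precisely the largest number of factors $G_{j}$ (counted with multiplicity) that can appear in any of the integrands arising in~(\ref{eq:var.finite.cond}); this is what makes $n+1$ the natural common Hölder exponent.
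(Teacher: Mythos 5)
Your proposal is correct and follows essentially the same route as the paper: both reduce the claim to verifying condition~(\ref{eq:var.finite.cond}) via Theorem~\ref{thm:approx_consistency} and then bound each term $\mu(G_{p}\prod_{j\in B}G_{j})$ using the generalized H\"older inequality with the $(n+1)$-th moments. The only cosmetic difference is that you apply H\"older once with common exponent $n+1$ after padding with constant factors, whereas the paper first applies it with exponent $|B|+1$ and then upgrades each factor to the $(n+1)$-th moment by a second application of H\"older; your explicit bound $\max\{1,\max_{q}\mu(G_{q}^{n+1})\}$ is in fact slightly more careful than the paper's.
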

\begin{rem}
We observe that $\mathbb{E}[(\gamma_{{\rm perm}}^{N}/\gamma)^{2}]=\mathbb{E}\left[\psi_{N}(K)\right]\leq\mathbb{E}\left[\psi_{N}(S)\right]=\mathbb{E}[(\gamma_{{\rm recycle}}^{N}/\gamma)^{2}]$,
where $K$ and $S$ are defined in the statements of Theorems~\ref{thm:permanentisbetter}
and~\ref{thm:approx_consistency}, respectively. 
\end{rem}
\begin{rem}
\label{rem:moments_approximations}While (\ref{eq:second_moment_condition})
is sufficient for $\gamma_{{\rm perm}}^{N}$ and $\gamma_{{\rm simple}}^{N}$
to have finite variance converging to $0$ as $N\rightarrow\infty$,
this is not sufficient in general for $\gamma_{{\rm recycle}}^{N}$,
which requires (\ref{eq:var.finite.cond}) instead.
\end{rem}
The estimator $\gamma_{{\rm simple}}^{N}$ uses only $N/n$ out of
$N$ particles to estimate each expectation in the product; in contrast
$\gamma_{{\rm recycle}}^{N}$ uses $N-p$ particles for the $p$th
expectation $\mu(G_{p})$. In this sense, the latter recycles most
of the particles for each term, and we therefore refer to $\gamma_{{\rm recycle}}^{N}$
as the recycled estimator in the sequel. While Remark~\ref{rem:moments_approximations}
implies that it is not possible for ${\rm var}(\gamma_{{\rm recycle}}^{N})\leq{\rm var}(\gamma_{{\rm simple}}^{N})$
in general, we show in the coming section that ${\rm var}(\gamma_{{\rm recycle}}^{N})$
can be orders of magnitude smaller than ${\rm var}(\gamma_{{\rm simple}}^{N})$
in many statistical settings.

\section{Scaling of the number of particles with $n$\label{sec:Scaling-of-the}}

We investigate the variance of $\gamma_{{\rm recycle}}^{N}$ in comparison
to $\gamma_{{\rm simple}}^{N}$ in the large $n$ regime. In particular,
we show that only $N=\mathcal{O}(n)$ particles are required to control
the relative variance of $\gamma_{{\rm recycle}}^{N}$ in some scenarios
in which $N=\mathcal{O}(n^{2})$ particles are required to control
the relative variance of $\gamma_{{\rm simple}}^{N}$. We also show
that this cannot always be true, in some situations $N=\mathcal{O}(n^{2})$
is a lower bound on the number of particles required to control the
relative variance of $\gamma_{{\rm perm}}^{N}$, and therefore $\gamma_{{\rm recycle}}^{N}$.
To simplify the presentation, we define $c_{p}:=\mu(\bar{G}_{p}^{2})-1$,
for $p\in\left\llbracket 1,n\right\rrbracket $. We will occasionally
make reference to the following assumption when considering the large
$n$ regime
\begin{equation}
1<\inf_{p\geq1}c_{p}\leq\sup_{p\geq1}c_{p}<\infty.\label{eq:cs_assume}
\end{equation}
We begin by observing that from Proposition~\ref{prop:simple_analysis},
if (\ref{eq:cs_assume}) holds and $M=\left\lceil \alpha n^{\beta}\right\rceil $
then the second moment of $\gamma_{{\rm simple}}^{N}/\gamma$ is bounded
above as $n\rightarrow\infty$ if and only if $\alpha>0$ and $\beta\geq1$.
Since $N=Mn$, this implies that to stabilize the relative variance
of $\gamma_{{\rm simple}}^{N}$ in the large $n$ regime one must
take $N=\mathcal{O}(n^{2})$.

The second moment of $\gamma_{{\rm recycle}}^{N}/\gamma$ is more
complex to analyze because it involves interactions between different
potential functions. The following results consider three particular
situations. The first is a favorable scenario, in which it follows
that if $N=\alpha n$ for $\alpha>1$ and (\ref{eq:cs_assume}) holds
with $c=\sup_{p\geq1}c_{p}$, then $\mathbb{E}[(\gamma_{{\rm recycle}}^{N}/\gamma)^{2}]\leq\exp\left(c/\left[\alpha-1\right]\right)$. 
\begin{prop}
\label{prop:independent_Gs}Assume $G_{1}(X),\ldots,G_{n}(X)$ are
mutually independent when $X\sim\mu$. Then
\begin{equation}
\mathbb{E}\left[\left(\gamma_{{\rm recycle}}^{N}/\gamma\right)^{2}\right]=\prod_{p=1}^{n}\left[1+c_{p}/(N-p+1)\right].\label{eq:sec_mom_ind_Gs}
\end{equation}
\end{prop}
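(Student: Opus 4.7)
The plan is to apply part~3 of Theorem~\ref{thm:approx_consistency}, which expresses $\mathbb{E}[(\gamma_{{\rm recycle}}^{N}/\gamma)^{2}]=\mathbb{E}[\psi_{N}(S)]$ with $S=(S_{1},\ldots,S_{n})$ a vector of independent coordinates $S_{p}\sim{\rm Uniform}(\left\llbracket p,N\right\rrbracket)$, and then to exploit the independence hypothesis to collapse $\psi_{N}(S)$ into a product of terms each depending only on a single $S_{p}$, at which point independence of the $S_{p}$ lets me factorize the expectation.

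First I would examine each factor appearing in the definition~(\ref{eq:psiN_def}) of $\psi_{N}(r)$ for an arbitrary $r\in\left\llbracket 1,N\right\rrbracket^{n}$. Writing $J_{i}(r):=\{j:r_{j}=i\}$, the sets $J_{i}(r)$ partition $\left\llbracket 1,n\right\rrbracket$ and hence consist of distinct indices. Since $\bar{G}_{1}(X),\ldots,\bar{G}_{n}(X)$ are mutually independent (as rescalings of the independent $G_{p}(X)$) and $\mu(\bar{G}_{j})=1$, we have $\mu(\prod_{j\in J_{i}(r)}\bar{G}_{j})=\prod_{j\in J_{i}(r)}\mu(\bar{G}_{j})=1$. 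Thus every factor in the second product of~(\ref{eq:psiN_def}), ranging over $i\in\left\llbracket n+1,N\right\rrbracket$, equals $1$. For $i\in\left\llbracket 1,n\right\rrbracket$, the factor $\mu(\bar{G}_{i}\prod_{j\in J_{i}(r)}\bar{G}_{j})$ also equals $1$ unless $i\in J_{i}(r)$, i.e.\ unless $r_{i}=i$, in which case $\bar{G}_{i}$ appears twice and the factor becomes $\mu(\bar{G}_{i}^{2})=1+c_{i}$.

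Combining these observations yields $\psi_{N}(r)=\prod_{p=1}^{n}(1+c_{p})^{\mathbb{I}(r_{p}=p)}$. Substituting $r=S$ and using independence of the $S_{p}$ together with $\mathbb{P}(S_{p}=p)=1/(N-p+1)$ produces
\[
\mathbb{E}[\psi_{N}(S)]=\prod_{p=1}^{n}\left\{(1+c_{p})\cdot\frac{1}{N-p+1}+\frac{N-p}{N-p+1}\right\}=\prod_{p=1}^{n}\left[1+\frac{c_{p}}{N-p+1}\right],
\]
which is the claimed identity.

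The only genuinely new ingredient, relative to Theorem~\ref{thm:approx_consistency}, is the index bookkeeping showing that independence together with the normalization $\mu(\bar{G}_{p})=1$ forces every factor in $\psi_{N}$ to be trivial except on the ``diagonal'' events $\{r_{p}=p\}$. I do not anticipate any real analytic obstacle; the main risk is simply miscounting multiplicities in the products defining $\psi_{N}$, so I would double-check the two cases $i\in J_{i}(r)$ and $i\notin J_{i}(r)$ carefully.
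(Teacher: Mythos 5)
Your proposal is correct and takes essentially the same route as the paper: both invoke part~3 of Theorem~\ref{thm:approx_consistency} to reduce to $\mathbb{E}[\psi_{N}(S)]$, use mutual independence and $\mu(\bar{G}_{j})=1$ to show $\psi_{N}(s)=\prod_{p=1}^{n}\mu(\bar{G}_{p}^{2})^{\mathbb{I}(s_{p}=p)}$, and then factorize over the independent $S_{p}$ using $\mathbb{P}(S_{p}=p)=1/(N-p+1)$. No gaps.
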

The second scenario is also favorable: if the potential functions
are negatively correlated in a specific sense, $\gamma_{{\rm recycle}}^{N}=\gamma_{{\rm perm}}^{N}$
and again it is sufficient to take $N=\mathcal{O}(n)$ to control
the second moment of $\gamma_{{\rm recycle}}^{N}/\gamma$.
\begin{prop}
\label{prop:neg_correlatedGs}Assume that for any distinct $p,q\in\left\llbracket 1,n\right\rrbracket $,
$G_{p}(X)G_{q}(X)=0$ almost surely. Then $\gamma_{{\rm recycle}}^{N}=\gamma_{{\rm perm}}^{N}$
almost surely, and $\mathbb{E}\left[(\gamma_{{\rm recycle}}^{N}/\gamma)^{2}\right]\leq\prod_{p=1}^{n}\left[1+c_{p}/(N-p+1)\right]$.
\end{prop}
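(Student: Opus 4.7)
The plan is to prove the two claims separately. First I would establish the almost-sure equality $\gamma_{\rm recycle}^N = \gamma_{\rm perm}^N$, after which the second-moment bound can be derived most cleanly from the $\mathbb{E}[\psi_N(S)]$ expression in Theorem~\ref{thm:approx_consistency}(3). The engine of both steps is that the hypothesis $G_p(X)G_q(X) = 0$ a.s.\ for $p \neq q$ implies that the random index sets $T_p := \{j \in \llbracket 1, N \rrbracket : G_p(\zeta_j) > 0\}$ are pairwise disjoint almost surely.

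For the equality, I would note that a summand $\prod_p G_p(\zeta_{k_p})$ of $\gamma_{\rm perm}^N$ is nonzero only when $k_p \in T_p$ for each $p$, and by disjointness any such tuple is automatically an injection, so the constraint $k \in P(N,n)$ is vacuous on the support of the integrand and
\[
\gamma_{\rm perm}^N = \frac{1}{|P(N,n)|}\prod_{p=1}^n \sum_{j=1}^N G_p(\zeta_j) = \prod_{p=1}^n \frac{1}{N-p+1}\sum_{j=1}^N G_p(\zeta_j).
\]
For $\gamma_{\rm recycle}^N$ I would induct on $p$ inside Algorithm~\ref{alg:recycle}: as long as each $T_q$ is nonempty, $K_q \in T_q$ with probability one, so $K_{1:p-1}$ is disjoint from $T_p$ and the indicator $\mathbb{I}(j \notin \{K_1,\ldots,K_{p-1}\})$ removes no mass from the sum $\sum_j G_p(\zeta_j)\mathbb{I}(\cdot)$, which again collapses to $\sum_j G_p(\zeta_j)$. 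If some $T_q$ is empty then both $\gamma_{\rm perm}^N$ and $\gamma_{\rm recycle}^N$ vanish, so the equality holds a.s.

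For the upper bound I would then work with $\mathbb{E}[(\gamma_{\rm recycle}^N/\gamma)^2] = \mathbb{E}[\psi_N(S)]$, where the $S_p$ are independent uniforms on $\llbracket p, N \rrbracket$, and simplify $\psi_N(S)$ term by term. Under disjoint support, the factor $\mu(\bar{G}_p \prod_{j:S_j=p} \bar{G}_j)$ equals $1 + c_p$ when $\{j: S_j = p\} = \{p\}$, equals $1$ when this set is empty, and vanishes otherwise, while $\mu(\prod_{j : S_j = i}\bar{G}_j)$ for $i > n$ equals $1$ unless two indices collide at $i$, in which case it is zero. Since $c_p \geq 0$ by Jensen's inequality, dropping the vanishing indicator factors gives
\[
\psi_N(S) \leq \prod_{p=1}^n \bigl(1 + c_p \mathbb{I}(S_p = p)\bigr),
\]
and the independence of the $S_p$ yields $\mathbb{E}[\psi_N(S)] \leq \prod_{p=1}^n (1 + c_p/(N-p+1))$. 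I expect the main obstacle to be the case-by-case analysis of the sets $\{j : S_j = p\}$, which is tractable because $S_j \geq j$ restricts the relevant collisions to indices $j \leq p$.
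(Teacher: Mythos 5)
Your proposal is correct and follows essentially the same route as the paper's proof: the disjoint-support hypothesis makes the indicator in $\gamma_{\rm recycle}^{N}$ remove no mass and kills the non-injective terms in the expansion of $\prod_{p}\sum_{j}G_{p}(\zeta_{j})$, giving the a.s.\ equality, and the bound follows from $\psi_{N}(s)\leq\prod_{p}(1+c_{p})^{\mathbb{I}(s_{p}=p)}$ together with the independence of the $S_{p}$. Your explicit treatment of the degenerate case where some $G_{p}$ vanishes at every particle is a small point the paper leaves implicit, but it does not change the argument.
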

The third scenario is not favorable, corresponding to the case where
the potential functions are identical. At least a quadratic in $n$
number of particles is required to control $\mathbb{E}[(\gamma_{{\rm recycle}}^{N}/\gamma)^{2}]$
in this setting. Loosely speaking, positive correlations between $G_{p}(X)$
and $G_{q}(X)$ tend to increase the second moment of $\gamma_{{\rm recycle}}^{N}/\gamma$,
while correlations have no effect on the second moment of $\gamma_{{\rm simple}}^{N}/\gamma$.
Nevertheless, we show in Proposition~\ref{prop:identicalGs_momentcond}
that when the moments of $\bar{G}_{1}(X)$ increase no more quickly
than those associated to rescaled Bernoulli random variables, $\gamma_{{\rm recycle}}^{N}$
has a smaller variance than $\gamma_{{\rm simple}}^{N}$. Hence the
recycled estimator may be useful, even if not orders of magnitude
better, in some applications involving the same potential functions,
e.g.\ the Poisson estimator of \citet{beskos2006exact}, based on
general methods by \citet{bhanot1985bosonic} and \citet{wagner1988unbiased}. 
\begin{prop}
\label{prop:pos_correlatedGs}Assume $G_{1}=\cdots=G_{n}$. Then $\mathbb{E}\left[(\gamma_{{\rm recycle}}^{N}/\gamma)^{2}\right]\geq\mathbb{E}\left[(\gamma_{{\rm perm}}^{N}/\gamma)^{2}\right]\geq(1+c_{1})^{n^{2}/N}$.
\end{prop}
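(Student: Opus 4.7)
The plan is to chain two inequalities. The first, $\mathbb{E}[(\gamma_{\rm recycle}^{N}/\gamma)^{2}] \ge \mathbb{E}[(\gamma_{\rm perm}^{N}/\gamma)^{2}]$, follows immediately from Theorem~\ref{thm:approx_consistency}(1), which states $\gamma_{\rm perm}^{N} \preceq_{\rm cx} \gamma_{\rm recycle}^{N}$: applying the convex function $x \mapsto x^{2}$ and then dividing by the positive constant $\gamma^{2}$ gives the ordering of second moments. (This is also noted in the remark immediately after Corollary~\ref{cor:moments_for_consistency}.)

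For the second inequality, I would use the formula $\mathbb{E}[(\gamma_{\rm perm}^{N}/\gamma)^{2}] = \mathbb{E}[\psi_{N}(K)]$ from Theorem~\ref{thm:permanentisbetter}(3), with $K \sim {\rm Uniform}(P(N,n))$, and specialize to $G_{1}=\cdots=G_{n}$. Writing $\bar{G}$ for the common value of $\bar{G}_{p}$, one has $\mu(\bar{G})=1$ and $\mu(\bar{G}^{2})=1+c_{1}$. Since the entries of $K$ are distinct, for each index $p$ the set $\{j : K_{j}=p\}$ has cardinality $0$ or $1$, so inspecting both products in (\ref{eq:psiN_def}) shows that the factor at $p\in\llbracket 1,n\rrbracket$ equals $\mu(\bar{G}^{2})=1+c_{1}$ if $p\in\{K_{1},\ldots,K_{n}\}$ and $\mu(\bar{G})=1$ otherwise, while the factor at every $i\in\llbracket n+1,N\rrbracket$ collapses to either $\mu(\bar{G})=1$ or $\mu(1)=1$. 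Hence
\[
\psi_{N}(K) \;=\; (1+c_{1})^{M}, \qquad M \;:=\; |\{K_{1},\ldots,K_{n}\} \cap \llbracket 1,n\rrbracket|.
\]

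To finish I would identify the distribution of $M$: since $K$ is a uniformly random $n$-permutation of $\llbracket 1,N\rrbracket$, the unordered set $\{K_{1},\ldots,K_{n}\}$ is a uniformly random $n$-subset of $\llbracket 1,N\rrbracket$, so $M$ is hypergeometric with $\mathbb{E}[M] = n \cdot n/N = n^{2}/N$. Because $t\mapsto (1+c_{1})^{t}$ is convex (Cauchy--Schwarz gives $c_{1}\geq 0$, though the map is convex for any $1+c_{1}>0$), Jensen's inequality yields
\[
\mathbb{E}[(1+c_{1})^{M}] \;\geq\; (1+c_{1})^{\mathbb{E}[M]} \;=\; (1+c_{1})^{n^{2}/N},
\]
which is the claimed bound. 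There is no substantive obstacle here; the only step that needs care is the bookkeeping that makes $\psi_{N}(K)$ collapse to $(1+c_{1})^{M}$, which in turn is what makes the identical-potentials case so clean.
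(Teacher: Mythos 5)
Your proof is correct and follows essentially the same route as the paper: the first inequality from the convex order $\gamma_{{\rm perm}}^{N}\preceq_{{\rm cx}}\gamma_{{\rm recycle}}^{N}$, then the collapse of $\psi_{N}(K)$ to $(1+c_{1})^{Z}$ with $Z=\sum_{p=1}^{n}\mathbb{I}(K_{p}\leq n)$ (your $M$ is the same random variable), followed by Jensen's inequality and $\mathbb{E}[Z]=n^{2}/N$. The only cosmetic difference is that the paper computes $\mathbb{E}[Z]$ by linearity of expectation rather than by naming the hypergeometric distribution.
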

\begin{prop}
\label{prop:identicalGs_momentcond}Let $G_{1}=\cdots=G_{n}$ and
assume that $\mu(\bar{G}_{1}^{\ell})\leq\mu(\bar{G_{1}}^{2})^{\ell-1}$
for $\ell\in\left\llbracket 3,n\right\rrbracket $. Then $\mathbb{E}\left[(\gamma_{{\rm recycle}}^{N}/\gamma)^{2}\right]\leq\mathbb{E}\left[(\gamma_{{\rm simple}}^{N}/\gamma)^{2}\right]$.
\end{prop}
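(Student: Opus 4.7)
The plan is to exploit the identity $\mathbb{E}[(\gamma_{{\rm recycle}}^{N}/\gamma)^{2}]=\mathbb{E}[\psi_{N}(S)]$ given by Theorem~\ref{thm:approx_consistency}(3). With $G_{1}=\cdots=G_{n}=G$, $a_{\ell}:=\mu(\bar{G}^{\ell})$ (so $a_{0}=a_{1}=1$ and $a_{2}=1+c$), and $m_{i}(S):=|\{p:S_{p}=i\}|$, the function $\psi_{N}(S)$ simplifies to $\prod_{i=1}^{n}a_{1+m_{i}(S)}\prod_{i=n+1}^{N}a_{m_{i}(S)}$. The hypothesis supplies $a_{\ell}\le a_{2}^{(\ell-1)_{+}}$ for all the relevant $\ell$, which combined with $\sum_{i=1}^{N}m_{i}(S)=n$ gives
\[
\psi_{N}(S)\le a_{2}^{\sum_{i=1}^{n}m_{i}(S)+\sum_{i=n+1}^{N}(m_{i}(S)-1)_{+}}=(1+c)^{n-J(S)},
\]
where $J(S):=|\{i\in\llbracket n+1,N\rrbracket:m_{i}(S)\ge1\}|$ is the number of distinct values taken by $S$ in $\llbracket n+1,N\rrbracket$.

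Next, define the Bernoulli indicator $L_{p}:=\mathbb{I}(S_{p}\le n\text{ or }S_{p}\in\{S_{1},\ldots,S_{p-1}\})$, so that $\sum_{p=1}^{n}L_{p}=n-J(S)$. Since each $L_{p}\in\{0,1\}$, the binomial expansion gives
\[
(1+c)^{n-J(S)}=\prod_{p=1}^{n}(1+cL_{p})=\sum_{B\subseteq\llbracket 1,n\rrbracket}c^{|B|}\prod_{p\in B}L_{p}.
\]
Taking expectations reduces the problem to computing $\mathbb{P}(L_{p}=1\text{ for all }p\in B)=\mathbb{E}[\prod_{p\in B}L_{p}]$ for each $B$, and comparing the result against $(1+c/M)^{n}=\sum_{k}\binom{n}{k}c^{k}(n/N)^{k}$.

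The key combinatorial identity to establish is
\[
\mathbb{P}(L_{p_{1}}=\cdots=L_{p_{k}}=1)=\binom{n}{k}\bigg/\binom{N}{k}\qquad\text{for all }1\le p_{1}<\cdots<p_{k}\le n,
\]
which I plan to prove via a Fisher--Yates coupling. Starting from the identity permutation and sequentially swapping positions $p$ and $S_{p}$ for $p=1,\ldots,n$ produces a random $\sigma$ such that $(K_{1},\ldots,K_{n}):=(\sigma(1),\ldots,\sigma(n))$ is uniform on $P(N,n)$, by the classical Fisher--Yates analysis. A short induction on $r\in\llbracket 1,n\rrbracket$ shows that after $r-1$ swaps the value held at position $r$ lies in $\llbracket 1,n\rrbracket$, and this yields the pointwise equality $L_{p}=\mathbb{I}(K_{p}\le n)$; the claimed probability then follows from the standard sampling-without-replacement count $\prod_{j=1}^{k}(n-j+1)/(N-j+1)=\binom{n}{k}/\binom{N}{k}$.

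Summing over $B$ of cardinality $k$ gives $\mathbb{E}[(1+c)^{n-J(S)}]=\sum_{k=0}^{n}c^{k}\binom{n}{k}^{2}/\binom{N}{k}$. Since $c\ge0$, termwise comparison against $(1+c/M)^{n}$ reduces to $\binom{n}{k}/\binom{N}{k}\le(n/N)^{k}$, which holds because $\binom{n}{k}/\binom{N}{k}=\prod_{j=0}^{k-1}(n-j)/(N-j)$ and each factor is bounded by $n/N$ whenever $n\le N$. I expect the Fisher--Yates identity $L_{p}=\mathbb{I}(K_{p}\le n)$ to be the main obstacle; once that is in place, the remaining steps are routine bookkeeping with binomial coefficients.
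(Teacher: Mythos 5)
Your proposal is correct, and it follows the paper's proof exactly up to the bound $\psi_{N}(S)\leq(1+c)^{\sum_{p}L_{p}}$ with $L_{p}=\mathbb{I}(S_{p}\leq n\text{ or }S_{p}\in\{S_{1},\ldots,S_{p-1}\})$, which is precisely the paper's $m^{\sum_{p}b_{p}}$ bound; from there the two arguments diverge. The paper identifies the law of $(L_{1},\ldots,L_{n})$ with the sequential hypergeometric scheme of its Lemma~\ref{lem:indicator-lem-2} and invokes a negative-association inequality, $\mathbb{E}[\prod_{p}m^{L_{p}}]\leq\prod_{p}\mathbb{E}[m^{L_{p}}]$, proved via the Esary--Proschan--Walkup correlation inequality (Lemma~\ref{lem:esary}) together with a monotone-coupling argument. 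You instead expand $\prod_{p}(1+cL_{p})$ over subsets, compute $\mathbb{E}[\prod_{p\in B}L_{p}]=\binom{n}{|B|}/\binom{N}{|B|}$ exactly via the Fisher--Yates coupling $L_{p}=\mathbb{I}(K_{p}\leq n)$ with $K\sim{\rm Uniform}(P(N,n))$ (your coupling identity is correct: an induction shows positions $r\in\llbracket j+1,n\rrbracket$ always hold values $\leq n$ after $j$ swaps, so a position $i>n$ holds a value $\leq n$ at time $p-1$ exactly when $i\in\{S_{1},\ldots,S_{p-1}\}$), and then compare termwise with the binomial expansion of $(1+c/M)^{n}$ using $\binom{n}{k}/\binom{N}{k}\leq(n/N)^{k}$. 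Your route buys an exact closed form $\sum_{k}c^{k}\binom{n}{k}^{2}/\binom{N}{k}$ for the intermediate quantity and avoids any appeal to FKG-type association, at the cost of being tailored to a single common potential (the paper's Lemma~\ref{lem:indicator-lem-2} handles distinct $m_{p}$, which it reuses for Proposition~\ref{prop:indicator_better}; your expansion would also generalize, but less transparently). One shared caveat, not a defect of your argument relative to the paper's: when all of $S_{1},\ldots,S_{n}$ equal $n$ the term $\mu(\bar{G}_{1}^{n+1})$ appears in $\psi_{N}(S)$, so both proofs implicitly need the moment hypothesis to extend to $\ell=n+1$.
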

Our final general result is motivated by approximate Bayesian computation
applications, in which it is often the case that the potential functions
are indicator functions. In this case, it is also true that $\gamma_{{\rm recycle}}^{N}$
has a smaller variance than $\gamma_{{\rm simple}}^{N}$.
\begin{prop}
\label{prop:indicator_better}Let $A_{1},\ldots,A_{n}\in\mathcal{E}$
satisfy $\mu(A_{p})>0$ for $p\in\left\llbracket 1,n\right\rrbracket $.
Let $G_{p}:=\mathbb{I}{}_{A_{p}}$ for $p\in\left\llbracket 1,n\right\rrbracket $.
Then $\mathbb{E}\left[(\gamma_{{\rm recycle}}^{N}/\gamma)^{2}\right]\leq\mathbb{E}\left[(\gamma_{{\rm simple}}^{N}/\gamma)^{2}\right]$.
\end{prop}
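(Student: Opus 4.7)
The plan is to start from the second-moment formula of Theorem~\ref{thm:approx_consistency}(3),
\[
\mathbb{E}[(\gamma_{{\rm recycle}}^{N}/\gamma)^{2}]=\mathbb{E}[\psi_{N}(S)],
\]
with independent $S_{p}\sim{\rm Uniform}(\llbracket p,N\rrbracket)$. I first specialise $\psi_{N}$ to indicator potentials. Writing $u_{p}=\mu(A_{p})$ so that $\bar{G}_{p}=\mathbb{I}_{A_{p}}/u_{p}$, and exploiting $\mathbb{I}_{A_{p}}^{k}=\mathbb{I}_{A_{p}}$, every $\mu(\cdot)$-factor in $\psi_{N}(r)$ collapses into the measure of an intersection divided by a product of $u_{j}$'s. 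Across the whole product each $u_{j}^{-1}$ for $j\in\llbracket 1,n\rrbracket$ appears exactly twice---once from the explicit $\bar{G}_{j}$ at $p=j$ and once from the group product $\prod_{k:r_{k}=r_{j}}\bar{G}_{k}$---so the denominators collapse to $\gamma^{-2}$, yielding
\[
\psi_{N}(r)=\gamma^{-2}\prod_{i=1}^{N}\mu\Big(\bigcap_{p\in D_{i}(r)}A_{p}\Big),\qquad D_{i}(r)=\{p\in\llbracket 1,n\rrbracket:r_{p}=i\}\cup(\{i\}\cap\llbracket 1,n\rrbracket).
\]

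The workhorse inequality is the indicator monotonicity $\mu(\bigcap_{p\in C}A_{p})\leq u_{q}$ for any $q\in C$, immediate from $\mathbb{I}_{\bigcap A_{p}}\leq\mathbb{I}_{A_{q}}$. For each $i\leq n$ I apply this with $q=i$, since $i\in D_{i}(S)$ always, so these "diagonal'' particles jointly contribute $\prod_{i\leq n}u_{i}=\gamma$; for each $i>n$ with $D_{i}(S)\ne\emptyset$ I use $q$ selected from $D_{i}(S)$ by a fixed rule. Taking the expectation over $S$, I aim to reorganise the bound into $\prod_{p}[1+(1-u_{p})/(Mu_{p})]=\mathbb{E}[(\gamma_{{\rm simple}}^{N}/\gamma)^{2}]$ from Proposition~\ref{prop:simple_analysis}. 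The intuition is that each $p$'s "explicit'' label at particle $p$ cancels one $u_{p}$ in $\gamma^{-2}$, while its "extra'' label at position $S_{p}$ is responsible for the $(1-u_{p})/(Mu_{p})$ correction: it contributes an additional $u_{p}$ factor precisely when it lands on a fresh particle $i>n$ with no collision.

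The main obstacle is the combinatorial bookkeeping for collision configurations---those with either several $S_{p}$'s coinciding on a single particle $i>n$, or an $S_{p}$ equal to some $q\leq n$. The tighter indicator inequality $\mu(A_{p}\cap A_{q})\leq\min(u_{p},u_{q})$ only reinforces the bound in such cases, but verifying that the collision-free terms reassemble exactly into $\prod_{p}[1+(1-u_{p})/(Mu_{p})]$ requires careful accounting. I expect the cleanest execution is by induction on $n$: condition on $S_{n}\sim{\rm Uniform}(\llbracket n,N\rrbracket)$, which is independent of $S_{1},\ldots,S_{n-1}$, strip off the factor corresponding to $p=n$, and apply the induction hypothesis to the remaining $n-1$ potentials on the system of $N-1$ particles obtained by removing the position $S_{n}$.
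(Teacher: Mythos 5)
Your first half is sound and essentially reproduces the paper's Lemma~\ref{lem:indicator-lem-1}: writing $\psi_N(s)=\gamma^{-2}\prod_i\mu(\bigcap_{p\in D_i(s)}A_p)$ and applying $\mu(\bigcap_{p\in C}A_p)\leq\mu(A_q)$ with the rule ``$q=$ first arrival at particle $i$'' yields exactly $\psi_N(S)\leq\prod_{p=1}^n\mu(\bar{G}_p^2)^{B_p}$ with $B_p=\mathbb{I}(S_p\in\llbracket1,n\rrbracket\cup\{S_1,\ldots,S_{p-1}\})$, which is the paper's intermediate bound (the paper reaches it via a nesting argument instead of intersections, but the two are equivalent).

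The genuine gap is the step you defer to ``careful accounting'': passing from $\mathbb{E}[\prod_p m_p^{B_p}]$ to $\prod_p\mathbb{E}[m_p^{B_p}]=\prod_p[1+(m_p-1)n/N]$. This is not bookkeeping of collision configurations but a negative-dependence inequality for the $B_p$'s, which would fail for positively dependent indicators; it is the content of the paper's Lemma~\ref{lem:indicator-lem-2}. The $B_p$ form a hypergeometric-type scheme, $\mathbb{P}[B_p=1\mid B_{1:p-1}]=(n-\sum_{j<p}B_j)/(N-p+1)$, and the paper peels off the last factor by observing that $\mathbb{E}[m_\ell^{B_\ell}\mid Z_{\ell-1}]$ is decreasing and $\mathbb{E}[\prod_{p<\ell}m_p^{B_p}\mid Z_{\ell-1}]$ is increasing in the running sum $Z_{\ell-1}=\sum_{p<\ell}B_p$ (the latter via a coupling argument), then applies the Chebyshev/Harris association inequality (Lemma~\ref{lem:esary}). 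Your proposed induction --- condition on $S_n$, delete that particle position, and recurse on $n-1$ potentials and $N-1$ particles --- does not produce a subproblem of the same form: the remaining $S_j$ are not uniform on the reduced index set, the marginal success probability of each $B_p$, $p<n$, remains $n/N$ rather than $(n-1)/(N-1)$, and conditioning on $S_n$ decouples nothing since $B_1,\ldots,B_{n-1}$ do not depend on $S_n$. What is unavoidably needed is the covariance inequality between an increasing and a decreasing function of $Z_{n-1}$, which your sketch never identifies. Until that inequality is supplied, the proof is incomplete at its most substantive point.
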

\begin{rem}
An alternative approximation of $\gamma$ can be obtained by sampling
a number of permutations $\sigma_{1},\ldots,\sigma_{q}$ of $\left\llbracket 1,n\right\rrbracket $
and calculating $\gamma_{{\rm simple}}^{N}$ using each permutation.
That is, if we define
\[
\gamma_{{\rm simple}}^{N}(\sigma)=\prod_{p=1}^{n}\frac{1}{M}\sum_{i=1}^{M}G_{p}(\zeta_{\sigma((p-1)M+i)}),
\]
then $q^{-1}\sum_{i=1}^{q}\gamma_{{\rm simple}}^{N}(\sigma_{i})$
is also an approximation of $\gamma_{{\rm perm}}^{N}$ and hence $\gamma$.
This strategy does not scale well with $n$, however. For example,
if (\ref{eq:cs_assume}) holds and $M=\left\lceil \alpha n^{\beta}\right\rceil $
with $\alpha>0$ and $\beta\in(0,1)$ we require $q$ to grow exponentially
with $n$ to stabilize the relative variance of $q^{-1}\sum_{i=1}^{q}\gamma_{{\rm simple}}^{N}(\sigma_{i})$.
The crucial observation to establish this is that for non-negative
random variables $W_{1},\ldots,W_{q}$ with identical means and variances,
we have $\mathbb{E}\left[(q^{-1}\sum_{i=1}^{q}W_{i})^{2}\right]\geq q^{-1}\mathbb{E}\left[W_{1}^{2}\right]$,
and the argument follows from Proposition~\ref{prop:simple_analysis}.
\end{rem}

\section{Latent variable models\label{sec:Latent-variable-models}}

The assumption of mutual independence in Proposition~\ref{prop:independent_Gs}
is very strong in statistical settings. However, we show now that
in latent variable models the expected second moment of $\gamma_{{\rm recycle}}^{N}/\gamma$
is very similar to (\ref{eq:sec_mom_ind_Gs}), where the expectation
$\mathsf{E}$ is w.r.t.\ the law of the observations $Y_{1},\ldots,Y_{n}\overset{i.i.d.}{\sim}\nu$.
For the remainder of this section, we denote by $\bar{G}_{1}$ the
random function $x\mapsto g(x,Y_{1})/\nu(Y_{1})$ for $Y_{1}\sim\nu$.
We begin by verifying (\ref{eq:var.finite.cond}) for latent variable
models under a finite expected second moment condition for $\bar{G}_{1}(X)$
when $X\sim\mu$. This condition has appeared in the literature in
a variety of places, see e.g. \citet{breiman1985estimating}, \citet{buja1990remarks},
\citet{schervish1992convergence}, \citet{liu1995covariance} and
\citet{khare2011spectral}.
\begin{prop}
\label{prop:finite_var_as}In the setting of Example~\ref{example:latent-var-model},
assume that $\mathsf{E}\left[\mu(\bar{G}_{1}^{2})\right]<\infty$.
If $Y_{1},\ldots,Y_{n}\overset{i.i.d.}{\sim}\nu_{0}$ where $\nu_{0}$
is absolutely continuous with respect to $\nu$, then (\ref{eq:var.finite.cond})
holds almost surely.
\end{prop}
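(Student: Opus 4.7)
The plan is to reduce almost-sure finiteness of the maximum in (\ref{eq:var.finite.cond}) to per-pair finiteness via a union bound over the at most $n\cdot 2^n$ pairs $(p,B)$, and then factor each individual integral $\mu(G_p\prod_{j\in B}G_j)$ as a chain of Bayesian updates, verifying the appropriate finiteness property at each step. Fixing $(p,B)$, the integrand $g(x,y_p)\prod_{j\in B}g(x,y_j)$ is a product of $|B|+1$ factors of $g$, with $g(\cdot,y_p)$ appearing twice exactly when $p\in B$.

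I would expand the integral as a telescoping product $\mu(G_p\prod_{j\in B}G_j) = \prod_\ell Z_\ell$ via the recursion $\mu_\ell(dx) := g(x,y_{i_\ell})\mu_{\ell-1}(dx)/Z_\ell$, $Z_\ell := \int g(x,y_{i_\ell})\mu_{\ell-1}(dx)$, starting from $\mu_0 := \mu$ and processing the factors in an order that handles the (possibly) repeated index $p$ first. The first normalising constant $Z_1 = \nu(y_p)$ is almost surely finite because $\nu$ is a Lebesgue density, hence finite Lebesgue-a.e.\ and thus $\nu_0$-a.e.\ (since $\nu_0 \ll \nu \ll$ Lebesgue). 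For any subsequent step whose $y_{i_\ell}$ is a fresh coordinate $y_j$ with $j\in B\setminus\{p\}$, the constant $Z_\ell = \tilde\nu_{\ell-1}(y_j)$, where $\tilde\nu_{\ell-1}(y) := \int g(x,y)\mu_{\ell-1}(dx)$ is the evaluation of a probability density on $\mathsf{Y}$ (it integrates to one by Tonelli and $\int g(x,y)\,dy = 1$) at a point independent of $\mu_{\ell-1}$; a Fubini argument together with $\nu_0 \ll$ Lebesgue then yields $Z_\ell < \infty$ almost surely.

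The remaining case, which I expect to be the main obstacle, is the step in which $p\in B$ and we process the second copy of $g(\cdot,y_p)$: here $Z_2 = \int g(x,y_p)\pi_{y_p}(dx) = \mu(G_{y_p}^2)/\nu(y_p)$, with $\pi_{y_p}(dx) \propto g(x,y_p)\mu(dx)$, which is the diagonal evaluation of the density $\tilde\nu_1$ at the conditioning value $y_p$ itself, so the generic density-plus-Fubini argument does not apply. This is precisely where the hypothesis $\mathsf{E}[\mu(\bar G_1^2)]<\infty$ plays its decisive role: unwinding the expectation gives $\int \mu(G_y^2)/\nu(y)\,dy < \infty$, so $\mu(G_y^2)/\nu(y)$ is finite Lebesgue-a.e., hence $\nu$-a.e., hence $\nu_0$-a.e.\ by absolute continuity, which forces $Z_2 < \infty$ almost surely and completes the argument.
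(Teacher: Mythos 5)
Your argument is correct, but it takes a genuinely different route from the paper's. The paper fixes $(p,B)$ and computes $\mathsf{E}\bigl[\mu(\bar{G}_{p}\prod_{j\in B}\bar{G}_{j})\bigr]$ in one shot under $Y_{1:n}\overset{i.i.d.}{\sim}\nu$: the factors $g(x,y_{j})/\nu(y_{j})$ with $j\neq p$ integrate out to $1$ against $\nu(y_{j}){\rm d}y_{j}$, leaving exactly $1$ if $p\notin B$ and $\mathsf{E}[\mu(\bar{G}_{1}^{2})]$ if $p\in B$; a.s.\ finiteness then follows from ``finite mean implies a.s.\ finite,'' and the passage to $\nu_{0}\ll\nu$ is immediate because a $\nu^{\otimes n}$-null set is $\nu_{0}^{\otimes n}$-null. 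You instead factor each integral pathwise into a chain of Bayesian normalizing constants and show each factor is a.s.\ finite separately: the fresh coordinates via evaluation of a (random) probability density at an independent point with law absolutely continuous w.r.t.\ the reference measure, and the single diagonal factor via the hypothesis, unwound as $\int\mu(G_{y}^{2})/\nu(y)\,{\rm d}y<\infty$. The paper's computation is shorter and yields the exact value of the expectation (which foreshadows Theorem~\ref{thm:dg_expected_second_moment}); your decomposition avoids integrating the whole product at once and adapts more readily to non-identically-distributed observations, at the price of some bookkeeping you should make explicit: the telescoping identity $\mu(G_{p}\prod_{j\in B}G_{j})=\prod_{\ell}Z_{\ell}$ requires each $\mu_{\ell}$ to be well-defined, so you need to dispose of the event $Z_{\ell}=0$ (harmless, since then the integrand vanishes $\mu$-a.e.\ and the full integral is $0$), and the finite union bound over the $(p,B)$ pairs should be stated rather than gestured at. With those small additions your proof is complete.
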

The following Theorem is our main result in terms of applicability
to statistical scenarios. It suggests that when considering the expected
second moment of $\gamma_{{\rm recycle}}^{N}/\gamma$, it is as if
the random variables $G_{1}(X),\ldots,G_{n}(X)$ are ``mutually independent
on average'', and allows easy comparison with the corresponding expected
second moment of $\gamma_{{\rm simple}}^{N}/\gamma$.
\begin{thm}
\label{thm:dg_expected_second_moment}In the setting of Example~\ref{example:latent-var-model},
and letting $\mathsf{E}$ denoting expectation w.r.t. $Y_{1},\ldots,Y_{n}$,
\[
\mathsf{E}\left[\mathbb{E}\left[\left(\gamma_{{\rm recycle}}^{N}/\gamma\right)^{2}\right]\right]=\prod_{p=1}^{n}\left[1+C/(N-p+1)\right],\qquad C=\mathsf{E}\left[\mu(\bar{G}_{1}^{2})\right]-1.
\]
\end{thm}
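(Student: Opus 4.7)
The starting point is the identity $\mathbb{E}[(\gamma_{\rm recycle}^N/\gamma)^2] = \mathbb{E}\bigl[\prod_{p=1}^n \bar{G}_p(\zeta_p)\bar{G}_p(\zeta_{S_p})\bigr]$ from Theorem~\ref{thm:approx_consistency}(iii), with $S_p \sim {\rm Uniform}(\left\llbracket p,N\right\rrbracket)$ independent. Taking $\mathsf{E}$ on both sides and swapping expectations by Fubini, I would eliminate the $Y$-randomness first: since $\bar{G}_p(x) = g(x,Y_p)/\nu(Y_p)$ and the $Y_p$'s are independent, $\mathsf{E}_Y[\prod_p \bar{G}_p(\zeta_p)\bar{G}_p(\zeta_{S_p})]$ factorises across $p$ and equals $\prod_p \phi(\zeta_p,\zeta_{S_p})$, where $\phi(x,x') := \int g(x,y)g(x',y)/\nu(y)\,dy$. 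Two elementary identities drive everything that follows: $\int \phi(x,x')\mu(dx) = 1$ for every fixed $x'$, using $\int g(x,y)\mu(dx) = \nu(y)$, and $\int \phi(x,x)\mu(dx) = 1+C$ by definition of $C$.

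The crux is to show, for every deterministic $s \in \prod_{p=1}^n \left\llbracket p,N\right\rrbracket$, that $\mathbb{E}_\zeta[\prod_p \phi(\zeta_p,\zeta_{s_p})] = (1+C)^{|L(s)|}$, where $L(s) := \{p : s_p = p\}$. The approach is to encode the product as an undirected multigraph $G(s)$ on $\left\llbracket 1,N\right\rrbracket$ with edges $e_p = \{p,s_p\}$ (self-loops allowed), so that a vertex appearing in a single factor is integrable by one of the two $\phi$-identities: a lone non-self-loop factor collapses to $1$, and a lone self-loop contributes $1+C$. The constraint $s_p \geq p$ forces enough structure on $G(s)$ for this leaf-pruning to resolve the entire expectation. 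Specifically, I would establish: (i) $G(s)$ has no multi-edges, because $\{p,s_p\}=\{q,s_q\}$ with $p<q$ would force $s_q = p < q$; (ii) $G(s)$ has no undirected cycles of length $\geq 3$, because the minimum vertex of such a cycle would be the source of two distinct incident edges, contradicting the uniqueness of $s_p$; and (iii) each connected component contains at most one self-loop, because the unique path between two hypothetical same-component self-loops would have to be strictly decreasing in vertex labels (each intermediate vertex having already used its unique outgoing edge on the previous step), contradicting the ordering of the two self-loop endpoints.

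With this forest-with-at-most-one-self-loop structure in hand, the reduction proceeds component by component. Any tree component with $\geq 2$ vertices has $\geq 2$ tree-leaves and at most one self-loop, so some leaf carries no self-loop and appears in exactly one factor, which integrates to $1$ and can be removed; iterating, each component collapses either to the empty graph (contribution $1$) or to a single vertex carrying its self-loop (contribution $1+C$), giving $(1+C)^{|L(s)|}$ in total. Finally, independence of the $S_p$'s yields
\[
\mathsf{E}\bigl[\mathbb{E}[(\gamma_{\rm recycle}^N/\gamma)^2]\bigr] = \mathbb{E}_S\bigl[(1+C)^{|L(S)|}\bigr] = \prod_{p=1}^n\!\left[\frac{1+C}{N-p+1} + \frac{N-p}{N-p+1}\right] = \prod_{p=1}^n\!\left[1 + \frac{C}{N-p+1}\right],
\]
which is the claimed formula.

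The main obstacle is the combinatorial step, especially claim (iii): this ``at most one self-loop per component'' fact is what makes the factors $\bar{G}_p(\zeta_p)\bar{G}_p(\zeta_{S_p})$ behave, after averaging over $Y$, as if mutually independent, and it is the reason the final expression mirrors the conclusion of Proposition~\ref{prop:independent_Gs} despite no actual independence assumption on the $G_p(X)$'s. The remaining pieces---Fubini, the two $\phi$-identities, the leaf-pruning recursion, and the averaging over $S$---are essentially bookkeeping once the graph structure is established.
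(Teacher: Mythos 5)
Your proof is correct, but it takes a genuinely different route from the paper's. The paper integrates out the particles first---via Lemma~\ref{lem:psi_expression} the quantity becomes $\mathsf{E}\left[\mathbb{E}\left[\psi_{N}(S,Y)\right]\right]$---and then runs a backward induction over the data indices $q=n,\ldots,1$: because $S_{j}\geq j$, the index $q$ is always the smallest one appearing in any surviving $\mu$-product, so Lemma~\ref{lem:expected_muprod} lets $Y_{q}$ be integrated away at each step, contributing the factor $\mathsf{E}\left[\mu(\bar{G}_{q}^{2})\right]^{\mathbb{I}(S_{q}=q)}$. You reverse the order of integration: averaging over $Y$ first collapses each pair $\bar{G}_{p}(\zeta_{p})\bar{G}_{p}(\zeta_{S_{p}})$ to the symmetric kernel $\phi(\zeta_{p},\zeta_{S_{p}})$ with $\mu(\phi(\cdot,x'))=1$ and $\mu(\phi(x,x))=1+C$, and the remaining $\zeta$-integral is resolved by your combinatorial observation that the constraint $s_{p}\geq p$ forces the edge multigraph with edges $\{p,s_{p}\}$ to be a forest carrying at most one self-loop per component, so leaf-pruning leaves exactly the self-loops, one factor $1+C$ each. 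I checked your claims (i)--(iii) and the pruning step; they hold (the key fact in each case is that every non-loop edge is ``emitted'' by its smaller endpoint, since $e_{b}=\{b,s_{b}\}$ with $s_{b}\geq b$ cannot equal $\{a,b\}$ for $a<b$). Both arguments exploit the same triangular structure $s_{p}\geq p$, but yours makes the ``independent on average'' heuristic literal---after the $Y$-average only the diagonal events $S_{p}=p$ survive, exactly mirroring Proposition~\ref{prop:independent_Gs}---at the price of a longer case analysis than the paper's three-line induction, which piggybacks on the $\psi_{N}$ machinery already in place. Two points worth making explicit in a write-up: the symmetry of $\phi$ is what licenses integrating out a leaf regardless of which argument slot it occupies, and all interchanges of expectation are covered by Tonelli since every quantity is non-negative, so no integrability hypothesis is needed even when $C=\infty$.
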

\begin{rem}
In the setting of Example~\ref{example:latent-var-model}, it is
straightforward to obtain from Proposition~\ref{prop:simple_analysis}
that $\mathsf{E}[\mathbb{E}[(\gamma_{{\rm simple}}^{N}/\gamma)^{2}]]=\left(1+C/M\right)^{n}$,
where $C$ is as in Theorem~\ref{thm:dg_expected_second_moment}.
Hence, one requires $N=\left\lceil \alpha n\right\rceil $ for $\alpha>1$
to control the expected relative variance of $\gamma_{{\rm recycle}}^{N}$
but one requires $M=\mathcal{O}(n)$ and hence $N=\mathcal{O}(n^{2})$
to control the expected relative variance of $\gamma_{{\rm simple}}^{N}$
when $1<C<\infty$. In addition, it is clear that $\mathsf{E}[\mathbb{E}[(\gamma_{{\rm recycle}}^{N}/\gamma)^{2}]]<\mathsf{E}[\mathbb{E}[(\gamma_{{\rm simple}}^{N}/\gamma)^{2}]]$
for any $N$ that is an integer multiple of $n>1$.
\end{rem}
\begin{rem}
The condition $\mathsf{E}\left[\mu(\bar{G}_{1}^{2})\right]<\infty$
is not very strong, but is not always satisfied. For example, if $\mu$
is ${\rm Uniform}(0,1)$ and $g(x,\cdot)$ is ${\rm Uniform}(0,x)$
for each $x\in(0,1)$ then simple calculations show that $\mathsf{E}\left[\mu(\bar{G}_{1}^{2})\right]=\infty$. 
\end{rem}

\section{Applications\label{sec:Applications}}

We consider Bayesian inference in two latent variable model applications,
employing $\gamma_{{\rm recycle}}^{N}$ or $\gamma_{{\rm simple}}^{N}$
to approximate $L(\theta)$ in a pseudo-marginal version of a random-walk
Metropolis Markov chain. General guidelines for tuning the value of
$N$ in such chains have been proposed by \citet{Doucet07032015}
and \citet{sherlock2015}, who suggest that one should choose $N$
such that the relative variance of the estimator is roughly $2$.
While the relative variance typically varies with $\theta$, if the
posterior distribution for $\theta$ is reasonably concentrated near
the true parameter $\theta_{0}$, in practice one can often choose
$N$ so that the estimator has a relative variance of $2$ at some
point close to $\theta_{0}$. In both applications below, following
\citet{roberts2001optimal}, we tune the proposal for the random-walk
Metropolis algorithm using a shorter run of the Markov chain. Specifically,
we choose the proposal density
\[
q(\theta,\theta')=\mathcal{N}(\theta';\theta,d^{-1/2}2.38\hat{\Sigma}),
\]
where $\hat{\Sigma}$ is the estimated covariance matrix of the posterior
distribution and $\theta\in\mathbb{R}^{d}$. All computations were
performed in the R programming language with C++ code accessed via
the `Rcpp' package \citep{rcpp}. Effective sample sizes were computed
using the `mcmcse' package \citep{mcmcse}.

Using $\gamma_{{\rm perm}}^{N}$ instead of $\gamma_{{\rm simple}}^{N}$
to approximate each $L(\theta)$ in a pseudo-marginal Markov chain
can only decrease the asymptotic variance of ergodic averages of functions
$\varphi$ with ${\rm var}_{\pi}(\varphi)<\infty$. This is a consequence
of \citet[Theorem~10]{andrieu2016establishing} and Theorem~\ref{thm:permanentisbetter}.
Using $\gamma_{{\rm recycle}}^{N}$ does not have the same guarantee
in general, but Theorem~\ref{thm:dg_expected_second_moment} suggests
that if the estimators perform similarly for a set of $\theta$ with
large posterior mass, then this should result in greatly improved
performance over $\gamma_{{\rm simple}}^{N}$ for large $n$.

\subsection{Approximate Bayesian computation: g-and-k model}

Approximate Bayesian computation (ABC) is a branch of simulation-based
inference used when the likelihood function cannot be evaluated pointwise
but one can simulate from the model for any value of the statistical
parameter. While there are a number of variants, in general the methodology
involves comparing a summary statistic associated with the observed
data with summary statistics associated with pseudo-data simulated
using different parameter values \citep[see][for a recent review]{Marin2012}.
When the data are modelled as $n$ observations of i.i.d.\  random
variables with distribution $\mu$, it is commonplace to summarize
the data using some fixed-dimensional summary statistic independent
of $n$, for computational rather than statistical reasons. This summarization,
or dimension reduction, can in principle involve little loss of information
about the parameters \textemdash{} in exponential families sufficient
statistics of fixed dimension exist and could be computed or approximated
\textemdash{} but in practice this is not always easy to achieve.
An alternative approach that we adopt here is to eschew dimension
reduction altogether and treat the model as a standard latent variable
model, essentially using the noisy ABC methodology of \citet{fearnhead2012constructing}.
This may be viewed as an alternative to the construction of summaries
using the Wasserstein distance recently proposed by \citet{bernton2017inference}.
One possible outcome of this is that less data may be required to
achieve a given degree of posterior concentration; a theoretical treatment
of this is beyond the scope of this paper.

The $g$-and-$k$ distribution has been used as an example application
for ABC methods since \citet{allingham2009bayesian}. The distribution
is parameterized by $\theta=(A,B,g,k)$ and a sample $X$ from this
distribution can be expressed as 
\[
X=A+B\left\{ 1+c\frac{1-\exp(-gZ)}{1+\exp(-gZ)}\right\} \left\{ 1+Z^{2}\right\} ^{k}Z,
\]
where $Z\sim\mathcal{N}(0,1)$ is a standard normal random variable
and we fix the value $c=4/5$. We consider here observations $y_{1},\ldots,y_{n}$
of $n$ independent random variables $Y_{i}=X_{i}+\epsilon U_{i}$
where $X_{i}\overset{i.i.d.}{\sim}g\text{-and-}k(\theta_{0})$ and
$U_{i}\overset{i.i.d.}{\sim}{\rm Uniform}(-\epsilon,\epsilon)$ with
$\theta_{0}=(3,1,2,0.5)$ and $\epsilon=1/5$. The values of $c$
and $\theta_{0}$ follow \citet{allingham2009bayesian}. We let $\mu_{\theta}$
denote the distribution of $X\sim g\text{-and-}k(\theta)$, and define
$G_{p}=\mathbb{I}_{(y_{p}-\epsilon,y_{p}+\epsilon)}$, so that $\gamma(\theta)=\prod_{p=1}^{n}\mu_{\theta}(G_{p})$
is equivalent to the likelihood $L(\theta)$ associated with $\theta$.
We take $n=100$ and, following \citet{allingham2009bayesian}, we
put independent ${\rm Uniform}(0,10)$ priors on each component of
$\theta$. In order to have a relative variance of $\gamma_{{\rm recycle}}^{N}(\theta_{0})$
of roughly $2$, it was sufficient to take $N=100n=10^{4}$ whereas
for $\gamma_{{\rm simple}}^{N}(\theta_{0})$ we required $N=100n^{2}=10^{6}$.
Using both estimators resulted in very similar Markov chains, but
the computational cost of using the simple estimator was over $30$
times greater; it took $25.6$ hours to simulate a simple chain of
length $10^{6}$ and $8.4$ hours to simulate a recycled chain of
length $10^{7}$. It would have taken over $10$ days to simulate
a simple chain of length $10^{7}$. Figure~\ref{fig:gk_densities}
shows posterior density estimates associated with the recycled chain;
effective sample sizes for each component were above $80,000$. In
this example, simulating from $\mu$ is approximately $1200$ times
more expensive than evaluating a potential function. Finally, we observe
that the posterior distribution for $\theta$ places most of its mass
near $\theta_{0}$ despite using $n=100$; in contrast \citet{allingham2009bayesian}
used $n=10^{5}$ and Figure~\ref{fig:gk_densities} shows more concentration
overall and better identification of the $g$ parameter than their
Figure~3. This suggests that this type of latent variable approach
may be preferable to dimension-reducing summaries in some i.i.d. ABC
models.

\begin{figure}
\subfloat[$A$]{

\includegraphics[scale=0.8]{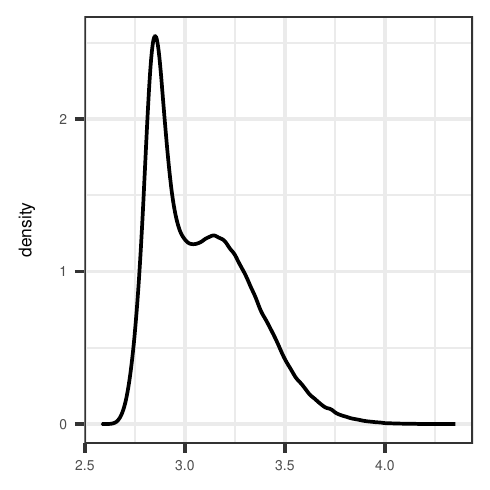}}\subfloat[$B$]{\includegraphics[scale=0.8]{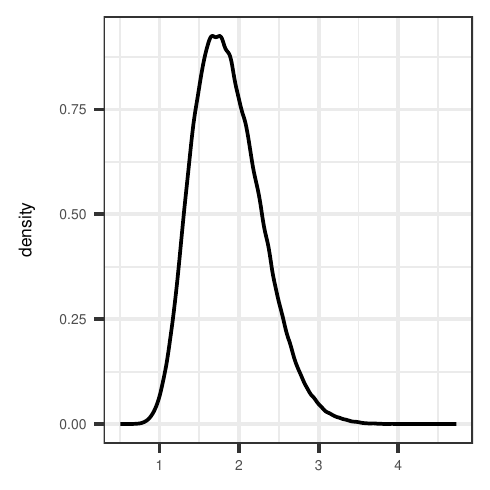}}\subfloat[$g$]{\includegraphics[scale=0.8]{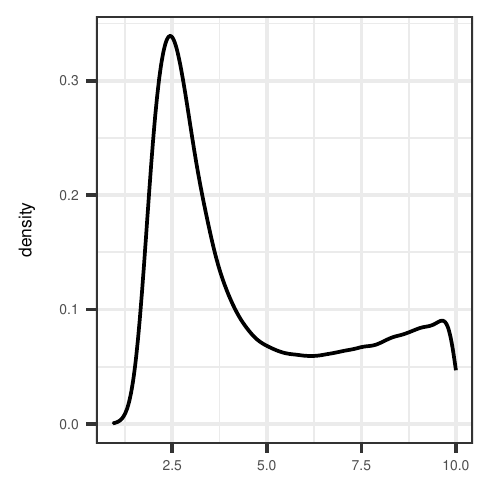}}\subfloat[$k$]{\includegraphics[scale=0.8]{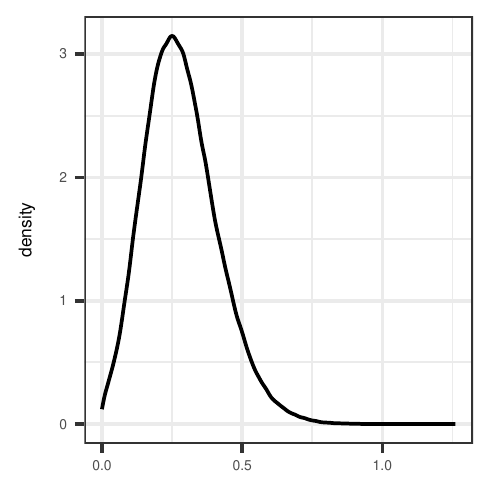}}\caption{\label{fig:gk_densities}Posterior density estimates for $\theta$
for the g-and-k model.}
\end{figure}

\subsection{Poisson-Beta model for gene expression}

\citet{peccoud1995markovian} proposed a continuous-time birth-and-death
process in a random environment to model single-cell gene expression
levels; this model enjoys strong experimental support \citep{delmans2016discrete}.
Letting $V=(V_{t})_{t\geq0}$ denote the $\mathbb{Z}_{+}$-valued
process counting the amount of transcribed mRNA and $W=(W_{t})_{t\geq0}$
denote the $\{0,1\}$-valued process indicating whether the gene is
inactive or active, the Markov process $X=(X_{t})_{t\geq0}=(V,W)$
is described by, with $X_{t}=(v,w)$,
\[
\mathbb{P}\left[X_{t+dt}=(v',w')\right]=\begin{cases}
v\,dt+o(dt) & (v',w')=(v-1,w),\\{}
[k_{{\rm on}}(1-w)+k_{{\rm off}}w]\,dt+o(dt) & (v',w')=(v,1-w),\\
\lambda\,dt+o(dt) & (v',w',w)=(v+1,1,1),\\
0 & \text{otherwise}.
\end{cases}
\]
The statistical parameters $k_{{\rm on}}$, $k_{{\rm off}}$ and $\lambda$
are respectively the rates at which: the gene switches from inactive
to active, the gene switches from active to inactive, and, mRNA is
transcribed when the gene is active. The rate of mRNA degradation
is assumed here to be $1$. \citet{peccoud1995markovian} derive the
probability generating function of the stationary distribution of
this process, from which one obtains the probability mass function
of the stationary marginal distribution of $V$
\[
\mu(v)=\frac{\lambda^{v}\int_{0}^{1}t^{k_{{\rm on}}+v-1}(1-t)^{k_{{\rm off}}-1}e^{-\lambda t}\,dt}{v!\:{\rm Beta}(k_{{\rm on}},k_{{\rm off}})},\qquad v\in\mathbb{Z}_{+}.
\]
As observed by, e.g., \citet{kim2013inferring}, straightforward calculations
provide that this is equivalent to the probability mass function of
$X\sim\text{Poisson-Beta}(\lambda,k_{{\rm on}},k_{{\rm off}})$, defined
hierarchically by $X\mid S\sim{\rm Poisson}(\lambda S)$ where $S\sim{\rm Beta}(k_{{\rm on}},k_{{\rm off}})$.
This model was also mentioned in \citet{wills2013single}, who described
the Poisson-Beta model directly.

In an experiment, one might observe in the stationary regime mRNA
counts with noise for $n$ independent cells, which can therefore
be modelled as $n$ independent random variables $Y_{1},\ldots,Y_{n}$
with distribution $Y_{i}=X_{i}+\sigma Z_{i}$ where $X_{i}\overset{{\rm ind}}{\sim}\text{Poisson-Beta}(\lambda,k_{{\rm on}},k_{{\rm off}})$
and $Z_{1},\ldots,Z_{n}$ are independent standard normal random variables.
Hence, this can be viewed as a latent variable model with $\theta=(\lambda,k_{{\rm on}},k_{{\rm off}})$,
$\mu_{\theta}=\text{Poisson-Beta}(\theta)$ and $G_{p}(x)=\mathcal{N}(y_{p};x,\sigma^{2})$,
and likelihood function values $L(\theta)$ are exactly of the form
described in Remark~\ref{rem:likelihood-function-lvarmod}. We simulated
data $y_{1},\ldots,y_{n}$ with $n=1000$, $\theta_{0}=(500,2,8)$
and $\sigma=5$, and proceeded to conduct Bayesian inference via pseudo-marginal
MCMC with independent exponential priors on $\lambda$, $k_{{\rm on}}$
and $k_{{\rm off}}$ with means of $1000$, $10$ and $10$, respectively.
In order to have a relative variance of $\gamma_{{\rm recycle}}^{N}(\theta_{0})$
of roughly $2$, it was sufficient to take $N=20n=2\times10^{4}$
whereas for $\gamma_{{\rm simple}}^{N}(\theta_{0})$ we required $N=20n^{2}=2\times10^{7}$.
Using both estimators resulted in very similar Markov chains, but
the computational cost of using the simple estimator was approximately
$600$ times greater; it took $12.25$ hours to simulate a simple
chain of length $10^{4}$ and $6.15$ hours to simulate a recycled
chain of length $3\times10^{6}$. It would have taken over $153$
days to simulate a simple chain of length $3\times10^{6}$. Figure~\ref{fig:bp_densities}
shows posterior density estimates associated with the recycled chain;
effective sample sizes for each component were above $23,000$. In
this example, simulating from $\mu$ is approximately $2700$ times
more expensive than evaluating a potential function.

\begin{figure}
\subfloat[$\lambda$]{\includegraphics[scale=1.1]{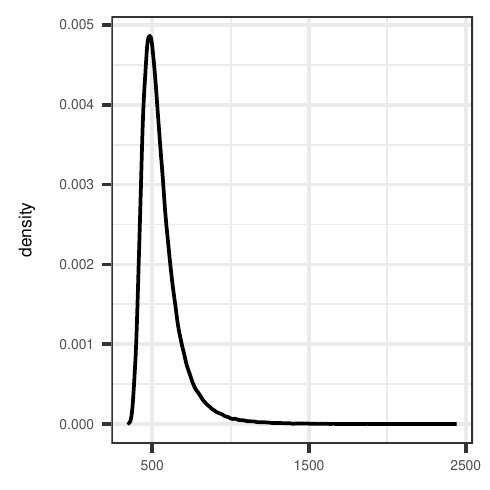}}\subfloat[$k_{{\rm on}}$]{\includegraphics[scale=1.1]{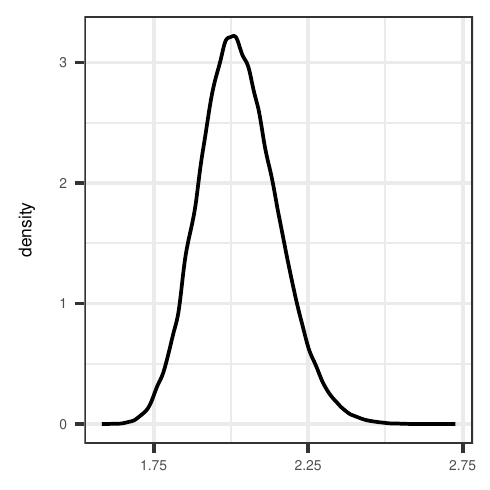}}\subfloat[$k_{{\rm off}}$]{\includegraphics[scale=1.1]{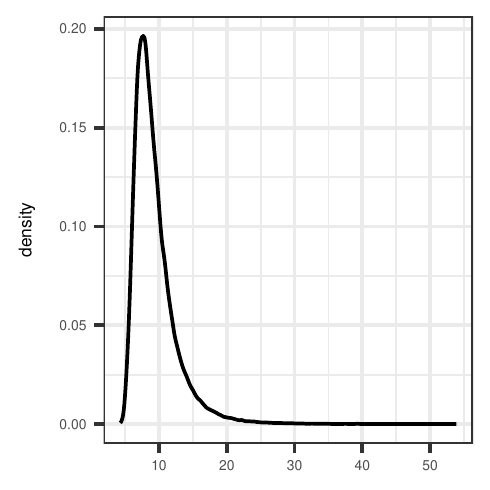}}\caption{\label{fig:bp_densities}Posterior density estimates for $\theta$
for the Poisson-Beta model.}
\end{figure}

\section{Discussion\label{sec:Discussion}}

We have proposed an unbiased estimator of a product of expectations
that involves using, or recycling, most of the random variables simulated.
This results in considerable decreases in the computational time required
to approximate such a product accurately when the number of terms
in the product, $n$, is large and the computational cost of simulating
random variables is significantly larger than that of evaluating functions
involved in the terms. In latent variable models, we have shown that
the number of samples $N$ required for a given relative variance
is proportional to $n$, while for a simple estimator one requires
$N$ to be proportional to $n^{2}$. We have demonstrated that the
use of the recycled estimator proposed here successfully reduces computational
time for Bayesian inference using pseudo-marginal Markov chain Monte
Carlo from days or months to hours in some situations. It would be
interesting to see if the methodology could be combined with the correlated
particle filter methodology of \citet{deligiannidis2015correlated}
to bring further improvements.

Relating the results on numbers of samples required to common notions
of asymptotic time complexity requires some care. For a given relative
variance in the setting of Theorem~\ref{thm:dg_expected_second_moment},
one can choose $\alpha$ such that the following approximately holds.
The number of samples required for the recycled estimator is $\alpha n$
and the number of function evaluations is slightly less than $\alpha n^{2}$
while for the simple estimator we require $\alpha n^{2}$ samples
and $\alpha n^{2}$ function evaluations. The computational time for
the recycled estimator can be expressed as $\alpha n(c_{s}+nc_{g}+nc_{r})$
where $c_{s}$ is the cost of sampling from $\mu$, $c_{g}$ the cost
of evaluating a potential function, and $c_{r}$ is the problem-independent
time per particle associated with Step 2(b) in Algorithm~\ref{alg:recycle}.
For the simple estimator, the computational time is $\alpha n^{2}(c_{s}+c_{g})$
and so the recycled estimator is $(c_{g}+c_{s})/(c_{g}+c_{r})$ times
faster than the simple estimator as $n\rightarrow\infty$, so that
the improvement depends almost entirely on the relative differences
between $c_{s}$, $c_{g}$ and $c_{r}$. In both our applications,
$c_{s}$ is over a thousand times larger than $c_{g}$.

There are alternative unbiased approximations of the permanent of
a rectangular matrix that could be used in place of the approach due
to \citet{Kuznetsov1996}. In particular, it is straightforward to
extend the algorithm of \citet{kou2009approximating} to the rectangular
case. However, we have found the corresponding approximation to be
orders of magnitude worse than \citeauthor{Kuznetsov1996}'s for the
rectangular matrices used here. This is due to the fact that \citeauthor{kou2009approximating}'s
algorithm is specifically designed to overcome deficiencies of \citeauthor{Kuznetsov1996}'s
algorithm for square matrices by emphasizing the importance of large
values in relation to others in the same column. In the rectangular
case, \citeauthor{kou2009approximating}'s algorithm overcompensates
in this regard as this consideration is less important. There are
also much more computationally expensive approximations of the permanent,
such as \citet{wang2016monte}, which may be useful in situations
where simulations are very expensive in comparison to function evaluations.

It would be of interest to obtain accurate, general lower bounds for
the second moment of $\gamma_{{\rm perm}}^{N}$ to complement the
upper bounds for $\gamma_{{\rm recycle}}^{N}$, particularly in the
setting of Example~\ref{example:latent-var-model} to complement
Theorem~\ref{thm:dg_expected_second_moment}. We have been able to
show that in the setting of Proposition~\ref{prop:independent_Gs},
there we have $\mathbb{E}\left[(\gamma_{{\rm perm}}^{N}/\gamma)^{2}\right]\geq\prod_{p=1}^{n}\left[1+c_{p}/N\right]$
but the argument did not extend naturally to the setting of Theorem~\ref{thm:dg_expected_second_moment}.
Finally, it is straightforward to define $\gamma_{{\rm recycle}}^{N}$
alternatively by choosing a permutation $\sigma$ of $\left\llbracket 1,n\right\rrbracket $
according to any distribution and re-ordering the $G_{1},\ldots,G_{n}$
as $G_{\sigma(1)},\ldots,G_{\sigma(n)}$. The corresponding condition
to (\ref{eq:var.finite.cond}), if the distribution for $\sigma$
places mass on every possible permutation of $\left\llbracket 1,n\right\rrbracket $,
is then $\max_{p\in\left\llbracket 1,n\right\rrbracket ,\:B\subseteq\left\llbracket 1,n\right\rrbracket }\mu(G_{p}\prod_{j\in B}G_{j})<\infty$.

It is straightforward to define a recycled estimator of a product
of $n$ expectations, each with respect to a different distribution.
Letting $\mu_{1},\ldots,\mu_{n}$ denote the distributions, one can
define a common dominating probability distribution $\tilde{\mu}$
and take, for each $p\in\left\llbracket 1,n\right\rrbracket $, $\tilde{G}{}_{p}=G_{p}\cdot{\rm d}\mu_{p}/{\rm d}\tilde{\mu}$
so that $\tilde{\mu}(\tilde{G}_{p})=\mu_{p}(G_{p})$. That is, one
can re-express the product of expectations as a product of expectations
all with respect to $\tilde{\mu}$. The results of Sections~\ref{sec:The-associated-permanent}
and~\ref{sec:Scaling-of-the} then apply, and the recycled estimator
could be very useful when $\tilde{\mu}(\tilde{G}_{p}^{2})/\tilde{\mu}(\tilde{G}_{p})^{2}$
is not too large for any $p\in\left\llbracket 1,n\right\rrbracket $.
One can also compare variances of the simple estimator with the recycled
estimator through Proposition~\ref{prop:simple_analysis} and Theorem~\ref{thm:approx_consistency},
even though in this case the simple estimator would use blocks of
independent random variables from $\mu_{1},\ldots,\mu_{n}$ whereas
the recycled estimator would use independent $\tilde{\mu}$-distributed
random variables.

\appendix

\section*{Acknowledgments}

Some of this research was undertaken while all three authors were
at the University of Warwick. The authors acknowledge helpful discussions
and comments from Christophe Andrieu, Christopher Jennison and Matti
Vihola. AL supported by The Alan Turing Institute under the EPSRC
grant EP/N510129/1. GZ supported in part by an EPSRC Doctoral Prize
fellowship and by the European Research Council (ERC) through StG
``N-BNP'' 306406.

\section*{Appendices}

\section{Proof of Theorem~\ref{thm:permanentisbetter}}

The following lemma provides a sufficient condition for two random
variables to be convex-ordered, which is useful for our analysis.
\begin{lem}
\label{lem:suff_cx_cond}Let $X$ and $Y$ be random variables such
that $\mathbb{E}[X]$ and $\mathbb{E}[Y]$ are well-defined. Then
$X\preceq_{{\rm cx}}Y$ if there exists a probability space with random
variables $X'$ and $Y'$ equal in distribution respectively to $X$
and $Y$, and an additional random variable $Z'$ such that $\mathbb{E}\left[Y'\mid Z'\right]=X'$
almost surely.
\end{lem}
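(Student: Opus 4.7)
The plan is to reduce the result to a direct application of conditional Jensen's inequality. First I would fix an arbitrary convex function $\phi:\mathbb{R}\to\mathbb{R}$ for which $\mathbb{E}[\phi(X)]$ and $\mathbb{E}[\phi(Y)]$ are well-defined, since establishing $\mathbb{E}[\phi(X)]\leq\mathbb{E}[\phi(Y)]$ for every such $\phi$ is by definition the statement $X\preceq_{{\rm cx}}Y$. Because $X\stackrel{d}{=}X'$ and $Y\stackrel{d}{=}Y'$, this is equivalent to $\mathbb{E}[\phi(X')]\leq\mathbb{E}[\phi(Y')]$, so I may work entirely on the common probability space supporting $X'$, $Y'$ and $Z'$.

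Next I would use the hypothesis $X'=\mathbb{E}[Y'\mid Z']$ almost surely to rewrite $\phi(X')=\phi(\mathbb{E}[Y'\mid Z'])$ a.s., and then apply conditional Jensen's inequality with the conditioning $\sigma$-algebra $\sigma(Z')$ to obtain $\phi(\mathbb{E}[Y'\mid Z'])\leq\mathbb{E}[\phi(Y')\mid Z']$ almost surely. Taking expectations on both sides and invoking the tower property then delivers
\[
\mathbb{E}[\phi(X')]\leq\mathbb{E}\bigl[\mathbb{E}[\phi(Y')\mid Z']\bigr]=\mathbb{E}[\phi(Y')],
\]
which is the desired inequality.

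The main obstacle, though modest, is ensuring the integrability hypotheses that underlie the conditional Jensen step. This requires verifying that $\mathbb{E}[\phi(Y')\mid Z']$ makes sense as a (possibly extended-valued) random variable; this follows from the assumption that $\mathbb{E}[\phi(Y)]$ is well-defined together with the fact that any convex $\phi:\mathbb{R}\to\mathbb{R}$ is bounded below by an affine function, allowing positive and negative parts to be handled separately. The boundary cases $\mathbb{E}[\phi(Y)]=+\infty$ make the target inequality trivial, while $\mathbb{E}[\phi(Y)]=-\infty$ cannot occur under the affine lower bound once $\mathbb{E}[Y']=\mathbb{E}[X']$ is finite, or can be excluded by restricting to $\phi$ for which both expectations are well-defined as in the standard definition of convex order. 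Once these technicalities are dispatched, the proof reduces to the one-line appeal to conditional Jensen sketched above.
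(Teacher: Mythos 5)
Your proof is correct and follows essentially the same route as the paper's: apply conditional Jensen's inequality given $Z'$ and then the tower property to get $\mathbb{E}[\phi(Y')]=\mathbb{E}[\mathbb{E}[\phi(Y')\mid Z']]\geq\mathbb{E}[\phi(\mathbb{E}[Y'\mid Z'])]=\mathbb{E}[\phi(X')]$. Your extra care with the integrability edge cases is a welcome addition that the paper's two-line proof omits (the paper even has a harmless typo, writing $\mathbb{E}[Y'\mid X']$ where $\mathbb{E}[Y'\mid Z']$ is meant).
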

\begin{proof}
For convex $\phi$, Jensen's inequality provides
\[
\mathbb{E}\left[\phi(Y')\right]=\mathbb{E}\left[\mathbb{E}\left[\phi(Y')\mid Z'\right]\right]\geq\mathbb{E}\left[\phi(\mathbb{E}\left[Y'\mid X'\right])\right]=\mathbb{E}\left[\phi(X')\right],
\]
so $\mathbb{E}\left[\phi(Y)\right]\geq\mathbb{E}\left[\phi(X)\right]$.
\end{proof}
\begin{rem}
Lemma~\ref{lem:suff_cx_cond} is related to the deeper and well-known
Strassen Representation Theorem \citep[Theorem~8]{strassen1965existence}
which states, with the notation of Lemma~\ref{lem:suff_cx_cond},
that $X\preceq_{{\rm cx}}Y$ if and only if $\mathbb{E}\left[Y'\mid X'\right]=X'$
almost surely. In fact, $\mathbb{E}\left[Y'\mid Z'\right]=X'$ implies
$\mathbb{E}\left[Y'\mid X',Z'\right]=X'$, which is equivalent to
a conditional convex order between $X$ and $Y$, and which implies
$X\preceq_{{\rm cx}}Y$ \citep[cf.][Proposition~2.1(ii)]{leskela2014conditional}.
\end{rem}
\begin{lem}
\label{lem:psi_expression}Let $R$ be a $\left\llbracket 1,N\right\rrbracket ^{n}$-valued
random variable independent of $\zeta$. Then
\[
\mathbb{E}\left[\prod_{p=1}^{n}\bar{G}_{p}(\zeta_{p})\bar{G}_{p}(\zeta_{R_{p}})\right]=\mathbb{E}\left[\psi_{N}(R)\right].
\]
\end{lem}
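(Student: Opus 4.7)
The plan is to condition on $R$ first, then use the mutual independence of the particles $\zeta_1,\ldots,\zeta_N$ to factor the inner expectation into per-particle integrals that match the definition of $\psi_N$.

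Concretely, since $R$ is independent of $\zeta$, the tower property gives
\[
\mathbb{E}\left[\prod_{p=1}^{n}\bar{G}_{p}(\zeta_{p})\bar{G}_{p}(\zeta_{R_{p}})\right]
=\mathbb{E}\!\left[\mathbb{E}\!\left[\prod_{p=1}^{n}\bar{G}_{p}(\zeta_{p})\bar{G}_{p}(\zeta_{R_{p}})\,\Big|\,R\right]\right],
\]
so it suffices to show that, for each deterministic $r\in\llbracket 1,N\rrbracket^{n}$,
\[
\mathbb{E}\!\left[\prod_{p=1}^{n}\bar{G}_{p}(\zeta_{p})\bar{G}_{p}(\zeta_{r_{p}})\right]=\psi_{N}(r).
\]

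To do this, I would regroup the product by particle index. The $2n$ factors on the left-hand side each depend on exactly one of the $\zeta_i$: the factor $\bar{G}_p(\zeta_p)$ (for $p\in\llbracket 1,n\rrbracket$) depends on $\zeta_p$, and the factor $\bar{G}_p(\zeta_{r_p})$ depends on $\zeta_{r_p}$. Collecting together all factors depending on a given $\zeta_i$, one finds that for $i\in\llbracket 1,n\rrbracket$ the associated factor is $\bar{G}_i(\zeta_i)\prod_{j:r_j=i}\bar{G}_j(\zeta_i)$, while for $i\in\llbracket n+1,N\rrbracket$ the associated factor is $\prod_{j:r_j=i}\bar{G}_j(\zeta_i)$ (this product being $1$ if no $j$ satisfies $r_j=i$, consistent with the stated convention). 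Because $\zeta_1,\ldots,\zeta_N$ are independent $\mu$-distributed random variables, the expectation factorises into a product of per-particle expectations, each of which is a $\mu$-integral:
\[
\mathbb{E}\!\left[\prod_{p=1}^{n}\bar{G}_{p}(\zeta_{p})\bar{G}_{p}(\zeta_{r_{p}})\right]
=\left\{\prod_{i=1}^{n}\mu\!\left(\bar{G}_{i}\prod_{j:r_{j}=i}\bar{G}_{j}\right)\right\}\left\{\prod_{i=n+1}^{N}\mu\!\left(\prod_{j:r_{j}=i}\bar{G}_{j}\right)\right\},
\]
which is exactly $\psi_N(r)$ as defined in \eqref{eq:psiN_def}. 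Taking expectation over $R$ completes the proof.

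There is no real obstacle here; the argument is essentially bookkeeping, and the main thing to be careful about is verifying that each $\zeta_i$ appears in precisely the factors listed above (so the regrouping is a genuine rearrangement, not merely an inequality) and handling the empty-product convention consistently when no $j$ satisfies $r_j=i$.
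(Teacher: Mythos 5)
Your proof is correct and follows essentially the same route as the paper: condition on $R$ via the tower property, regroup the factors by particle index, and use the independence of $\zeta_1,\ldots,\zeta_N$ to factor the expectation into the per-particle $\mu$-integrals defining $\psi_N(r)$. Nothing is missing.
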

\begin{proof}
From the law of total expectation $\mathbb{E}\left[\prod_{p=1}^{n}\bar{G}_{p}(\zeta_{p})\bar{G}_{p}(\zeta_{R_{p}})\right]=\mathbb{E}\left[\mathbb{E}\left[\prod_{p=1}^{n}\bar{G}_{p}(\zeta_{p})\bar{G}_{p}(\zeta_{R_{p}})\mid R\right]\right]$,
and by collecting together terms involving the independent $\zeta_{1},\ldots,\zeta_{N}$,
for any $r\in\left\llbracket 1,N\right\rrbracket ^{n}$,
\begin{align*}
\mathbb{E}\left[\prod_{p=1}^{n}\bar{G}_{p}(\zeta_{p})\bar{G}_{p}(\zeta_{r_{p}})\mid R=r\right] & =\mathbb{E}\left[\left\{ \prod_{p=1}^{n}\bar{G}_{p}(\zeta_{p})\prod_{j:r_{j}=p}\bar{G}_{j}(\zeta_{p})\right\} \left\{ \prod_{i=n+1}^{N}\prod_{j:r_{j}=i}\bar{G}_{j}(\zeta_{i})\right\} \right]\\
 & =\left\{ \prod_{p=1}^{n}\mathbb{E}\left[\bar{G}_{p}(\zeta_{p})\prod_{j:r_{j}=p}\bar{G}_{j}(\zeta_{p})\right]\right\} \left\{ \prod_{i=n+1}^{N}\mathbb{E}\left[\prod_{j:r_{j}=i}\bar{G}_{j}(\zeta_{i})\right]\right\} \\
 & =\left\{ \prod_{p=1}^{n}\mu\left(\bar{G}_{p}\prod_{j:r_{j}=p}\bar{G}_{j}\right)\right\} \left\{ \prod_{i=n+1}^{N}\mu\left(\prod_{j:r_{j}=i}\bar{G}_{j}\right)\right\} .\qedhere
\end{align*}
\end{proof}
\begin{proof}[Proof of Theorem~\ref{thm:permanentisbetter}]
The lack-of-bias follows from the fact that that $\gamma_{{\rm perm}}^{N}$
is a U-statistic. Recalling that $N=Mn$, let $C(N,n)$ be the set
of partitions of $\left\llbracket 1,N\right\rrbracket $ such that
each $A\in C(N,n)$ consists of elements of cardinality exactly $M$
denoted $A_{1},\ldots,A_{n}$. Then for any $A\in C(N,n)$, $\gamma_{A}^{N}:=\prod_{p=1}^{n}\frac{1}{N}\sum_{i\in A_{p}}G_{p}(\zeta_{i})$
is equal in distribution to $\gamma_{{\rm simple}}^{N}$. Moreover,
by symmetry we have
\[
\gamma_{{\rm perm}}^{N}=\sum_{k\in P(N,n)}\frac{(N-n)!}{(N)!}\prod_{p=1}^{n}G_{p}(\zeta_{k_{p}})=\sum_{A\in C(N,n)}\frac{1}{|C(N,n)|}\gamma_{A}^{N}=\mathbb{E}\left[\gamma_{S}^{N}\mid\zeta\right],
\]
where in the last equality $S\sim{\rm Uniform}(C(N,n))$. Hence, Lemma~\ref{lem:suff_cx_cond}
implies $\gamma_{{\rm perm}}^{N}\preceq_{{\rm cx}}\gamma_{{\rm simple}}^{N}$.
The consistency follows from the Strong Law of Large Numbers for U-statistics
\citep{hoeffding1961strong}. For the second moment of $\gamma_{{\rm perm}}^{N}/\gamma$,
from the definition (\ref{eq:perm_approx}) of $\gamma_{{\rm perm}}^{N}$,
we have
\begin{align*}
\mathbb{E}\left[\left(\gamma_{{\rm perm}}^{N}/\gamma\right)^{2}\right] & =\left|P(N,n)\right|^{-2}\mathbb{E}\left[\sum_{k\in P(N,n)}\sum_{r\in P(N,n)}\prod_{p=1}^{n}\bar{G}_{p}(\zeta_{k_{p}})\bar{G}_{p}(\zeta_{r_{p}})\right]\\
 & =\left|P(N,n)\right|^{-1}\sum_{k\in P(N,n)}\mathbb{E}\left[\prod_{p=1}^{n}\bar{G}_{p}(\zeta_{p})\bar{G}_{p}(\zeta_{k_{p}})\right]=\mathbb{E}\left[\prod_{p=1}^{n}\bar{G}_{p}(\zeta_{p})\bar{G}_{p}(\zeta_{K_{p}})\right],
\end{align*}
the second equality following from the exchangeability of $\zeta$.
Applying Lemma~\ref{lem:psi_expression} we obtain
\begin{align*}
\mathbb{E}\left[\prod_{p=1}^{n}\bar{G}_{p}(\zeta_{p})\bar{G}_{p}(\zeta_{K_{p}})\right] & =\mathbb{E}\left[\left\{ \prod_{p=1}^{n}\mu\left(\bar{G}_{p}\prod_{j:K_{j}=p}\bar{G}_{j}\right)\right\} \left\{ \prod_{i=n+1}^{N}\mu\left(\prod_{j:K_{j}=i}\bar{G}_{j}\right)\right\} \right]\\
 & =\mathbb{E}\left[\prod_{p=1}^{n}\mu\left(\bar{G}_{p}\prod_{j:K_{j}=p}\bar{G}_{j}\right)\right],
\end{align*}
the second equality following since $\mu(\bar{G}_{p})=1$ for each
$p\in\left\llbracket 1,n\right\rrbracket $ and $K$ being $P(N,n)$-valued
implies $\max_{i\in\left\llbracket n+1,N\right\rrbracket }\left|\{j:K_{j}=i\}\right|\leq1$.
For the last part, assume first that (\ref{eq:second_moment_condition})
holds. We observe then that $\max_{k\in P(N,n),p\in\left\llbracket 1,n\right\rrbracket }\mu(\bar{G}_{p}\prod_{j:k_{j}=p}\bar{G}_{j})<\infty$
since for any $k\in P(N,n)$ and $p\in\left\llbracket 1,n\right\rrbracket $,
$\left|\{j:k_{j}=p\}\right|\leq1$ and the Cauchy\textendash Schwarz
inequality implies that $\mu(\bar{G}_{p}\bar{G}_{q})^{2}\leq\mu(\bar{G}_{p}^{2})\mu(\bar{G}_{q}^{2})<\infty$
for any $p,q\in\left\llbracket 1,n\right\rrbracket $. It follows
that ${\rm var}(\gamma_{{\rm perm}}^{N})<\infty$. We can write
\[
\mathbb{E}\left[(\gamma_{{\rm perm}}^{N}/\gamma)^{2}\right]=\mathbb{E}\left[\psi_{N}(K)\right]=\mathbb{P}(A)+\mathbb{P}(A^{\complement})\mathbb{E}\left[\psi_{N}(K)\mid A^{\complement}\right],
\]
where $A=\{K\in P(N,n)\::\:\min\{K_{1},\ldots,K_{n}\}>n\}$. We observe
that $\mathbb{P}(A)=\prod_{i=1}^{n}\frac{N-n-i+1}{N-i+1}\rightarrow1$
for fixed $n$ as $N\rightarrow\infty$, and for $k\in P(N,n)$ we
have
\[
\mathbb{E}\left[\psi_{N}(K)\mid A^{\complement}\right]\leq\max_{k\in P(N,n)}\psi_{N}(k)=\max_{k\in P(N,n)}\prod_{p=1}^{n}\mu(\bar{G}_{p}\prod_{j:k_{j}=p}\bar{G}_{j})\leq\prod_{p=1}^{n}\max_{q\in\left\llbracket 1,n\right\rrbracket }\mu\left(\bar{G}_{p}\bar{G}_{q}\right),
\]
where the R.H.S. is finite and independent of $N$. Hence, $\mathbb{E}\left[(\gamma_{{\rm perm}}^{N}/\gamma)^{2}\right]\rightarrow1$
and ${\rm var}(\gamma_{{\rm perm}}^{N})\rightarrow0$, as desired.
Finally, assume that $\mu(G_{q}^{2})=\infty$ for some $q\in\left\llbracket 1,n\right\rrbracket $
and let $k=(1,\ldots,n)$. We have then
\[
\mathbb{E}\left[(\gamma_{{\rm perm}}^{N}/\gamma)^{2}\right]=\mathbb{E}\left[\psi_{N}(K)\right]\geq\left|P(N,n)\right|^{-1}\psi_{N}(k),
\]
and since $\psi_{N}(k)=\prod_{p=1}^{n}\mu(\bar{G}_{p}^{2})=\infty$,
we conclude that ${\rm var}(\gamma_{{\rm perm}}^{N})=\infty$.
\end{proof}

\section{Proof of Theorem~\ref{thm:approx_consistency}}
\begin{proof}[Proof of Theorem~\ref{thm:approx_consistency}]
This is a consequence of Lemmas~\ref{lem:lack-biased-convex-order-permapprox},~\ref{lem:consistency_permapprox},~\ref{lem:second_moment_permapprox},
and~\ref{lem:variance_characterization} below.
\end{proof}
\begin{lem}
\label{lem:lack-biased-convex-order-permapprox}$\mathbb{E}\left[\gamma_{{\rm recycle}}^{N}\mid\zeta\right]=\gamma_{{\rm perm}}^{N}$,
$\mathbb{E}\left[\gamma_{{\rm recycle}}^{N}\right]=\gamma$ and $\gamma_{{\rm perm}}^{N}\preceq_{{\rm cx}}\gamma_{{\rm recycle}}^{N}$.
\end{lem}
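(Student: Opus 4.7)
The plan is to prove the conditional unbiasedness $\mathbb{E}[\gamma_{\rm recycle}^{N}\mid\zeta]=\gamma_{\rm perm}^{N}$ directly from the definitions, after which the other two claims follow immediately. The key observation is that the per-step normalizing constants in the conditional law of $K$ telescope against the corresponding factors in the definition of $\gamma_{\rm recycle}^{N}$.

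First, conditional on $\zeta$, for each $k=(k_{1},\ldots,k_{n})\in P(N,n)$ write $S_{p}^{(k)}:=\sum_{j=1}^{N}G_{p}(\zeta_{j})\mathbb{I}(j\notin\{k_{1},\ldots,k_{p-1}\})$, so that
\[
\gamma_{\rm recycle}^{N}=\prod_{p=1}^{n}\frac{S_{p}^{(K)}}{N-p+1}\quad\text{and}\quad \mathbb{P}[K=k\mid\zeta]=\prod_{p=1}^{n}\frac{G_{p}(\zeta_{k_{p}})}{S_{p}^{(k)}}
\]
whenever $S_{p}^{(k)}>0$ for all $p$. The chain rule multiplication of the conditional probabilities from (\ref{eq:Kp_conditionals_approx}) gives the joint probability, and the restriction to $P(N,n)$ is automatic because the indicators in (\ref{eq:Kp_conditionals_approx}) forbid repetitions. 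The only delicate point is the convention chosen when some $S_{p}^{(k)}$ vanishes: I would argue that any $k$ for which this happens contributes zero on both sides of the chain-rule identity, since the corresponding $G_{p}(\zeta_{k_{p}})=0$ and $\gamma_{\rm recycle}^{N}=0$ on that event.

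Next, multiply $\mathbb{P}[K=k\mid\zeta]$ by the value of $\gamma_{\rm recycle}^{N}$ on $\{K=k\}$ and sum over $k\in P(N,n)$. The $S_{p}^{(k)}$ factors cancel and
\[
\mathbb{E}[\gamma_{\rm recycle}^{N}\mid\zeta]=\sum_{k\in P(N,n)}\prod_{p=1}^{n}\frac{G_{p}(\zeta_{k_{p}})}{N-p+1}=\frac{1}{\prod_{p=1}^{n}(N-p+1)}\sum_{k\in P(N,n)}\prod_{p=1}^{n}G_{p}(\zeta_{k_{p}})=\gamma_{\rm perm}^{N},
\]
using $\prod_{p=1}^{n}(N-p+1)=N!/(N-n)!=|P(N,n)|$. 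Taking a further expectation and invoking part 1 of Theorem~\ref{thm:permanentisbetter} yields $\mathbb{E}[\gamma_{\rm recycle}^{N}]=\mathbb{E}[\gamma_{\rm perm}^{N}]=\gamma$.

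Finally, the convex order $\gamma_{\rm perm}^{N}\preceq_{\rm cx}\gamma_{\rm recycle}^{N}$ is an immediate application of Lemma~\ref{lem:suff_cx_cond}: take $Y'=\gamma_{\rm recycle}^{N}$ with $Z'=\zeta$, and note that $\mathbb{E}[Y'\mid Z']=\gamma_{\rm perm}^{N}$ by what has just been shown, which is distributionally identical to the desired $X'$. The only substantive step is the telescoping in the second paragraph; the rest is bookkeeping, and the single subtle point is the zero-denominator convention, which the paper has already defused by noting that $\gamma_{\rm recycle}^{N}=0$ on that event.
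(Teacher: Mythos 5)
Your proposal is correct and follows the same route as the paper: the paper's proof is exactly the telescoping of the normalizing constants in (\ref{eq:Kp_conditionals_approx}) against the factors in (\ref{eq:perm_approx_tilde}), followed by the tower property and Lemma~\ref{lem:suff_cx_cond} with $Z'=\zeta$. You have merely made explicit the chain-rule factorization and the zero-denominator case (where indeed both $\gamma_{{\rm recycle}}^{N}$ and the corresponding summand $\prod_{p}G_{p}(\zeta_{k_{p}})$ vanish), which the paper leaves implicit.
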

\begin{proof}
Combining (\ref{eq:perm_approx_tilde}) and (\ref{eq:Kp_conditionals_approx})
we obtain,
\[
\mathbb{E}\left[\gamma_{{\rm recycle}}^{N}\mid\zeta\right]=\left|P(N,n)\right|^{-1}\sum_{k\in P(N,n)}\prod_{p=1}^{n}G_{p}(\zeta_{k_{p}})=\gamma_{{\rm perm}}^{N}.
\]
It follows that $\mathbb{E}\left[\gamma_{{\rm recycle}}^{N}\right]=\mathbb{E}\left[\gamma_{{\rm perm}}^{N}\right]=\gamma$
and by Lemma~\ref{lem:suff_cx_cond}, $\gamma_{{\rm perm}}^{N}\preceq_{{\rm cx}}\gamma_{{\rm recycle}}^{N}$.
\end{proof}
\begin{lem}
\label{lem:consistency_permapprox}$\gamma_{{\rm recycle}}^{N}\overset{P}{\rightarrow}\gamma$
as $N\rightarrow\infty$.
\end{lem}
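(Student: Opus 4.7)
The plan is to reduce consistency of the product to componentwise convergence and then combine the factors via the continuous mapping theorem. Write $\gamma_{{\rm recycle}}^{N}=\prod_{p=1}^{n}\hat{\mu}_{p}^{N}$, where
\[
\hat{\mu}_{p}^{N}:=\frac{1}{N-p+1}\sum_{i=1}^{N}G_{p}(\zeta_{i})\mathbb{I}\left(i\notin\{K_{1},\ldots,K_{p-1}\}\right)
\]
is the $p$th factor in (\ref{eq:perm_approx_tilde}). With $n$ fixed, it suffices to show that, for every $p\in\left\llbracket 1,n\right\rrbracket$, $\hat{\mu}_{p}^{N}\overset{P}{\rightarrow}\mu(G_{p})$ as $N\to\infty$; the conclusion $\gamma_{{\rm recycle}}^{N}\overset{P}{\rightarrow}\gamma$ then follows from the continuous mapping theorem applied to the product map on $\mathbb{R}^{n}$.

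Since the indices $K_{1},\ldots,K_{p-1}$ are almost surely distinct by construction, I would use the decomposition
\[
\hat{\mu}_{p}^{N}=\frac{N}{N-p+1}\cdot\frac{1}{N}\sum_{i=1}^{N}G_{p}(\zeta_{i})\;-\;\frac{1}{N-p+1}\sum_{j=1}^{p-1}G_{p}(\zeta_{K_{j}}).
\]
The first summand converges almost surely to $\mu(G_{p})$ by the strong law of large numbers applied to the i.i.d.\ variables $G_{p}(\zeta_{i})$ (integrable since $G_{p}\in L^{1}(\mu)$), combined with $N/(N-p+1)\to 1$. The main obstacle is the residual term, which is not an average of i.i.d.\ quantities because the indices $K_{j}$ depend on $\zeta$ in a non-trivial way through (\ref{eq:Kp_conditionals_approx}).

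To dispose of the residual, I would apply the deterministic bound
\[
0\le\frac{1}{N-p+1}\sum_{j=1}^{p-1}G_{p}(\zeta_{K_{j}})\le\frac{p-1}{N-p+1}\max_{1\le i\le N}G_{p}(\zeta_{i}),
\]
and invoke the standard fact that for i.i.d.\ non-negative random variables with finite mean, $N^{-1}\max_{i}G_{p}(\zeta_{i})\overset{P}{\rightarrow}0$. The latter follows from the union bound and Markov's inequality,
\[
\mathbb{P}\left(\max_{i}G_{p}(\zeta_{i})>\epsilon N\right)\le N\mathbb{P}(G_{p}(\zeta_{1})>\epsilon N)\le\epsilon^{-1}\mathbb{E}[G_{p}(\zeta_{1})\mathbb{I}(G_{p}(\zeta_{1})>\epsilon N)],
\]
and the right-hand side vanishes by dominated convergence because $G_{p}\in L^{1}(\mu)$. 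Since $(p-1)N/(N-p+1)$ stays bounded, the residual is $o_{P}(1)$, hence $\hat{\mu}_{p}^{N}\overset{P}{\rightarrow}\mu(G_{p})$ and the claim follows. Notably, no moment assumption beyond the integrability already built into the problem setup is needed.
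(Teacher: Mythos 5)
Your proposal is correct and follows essentially the same route as the paper's proof: the same factorization of $\gamma_{{\rm recycle}}^{N}$, the same split of each factor into the full empirical average minus a residual over the $p-1$ excluded indices, the same bound of that residual by $(p-1)\max_{i}G_{p}(\zeta_{i})/(N-p+1)$, and the same union-bound-plus-Markov truncation argument showing the maximum is $o_{P}(N)$ under mere integrability. The only cosmetic differences are your use of the continuous mapping theorem and the strong law where the paper invokes Slutsky's theorem and the weak law.
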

\begin{proof}
Let $\tilde{Z}_{p,N}:=(N-p+1)^{-1}\sum_{j=1}^{N}G_{p}(\zeta_{j})\mathbb{I}(j\notin\{K_{1},\dots,K_{p-1}\})$
for $p\in\left\llbracket 1,n\right\rrbracket $, so that $\gamma_{{\rm recycle}}^{N}=\sum_{p=1}^{n}\tilde{Z}_{p,N}$.
We will show $\tilde{Z}_{p,N}\stackrel{P}{\rightarrow}\mu(G_{p})$
for every $p\in\left\llbracket 1,n\right\rrbracket $ and deduce,
by Slutsky's Theorem, that $\gamma_{{\rm recycle}}^{N}\stackrel{P}{\rightarrow}\gamma$.
Since $(N-p+1)/N\rightarrow1$ as $N\rightarrow\infty$, $\tilde{Z}_{p,N}\stackrel{P}{\rightarrow}\mu(G_{p})$
is equivalent to $\tilde{Z}_{p,N}(N-p+1)/N\stackrel{P}{\rightarrow}\mu(G_{p})$.
Since
\[
\frac{N-p+1}{N}\tilde{Z}_{p,N}=\frac{\sum_{j=1}^{N}G_{p}(\zeta_{j})}{N}-\frac{\sum_{i=1}^{p-1}G_{p}(\zeta_{K_{i}})}{N},
\]
and $N^{-1}\sum_{j=1}^{N}G_{p}(\zeta_{j})\stackrel{P}{\rightarrow}\mu(G_{p})$
as $N\to\infty$ by the weak Law of Large Numbers, it suffices to
show that $N^{-1}\sum_{i=1}^{p-1}G_{p}(\zeta_{K_{i}})\stackrel{P}{\rightarrow}0$
as $N\to\infty$. Since $\sum_{i=1}^{p-1}G_{p}(\zeta_{K_{i}})=0$
for $p=1$, it suffices to show the result for $p\geq2$. For any
$\epsilon>0$,
\begin{align*}
\mathbb{P}\left[\frac{\sum_{i=1}^{p-1}G_{p}(\zeta_{K_{i}})}{N}\geq\epsilon\right] & \leq\mathbb{P}\left[\frac{(p-1)}{N}\max_{j\in\left\llbracket 1,N\right\rrbracket }G_{p}(\zeta_{j})\geq\epsilon\right]\\
 & \leq\sum_{j=1}^{N}\mathbb{P}\left[G_{p}(\zeta_{j})\geq\frac{\epsilon N}{p-1}\right]=\frac{p-1}{\epsilon}\frac{N\epsilon}{p-1}\mathbb{P}\left[G_{p}(\zeta_{1})\geq\frac{\epsilon N}{p-1}\right]\\
 & \leq\frac{p-1}{\epsilon}\mathbb{E}\left[G_{p}(\zeta_{1})\mathbb{I}\left\{ G_{p}(\zeta_{1})\geq\frac{\epsilon N}{p-1}\right\} \right].
\end{align*}
Since $\mathbb{E}\left[G_{p}(\zeta_{1})\right]=\mu(G_{p})<\infty$,
the last term converges to $0$ as $N\rightarrow\infty$ since it
is the tail of a convergent integral, and we conclude.
\end{proof}
\begin{lem}
\label{lem:second_moment_permapprox}The second moment of $\gamma_{{\rm recycle}}^{N}/\gamma$
can be expressed as
\[
\mathbb{E}\left[(\gamma_{{\rm recycle}}^{N}/\gamma)^{2}\right]=\mathbb{E}\left[\prod_{p=1}^{n}\bar{G}_{p}(\zeta_{p})\bar{G}_{p}(\zeta_{S_{p}})\right]=\mathbb{E}\left[\psi_{N}(S)\right],
\]
where $S=(S_{1},\ldots,S_{n})$ is a vector of independent random
variables with $S_{p}\sim{\rm Uniform}(\left\llbracket p,N\right\rrbracket )$
for $p\in\left\llbracket 1,n\right\rrbracket $.
\end{lem}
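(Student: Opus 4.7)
The plan is to compute the conditional second moment given $\zeta$ and then convert the auxiliary indexing random variables that appear into the clean product-form law of $S$ by invoking particle exchangeability.

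First, the algorithm makes the conditional law of $K$ given $\zeta$ equal to $\mathbb{P}[K=k\mid\zeta]\propto\prod_{p=1}^{n}\bar{G}_{p}(\zeta_{k_{p}})\mathbb{I}(k_{p}\notin k_{1:p-1})$ (the $\mu(G_{p})$ normalisers cancel). Writing $A_{p}(k_{1:p-1}):=\sum_{j}\bar{G}_{p}(\zeta_{j})\mathbb{I}(j\notin k_{1:p-1})$, we have $\gamma_{{\rm recycle}}^{N}/\gamma=\prod_{p}A_{p}(K_{1:p-1})/(N-p+1)$, and each $A_{p}$ appears in the normaliser of the sampling weight for $K_{p}$, so one factor of $A_{p}$ cancels between the weights and the squared estimator, giving
\[
\mathbb{E}\left[(\gamma_{{\rm recycle}}^{N}/\gamma)^{2}\mid\zeta\right]=\sum_{k\in P(N,n)}\prod_{p=1}^{n}\frac{\bar{G}_{p}(\zeta_{k_{p}})\,A_{p}(k_{1:p-1})}{(N-p+1)^{2}}.
\]
The degenerate case $A_{p}(k_{1:p-1})=0$ handled by (\ref{eq:Kp_conditionals_approx}) is harmless, since then $\bar{G}_{p}(\zeta_{k_{p}})=0$ as well and the corresponding summand vanishes.

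Next, I would reinterpret the sum as an expectation. Using $|P(N,n)|=\prod_{p}(N-p+1)$, take $K\sim{\rm Uniform}(P(N,n))$ independent of $\zeta$ and, conditionally on $K$, let $T=(T_{1},\ldots,T_{n})$ have mutually independent components with $T_{p}\sim{\rm Uniform}(\llbracket 1,N\rrbracket\setminus K_{1:p-1})$, so that $A_{p}(K_{1:p-1})/(N-p+1)=\mathbb{E}[\bar{G}_{p}(\zeta_{T_{p}})\mid\zeta,K]$. The display above then rearranges into
\[
\mathbb{E}\left[(\gamma_{{\rm recycle}}^{N}/\gamma)^{2}\mid\zeta\right]=\mathbb{E}\left[\prod_{p=1}^{n}\bar{G}_{p}(\zeta_{K_{p}})\bar{G}_{p}(\zeta_{T_{p}})\;\Big|\;\zeta\right].
\]

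The final step removes the $K$-dependence in the support of $T$ via exchangeability. I would realise $K$ as the first $n$ entries of a uniform random permutation $\sigma$ of $\llbracket 1,N\rrbracket$ and set $S_{p}:=\sigma^{-1}(T_{p})$. Since $T_{p}\in\{\sigma(p),\ldots,\sigma(N)\}$ we have $S_{p}\in\llbracket p,N\rrbracket$, and given $\sigma$ the $S_{p}$ are independent and uniform on $\llbracket p,N\rrbracket$; this conditional distribution is $\sigma$-free, so unconditionally $S$ has the law stated in the lemma and is independent of $\zeta$. Applying the permutation-invariance of $\mu^{\otimes N}$ conditionally on $(\sigma,S)$ (relabelling $y_{i}=\zeta_{\sigma(i)}$) gives $\mathbb{E}[\prod_{p}\bar{G}_{p}(\zeta_{\sigma(p)})\bar{G}_{p}(\zeta_{\sigma(S_{p})})\mid\sigma,S]=\mathbb{E}[\prod_{p}\bar{G}_{p}(\zeta_{p})\bar{G}_{p}(\zeta_{S_{p}})\mid S]$; integrating against the joint law of $(\sigma,S)$ produces the first claimed equality, and the second follows from Lemma~\ref{lem:psi_expression} applied with $R=S$.

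The main obstacle is the $K$-dependence of the support of $T$: the uniform-permutation representation $K_{p}=\sigma(p)$ is the crucial move that decouples those ranges and, combined with exchangeability of $\zeta$, turns the otherwise awkward joint law of $(K,T)$ into the simple independent-uniform law for $S$.
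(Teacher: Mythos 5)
Your proposal is correct and follows essentially the same route as the paper: the same cancellation of one factor of the sampling normaliser against the squared estimator to get the conditional second moment as a sum over $P(N,n)$, followed by exchangeability of $\zeta$ to reduce every summand to the canonical case and recognise the averaged sums as uniform draws $S_{p}\sim{\rm Uniform}(\left\llbracket p,N\right\rrbracket )$, finishing with Lemma~\ref{lem:psi_expression}. The auxiliary variables $T$ and the random permutation $\sigma$ are just a more explicit bookkeeping device for the relabelling the paper performs in one line.
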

\begin{proof}
We obtain from (\ref{eq:perm_approx_tilde}) and (\ref{eq:Kp_conditionals_approx})
that
\[
\mathbb{E}\left[(\gamma_{{\rm recycle}}^{N}/\gamma)^{2}\biggm|\zeta\right]=\left|P(N,n)\right|^{-2}\sum_{k\in P(N,n)}\prod_{p=1}^{n}\left[\bar{G}_{p}(\zeta_{k_{p}})\sum_{j\in\left\llbracket 1,N\right\rrbracket \setminus\{k_{1},\ldots,k_{p-1}\}}\bar{G}_{p}(\zeta_{j})\right].
\]
From exchangeability of $\zeta$ and the law of total expectation,
we then obtain
\[
\mathbb{E}\left[(\gamma_{{\rm recycle}}^{N}/\gamma)^{2}\right]=\left|P(N,n)\right|^{-1}\mathbb{E}\left[\prod_{p=1}^{n}\bar{G}_{p}(\zeta_{p})\sum_{j=p}^{N}\bar{G}_{p}(\zeta_{j})\right]=\mathbb{E}\left[\prod_{p=1}^{n}\bar{G}_{p}(\zeta_{p})\bar{G}_{p}(\zeta_{S_{p}})\right],
\]
and conclude by applying Lemma~\ref{lem:psi_expression}.
\end{proof}
\begin{lem}
\label{lem:variance_characterization}${\rm var}(\gamma_{{\rm recycle}}^{N})$
is finite and ${\rm var}(\gamma_{{\rm recycle}}^{N})\rightarrow0$
as $N\rightarrow\infty$ if and only if (\ref{eq:var.finite.cond})
holds.
\end{lem}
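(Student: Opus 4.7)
The plan is to reduce everything to analyzing $\mathbb{E}[\psi_N(S)]$ via Lemma~\ref{lem:second_moment_permapprox}, since $\mathrm{var}(\gamma_{{\rm recycle}}^N)$ is finite and tends to $0$ iff the ratio $\mathbb{E}[(\gamma_{{\rm recycle}}^N/\gamma)^2]$ is finite and tends to $1$. The structural fact that drives both directions is that $S_j\ge j$ almost surely, which forces $\{j: S_j=p\}\subseteq\llbracket 1,p\rrbracket$ for every $p\in\llbracket 1,n\rrbracket$ and $\{j: S_j=i\}\subseteq\llbracket 1,n\rrbracket$ for every $i\in\llbracket n+1,N\rrbracket$.

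For the ``if'' direction, assume (\ref{eq:var.finite.cond}) and rewrite each factor of $\psi_N(s)$ in terms of the unnormalized $G_p$. For $p\le n$, the factor $\mu(\bar G_p\prod_{j:s_j=p}\bar G_j)$ equals a ratio whose numerator is $\mu(G_p\prod_{j\in B}G_j)$ with $B\subseteq\llbracket 1,p\rrbracket$, hence finite by (\ref{eq:var.finite.cond}). For $i>n$, the factor is of the form $\mu(\prod_{j\in B}G_j)/\prod_{j\in B}\mu(G_j)$ with $B\subseteq\llbracket 1,n\rrbracket$; letting $p^\star=\max B$ and $B'=B\setminus\{p^\star\}\subseteq\llbracket 1,p^\star\rrbracket$, the numerator equals $\mu(G_{p^\star}\prod_{j\in B'}G_j)$ which is again finite by (\ref{eq:var.finite.cond}). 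Since $\sum_i |\{j:S_j=i\}|=n$, at most $2n$ factors of $\psi_N(s)$ are not equal to $1$, so $\psi_N(s)\le M$ for a constant $M$ depending only on $n$ and the $G_p$, not on $N$. To get convergence to $1$, I would introduce the event
\[
A'=\{S_1,\ldots,S_n\text{ are distinct and all lie in }\llbracket n+1,N\rrbracket\},
\]
on which every factor of $\psi_N(S)$ is either $\mu(\bar G_p)=1$ or $\mu(1)=1$, so $\psi_N(S)=1$. A direct counting argument gives $\mathbb{P}(A')\ge\prod_{p=1}^n (N-n-p+1)/(N-p+1)\to 1$, and splitting $\mathbb{E}[\psi_N(S)]=\mathbb{P}(A')+\mathbb{E}[\psi_N(S)\mathbb{I}_{A'^c}]\le\mathbb{P}(A')+M\,\mathbb{P}(A'^c)$ then gives convergence to $1$.

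For the ``only if'' direction, suppose (\ref{eq:var.finite.cond}) fails: there exist $p^\star\in\llbracket 1,n\rrbracket$ and $B^\star\subseteq\llbracket 1,p^\star\rrbracket$ with $\mu(G_{p^\star}\prod_{j\in B^\star}G_j)=\infty$. I would build an event of positive probability on which $\psi_N(S)=+\infty$. Take
\[
E^\star=\bigl\{S_j=p^\star\text{ for all }j\in B^\star\bigr\}\cap\bigl\{S_j\in\llbracket n+1,N\rrbracket\setminus\{p^\star\}\text{ for all }j\notin B^\star\bigr\}\cap\{\text{these }S_j\text{ distinct}\}.
\]
Since the $S_j$ are independent and each marginal places positive mass on every feasible value, $\mathbb{P}(E^\star)>0$ for $N$ sufficiently large. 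On $E^\star$, the factor indexed by $p^\star$ is exactly $\mu(\bar G_{p^\star}\prod_{j\in B^\star}\bar G_j)=\infty$, while all other factors equal $1$ by construction, giving $\psi_N(S)=\infty$ and hence $\mathbb{E}[\psi_N(S)]=\infty$.

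The main obstacle I anticipate is the bookkeeping in the ``if'' direction for factors with $i>n$: those are not directly covered by (\ref{eq:var.finite.cond}) as stated, and I need the max-index reduction $p^\star=\max B$ to bring them into the form appearing in the hypothesis. The ``only if'' direction is essentially combinatorial once one spots the right event $E^\star$; care is needed only to ensure that the infinite factor is not annihilated by a zero factor, which is why the restriction to distinct values above $n$ (yielding factors equal to $1$) is imposed.
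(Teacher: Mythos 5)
Your proof is correct and follows the paper's own argument essentially step for step: the reduction to $\mathbb{E}[\psi_N(S)]$ via Lemma~\ref{lem:second_moment_permapprox}, the constraint $S_j\ge j$ forcing $\{j:S_j=p\}\subseteq\left\llbracket 1,p\right\rrbracket$, the max-index reduction bringing the factors with $i>n$ into the form covered by (\ref{eq:var.finite.cond}), the uniform bound via at most $2n$ nontrivial factors, and the event of probability $\prod_{p=1}^n\frac{N-n-p+1}{N-p+1}\rightarrow 1$ on which $\psi_N(S)=1$. The only difference is cosmetic: in the necessity direction your single event $E^\star$ treats the cases $p^\star\in B^\star$ and $p^\star\notin B^\star$ uniformly (the paper splits them), and your choice of $S$ visibly respects the support constraint $S_j\ge j$ throughout, which is a small improvement in cleanliness over the paper's second case.
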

\begin{proof}
Assume (\ref{eq:var.finite.cond}) holds. From Lemma~\ref{lem:second_moment_permapprox}
and (\ref{eq:psiN_def}), 
\[
\mathbb{E}\left[(\gamma_{{\rm recycle}}^{N}/\gamma)^{2}\right]=\mathbb{E}\left[\psi_{N}(S)\right]=\mathbb{P}(A)+\mathbb{P}(A^{\complement})\mathbb{E}\left[\psi_{N}(S)\mid A^{\complement}\right],
\]
where $A=\{S\::\:\min_{i}S_{i}>n\;\text{and}\;S_{i}\neq S_{j},\quad\forall i\neq j\}$.
We observe that $\mathbb{P}(A)=\prod_{i=1}^{n}\frac{N-n-i+1}{N-i+1}\rightarrow1$
for fixed $n$ as $N\rightarrow\infty$. We consider first the term
$\prod_{p=1}^{n}\mu(\bar{G}_{p}\prod_{j:S_{j}=p}\bar{G}_{j})$ in
(\ref{eq:psiN_def}). For each $p\in\left\llbracket 1,n\right\rrbracket $,
we can write $\mu(\bar{G}_{p}\prod_{j:S_{j}=p}\bar{G}_{j})=\mu(\bar{G}_{p}\prod_{j\in B}\bar{G}_{j})$
for $B=\{j\in\left\llbracket 1,n\right\rrbracket :S_{j}=p\}$ and
$B\subseteq\left\llbracket 1,p\right\rrbracket $ from the definition
of $S$. Hence,
\[
\prod_{p=1}^{n}\mu(\bar{G}_{p}\prod_{j:S_{j}=p}\bar{G}_{j})\leq\prod_{p=1}^{n}\max_{B\subseteq\left\llbracket 1,p\right\rrbracket }\mu(\bar{G}_{p}\prod_{j\in B}\bar{G}_{j})<\infty,
\]
by (\ref{eq:var.finite.cond}). Now, the term $\prod_{i=n+1}^{N}\mu(\prod_{j:S_{j}=i}\bar{G}_{j})$
is a product of at most $n$ terms different from $1$, each of which
can be written as $\mu(\prod_{j:S_{j}=i}\bar{G}_{j})=\mu(\prod_{j\in\tilde{B}}\bar{G}_{j})$
for some $\tilde{B}\subseteq\left\llbracket 1,n\right\rrbracket $
and hence as $\mu(\prod_{j:S_{j}=i}\bar{G}_{j})=\mu(\bar{G}_{p}\prod_{j\in B}\bar{G}_{j})$
for $p=\max(\tilde{B})$ and $B\subseteq\left\llbracket 1,p\right\rrbracket $.
Therefore,
\[
\prod_{i=n+1}^{N}\mu\left(\prod_{j:S_{j}=i}\bar{G}_{j}\right)\leq\left\{ \max_{p\in\left\llbracket 1,n\right\rrbracket ,B\subseteq\left\llbracket 1,p\right\rrbracket }\mu\left(\bar{G}_{p}\prod_{j\in B}\bar{G}_{j}\right)\right\} ^{n}<\infty
\]
again by (\ref{eq:var.finite.cond}). It follows that 
\[
\mathbb{E}\left[\psi_{N}(S)\mid A^{\complement}\right]\leq\left\{ \max_{p\in\left\llbracket 1,n\right\rrbracket ,B\subseteq\left\llbracket 1,p\right\rrbracket }\mu(\bar{G}_{p}\prod_{j\in B}\bar{G}_{j})\right\} ^{2n},
\]
where the R.H.S. is finite and independent of $N$. Hence, $\mathbb{E}\left[(\gamma_{{\rm recycle}}^{N}/\gamma)^{2}\right]\rightarrow1$
and ${\rm var}(\gamma_{{\rm recycle}}^{N})\rightarrow0$, as desired.

Suppose now that (\ref{eq:var.finite.cond}) does not hold. Then there
exists $q\in\left\llbracket 1,n\right\rrbracket $ and $B\in\left\llbracket 1,q\right\rrbracket $
such that $\mu(G_{q}\prod_{j\in B}G_{j})=\infty$. We treat separately
the case where $q\in B$ and when $q\notin B$. If $q\in B$, then
let $s\in\left\llbracket 1,N\right\rrbracket ^{n}$ be defined by
$s_{p}=q$ for $p\in B$ and $s_{p}=p$ for $p\in\left\llbracket 1,n\right\rrbracket \setminus B$.
It can then be checked that
\[
\psi_{N}(s)=\mu\left(\bar{G}_{q}\prod_{j\in B}\bar{G}_{j}\right)\left[\prod_{p\in B}\mu(\bar{G}_{p})\right]\left[\prod_{p\in\left\llbracket 1,n\right\rrbracket \setminus B}\mu(\bar{G}_{p}^{2})\right],
\]
so $\psi_{N}(s)=\infty$ because $\mu(\bar{G}_{q}\prod_{j\in B}\bar{G}_{j})=\infty$
and the other terms are non-zero. Since $\mathbb{P}(S=s)>0$, where
$S$ is defined in Lemma~\ref{lem:second_moment_permapprox}, it
follows that $\mathbb{E}\left[(\gamma_{{\rm recycle}}^{N}/\gamma)^{2}\right]=\mathbb{E}\left[\psi_{N}(S)\right]=\infty$.
If instead $q\notin B$, then let $s\in\left\llbracket 1,N\right\rrbracket ^{n}$
be defined by $s_{p}=q$ for all $p\in B$, $s_{q}=r$ for some $r\in B\setminus\{q\}$
and $s_{p}=p$ for $p\in\left\llbracket 1,n\right\rrbracket \setminus B$.
It can then be checked that
\[
\psi_{N}(s)=\mu\left(\bar{G}_{q}\prod_{j\in B}\bar{G}_{j}\right)\mu\left(\bar{G}_{r}\bar{G}_{q}\right)\left[\prod_{p\in B\setminus\{r\}}\mu\left(\bar{G}_{p}\right)\right]\left[\prod_{p\in\left\llbracket 1,n\right\rrbracket \setminus\{r\}}\mu\left(\bar{G}_{p}^{2}\right)\right],
\]
so $\psi_{N}(s)=\infty$ because $\mu(\bar{G}_{q}\prod_{j\in B}\bar{G}_{j})=\infty$
and the other terms are non-zero. Since $\mathbb{P}(S=s)>0$, it follows
as before that $\mathbb{E}\left[(\gamma_{{\rm recycle}}^{N}/\gamma)^{2}\right]=\mathbb{E}\left[\psi_{N}(S)\right]=\infty$.
\end{proof}
\begin{proof}[Proof of Corollary~\ref{cor:moments_for_consistency}]
By Theorem~\ref{thm:approx_consistency}, it suffices to show that
$\max_{p\in\left\llbracket 1,n\right\rrbracket }\mu(G_{p}^{n+1})<\infty$
implies (\ref{eq:var.finite.cond}). Consider an arbitrary term $\mu\big(G_{p}\prod_{j\in B}G_{j}\big)$
in (\ref{eq:var.finite.cond}). The generalized Hölder inequality
\citep[see, e.g.,][p. 67]{kufner1977function} implies that 
\[
\mu(G_{p}\prod_{j\in B}G_{j})\leq\mu(G_{p}^{|B|+1})^{1/(|B|+1)}\prod_{j\in S}\mu(G_{j}^{|B|+1})^{1/(|B|+1)}.
\]
Since $|B|+1\leq(n+1)$ we have $\mu(G_{j}^{|B|+1})^{1/(|B|+1)}\leq\mu(G_{j}^{n+1})^{1/(n+1)}$
by applying the Hölder inequality to $G_{j}$ and the constant random
variable equal to 1. Therefore 
\[
\mu(G_{p}\prod_{j\in B}G_{j})\leq\mu(G_{p}^{n+1})^{1/(n+1)}\prod_{j\in B}\mu(G_{j}^{n+1})^{1/(n+1)}\leq\max_{p\in\left\llbracket 1,n\right\rrbracket }\mu(G_{p}^{n+1}),
\]
and so $\max_{p\in\left\llbracket 1,n\right\rrbracket }\mu(G_{p}^{n+1})<\infty$
implies (\ref{eq:var.finite.cond}).
\end{proof}

\section{Proofs of Propositions~\ref{prop:independent_Gs}\textendash \ref{prop:pos_correlatedGs}}
\begin{proof}[Proof of Proposition~\ref{prop:independent_Gs}]
From the assumption it follows that $\mu\big(\bar{G}_{p}\prod_{j:s_{j}=p}\bar{G}_{j}\big)=\mu\big(\bar{G}_{p}^{2}\big)^{\mathbb{I}(s_{p}=p)}$
and $\mu\big(\prod_{j:s_{j}=p}\bar{G}_{j}\big)=1$. Therefore by Lemma~\ref{lem:second_moment_permapprox},
and with $S=(S_{1},\ldots,S_{n})$ a vector of independent random
variables with $S_{p}\sim{\rm Uniform}(\left\llbracket p,N\right\rrbracket )$
for $p\in\left\llbracket 1,n\right\rrbracket $,
\[
\mathbb{E}\left[(\gamma_{{\rm recycle}}^{N}/\gamma)^{2}\right]=\mathbb{E}\left[\psi_{N}(S)\right]=\mathbb{E}\left[\prod_{p=1}^{n}\mu\big(\bar{G}_{p}^{2}\big)^{\mathbb{I}(S_{p}=p)}\right]=\prod_{p=1}^{n}\mathbb{E}\left[\mu\big(\bar{G}_{p}^{2}\big)^{\mathbb{I}(S_{p}=p)}\right].
\]
We conclude from $\mathbb{P}(S_{p}=p)=1/(N-p+1)$ and $c_{p}=\mu\big(\bar{G}_{p}^{2})-1$.
\end{proof}
\begin{proof}[Proof of Proposition~\ref{prop:neg_correlatedGs}]
Almost surely, we have $\sum_{i=1}^{N}G_{p}(\zeta_{i})G_{q}(\zeta_{i})=0$
so
\[
\sum_{j=1}^{N}G_{p}(\zeta_{j})\mathbb{I}\left(j\notin\{K_{1},\ldots,K_{p-1}\}\right)=\sum_{j=1}^{N}G_{p}(\zeta_{j}),
\]
and hence
\[
\gamma_{{\rm recycle}}^{N}=\prod_{p=1}^{n}\frac{1}{N-p+1}\sum_{j=1}^{N}G_{p}(\zeta_{j})=\frac{1}{\left|P(N,n)\right|}\sum_{k\in P(N,n)}\prod_{p=1}^{n}G_{p}(\zeta_{k_{p}})=\gamma_{{\rm perm}}^{N}.
\]
The inequality holds because $\mu\big(\bar{G}_{p}\prod_{j:s_{j}=p}\bar{G}_{j}\big)\leq\mu\big(\bar{G}_{p}^{2}\big)^{\mathbb{I}(s_{p}=p)}$
and $\mu\big(\prod_{j:s_{j}=p}\bar{G}_{j}\big)\leq1$, following the
same reasoning as in the proof of Proposition~\ref{prop:independent_Gs}.
\end{proof}
\begin{proof}[Proof of Proposition~\ref{prop:pos_correlatedGs}]
Theorem~\ref{thm:approx_consistency} implies that $\mathbb{E}[(\gamma_{{\rm recycle}}^{N}/\gamma)^{2}]\geq\mathbb{E}[(\gamma_{{\rm perm}}^{N}/\gamma)^{2}]$,
and Theorem~\ref{thm:permanentisbetter} implies that $\mathbb{E}[(\gamma_{{\rm perm}}^{N}/\gamma)^{2}]=\mathbb{E}\left[\psi(K)\right]=\mathbb{E}[(1+c_{1})^{Z}]$,
where $K\sim{\rm Uniform}(P(N,n))$ and $Z=\sum_{p=1}^{n}\mathbb{I}(K_{p}\leq n)$.
Jensen's inequality then provides $\mathbb{E}[(1+c_{1})^{Z}]\geq(1+c_{1})^{\mathbb{E}[Z]}$,
and we conclude since $\mathbb{E}[Z]=\sum_{p=1}^{n}\mathbb{E}[\mathbb{I}\{K_{p}\leq n\}]=\sum_{p=1}^{n}\mathbb{P}[K_{p}\leq n]=n^{2}/N$.
\end{proof}

\section{Proof of Proposition~\ref{prop:finite_var_as} and Theorem~\ref{thm:dg_expected_second_moment}}
\begin{proof}[Proof of Proposition~\ref{prop:finite_var_as}]
First assume that $Y_{1},\ldots,Y_{n}\overset{i.i.d.}{\sim}\nu$,
and let $p\in\left\llbracket 1,n\right\rrbracket $ and $B\subseteq\left\llbracket 1,p\right\rrbracket $
be arbitrary, and $\mathsf{E}$ denote expectation w.r.t.\ the law
of $Y_{1},\ldots,Y_{n}$. From $\bar{G}_{p}(x)=g(x,y_{p})/\nu(y_{p})$
for $p\in\left\llbracket 1,n\right\rrbracket $ we obtain 
\begin{align*}
\mathsf{E}\left[\mu(\bar{G}_{p}\prod_{j\in B}\bar{G}_{j})\right] & =\int\nu^{\otimes p}(y_{1:p})\mu({\rm d}x)\frac{g(x,y_{p})}{\nu(y_{p})}\prod_{j\in B}\frac{g(x,y_{j})}{\nu(y_{j})}{\rm d}y_{1:p}\\
 & =\int\mu({\rm d}x)\frac{g(x,y_{p})^{1+\mathbb{I}(p\in B)}}{\nu(y_{p})^{\mathbb{I}(p\in B)}}{\rm d}y_{p},
\end{align*}
which is equal to $1$ if $p\notin B$ and to $\mathsf{E}\left[\mu(\bar{G}_{1}^{2})\right]$
if $p\in B$. It follows that $\mathsf{E}\left[\mu(\bar{G}_{p}\prod_{j\in B}\bar{G}_{j})\right]<\infty$
and so $\mu(\bar{G}_{p}\prod_{j\in B}\bar{G}_{j})$ is finite almost
surely. The extension to $Y_{1},\ldots,Y_{n}\overset{i.i.d.}{\sim}\nu_{0}$
is immediate.
\end{proof}
To prove Theorem~\ref{thm:dg_expected_second_moment} we first need
the following Lemma.
\begin{lem}
\label{lem:expected_muprod}Let $r\in\left\llbracket 1,n\right\rrbracket ^{\ell}$
satisfy $r_{1}<\cdots<r_{\ell}$. Then $\mathsf{E}\left[\mu(\bar{G}_{r_{1}}\cdots\bar{G}_{r_{\ell}})\mid Y_{r_{1}+1},\ldots,Y_{n}\right]=\mu(\bar{G}_{r_{2}}\cdots\bar{G}_{r_{\ell}})$.
\end{lem}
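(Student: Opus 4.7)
The plan is to unpack the definition of $\bar{G}_{p}$ in the latent variable setting and exploit the fact that the conditioning set $Y_{r_{1}+1},\ldots,Y_{n}$ contains $Y_{r_{2}},\ldots,Y_{r_{\ell}}$ (since $r_{1}<r_{2}<\cdots<r_{\ell}$), so the only randomness left to integrate out is $Y_{r_{1}}$ itself.

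First I would write, using $\bar{G}_{p}(x)=g(x,Y_{p})/\nu(Y_{p})$ and Tonelli,
\[
\mu(\bar{G}_{r_{1}}\cdots\bar{G}_{r_{\ell}})=\int_{E}\frac{g(x,Y_{r_{1}})}{\nu(Y_{r_{1}})}\prod_{i=2}^{\ell}\frac{g(x,Y_{r_{i}})}{\nu(Y_{r_{i}})}\,\mu({\rm d}x),
\]
noting that only the first factor depends on $Y_{r_{1}}$. Then I would take conditional expectation with respect to $Y_{r_{1}+1},\ldots,Y_{n}$, under which $Y_{r_{1}}$ is independent with marginal density $\nu$, and interchange the conditional expectation with the $\mu$-integral (Fubini, justified by non-negativity):
\[
\mathsf{E}\!\left[\mu(\bar{G}_{r_{1}}\cdots\bar{G}_{r_{\ell}})\mid Y_{r_{1}+1},\ldots,Y_{n}\right]=\int_{E}\left(\int_{\mathsf{Y}}\frac{g(x,y)}{\nu(y)}\nu(y)\,{\rm d}y\right)\prod_{i=2}^{\ell}\frac{g(x,Y_{r_{i}})}{\nu(Y_{r_{i}})}\,\mu({\rm d}x).
\]
The inner integral reduces to $\int_{\mathsf{Y}}g(x,y)\,{\rm d}y=1$ because $g(x,\cdot)$ is a Markov transition density, and the remaining expression is precisely $\mu(\bar{G}_{r_{2}}\cdots\bar{G}_{r_{\ell}})$.

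There is essentially no obstacle here: the assumption $r_{1}<\cdots<r_{\ell}$ is exactly what ensures the conditioning set separates $Y_{r_{1}}$ from the other $Y_{r_{i}}$'s, and the cancellation $\nu(Y_{r_{1}})$ against the density of $Y_{r_{1}}$ is the whole content of the lemma. The only mild care needed is to justify the Fubini step, which follows from non-negativity of $g$ and $\nu$.
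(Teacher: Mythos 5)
Your proof is correct and follows essentially the same route as the paper's: condition out everything except $Y_{r_{1}}$, integrate it against its marginal density $\nu$ so that the normalization $\nu(Y_{r_{1}})$ cancels, and use $\int_{\mathsf{Y}}g(x,y)\,{\rm d}y=1$ after a Tonelli interchange. The paper merely streamlines the bookkeeping by assuming $r_{i}=i$ without loss of generality; the substance is identical.
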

\begin{proof}
Without loss of generality, let $r_{i}=i$ for $i\in\llbracket1,\ell\rrbracket$.
Then, 
\begin{align*}
\mathsf{E}\left[\mu(\bar{G}_{1}\cdots\bar{G}_{\ell})\mid Y_{2},\ldots,Y_{n}\right] & =\int_{\mathsf{Y}}\int_{E}\frac{g(x,y_{1})}{\nu(y_{1})}\left[\prod_{k=2}^{\ell}\bar{G}_{k}(x)\right]\mu({\rm d}x)\nu(y_{1}){\rm d}y_{1}\\
 & =\int_{E}\int_{\mathsf{Y}}g(x,y_{1}){\rm d}y_{1}\left[\prod_{k=2}^{\ell}\bar{G}_{k}(x)\right]\mu({\rm d}x),
\end{align*}
and we conclude since $\int_{\mathsf{Y}}g(x,y_{1}){\rm d}y_{1}=1$.
\end{proof}
\begin{proof}[Proof of Theorem~\ref{thm:dg_expected_second_moment}]
We can write $\mathsf{E}\left[\mathbb{E}\left[(\gamma_{{\rm recycle}}^{N}/\gamma)^{2}\right]\right]=\mathbb{E}\left[\mathsf{E}\left[\psi_{N}(S,Y)\right]\right]$,
where
\[
\psi_{N}(s,y):=\left\{ \prod_{p=1}^{n}\mu\left(\bar{G}_{p}\prod_{j:s_{j}=p}\bar{G}_{j}\right)\right\} \left\{ \prod_{i=n+1}^{N}\mu\left(\prod_{j:s_{j}=i}\bar{G}_{j}\right)\right\} ,\qquad s\in\left\llbracket 1,N\right\rrbracket ^{n},\:y\in\mathsf{Y}^{n},
\]
with $\bar{G}_{p}(x)=g(x,y_{p})/\nu(y_{p})$ for $p\in\left\llbracket 1,n\right\rrbracket $.
Since $S$ and $Y$ are independent, we consider terms of the form
$\mathsf{E}\left[\psi_{N}(s,Y)\right]$, and define
\[
\psi_{N,q}(s_{q:n},y_{q:n}):=\left\{ \prod_{p=q}^{n}\mu\left(\bar{G}_{p}\prod_{j=q}^{p}\bar{G}_{j}^{\mathbb{I}\{s_{j}=p\}}\right)\right\} \left\{ \prod_{i=n+1}^{N}\mu\left(\prod_{j=q}^{n}\bar{G}_{j}^{\mathbb{I}\{s_{j}=i\}}\right)\right\} ,
\]
for $q\in\left\llbracket 1,n\right\rrbracket $, which satisfies $\psi_{N,1}\equiv\psi_{N}$.
We will show that for $s\in\left\llbracket 1,N\right\rrbracket \times\cdots\times\left\llbracket n,N\right\rrbracket $
and $q\in\left\llbracket 1,n-1\right\rrbracket $,
\begin{equation}
\mathsf{E}\left[\psi_{N,q}(s_{q:n},Y_{q:n})\right]=\mathsf{E}\left[\mu(\bar{G}_{q}^{2})\right]^{\mathbb{I}(s_{q}=q)}\mathsf{E}\left[\psi_{N,q+1}(s_{q+1:n},Y_{q+1:n})\right],\label{eq:exp_decomposition_backwards}
\end{equation}
by considering the cases $s_{q}=q$ and $s_{q}\neq q$. If $s_{q}=q$
then,
\begin{eqnarray*}
\mathsf{E}\left[\psi_{N,q}(s_{q:n},Y_{q:n})\right] & = & \mathsf{E}\left[\mathsf{E}\left[\psi_{N,q}(s_{q:n},Y_{q:n})\mid Y_{q+1:n}\right]\right]\\
 & = & \mathsf{E}\left[\mathsf{E}\left[\mu(\bar{G}_{q}^{2})\psi_{N,q+1}(s_{q+1:n},Y_{q+1:n})\mid Y_{q+1:n}\right]\right]\\
 & = & \mathsf{E}\left[\mu(\bar{G}_{q}^{2})\right]\mathsf{E}\left[\psi_{N,q+1}(s_{q+1:n},Y_{q+1:n})\right],
\end{eqnarray*}
while if $s_{q}\neq q$ then,
\begin{eqnarray*}
 &  & \mathsf{E}\left[\psi_{N,q}(s_{q:n},Y_{q:n})\right]=\mathsf{E}\left[\mathsf{E}\left[\psi_{N,q}(s_{q:n},Y_{q:n})\mid Y_{q+1:n}\right]\right]\\
 & = & \mathsf{E}\left[\mathsf{E}\left[\mu(\bar{G}_{q})\left\{ \prod_{p=q+1}^{n}\mu\left(\bar{G}_{p}\prod_{j=q}^{p}\bar{G}_{j}^{\mathbb{I}\{s_{j}=p\}}\right)\right\} \left\{ \prod_{i=n+1}^{N}\mu\left(\prod_{j=q}^{n}\bar{G}_{j}^{\mathbb{I}\{s_{j}=i\}}\right)\right\} \mid Y_{q+1:n}\right]\right]\\
 & = & \mathsf{E}\left[\psi_{N,q+1}(s_{q+1:n},Y_{q+1:n})\right],
\end{eqnarray*}
where the last equality follows from $\mu(\bar{G}_{q})=1$ and Lemma~\ref{lem:expected_muprod}.
Hence, (\ref{eq:exp_decomposition_backwards}) holds for all $S$
with positive probability and $q\in\left\llbracket 1,n-1\right\rrbracket $.
From $\mathsf{E}\left[\psi_{N,n}(s_{n},Y_{n})\right]=\mathsf{E}\left[\mu(\bar{G}_{n}^{2})\right]^{\mathbb{I}(s_{n}=n)}$
and (\ref{eq:exp_decomposition_backwards}), we conclude that
\[
\mathsf{E}\left[\mathbb{E}\left[\psi_{N}(S,Y)\right]\right]=\prod_{p=1}^{n}\mathbb{E}\left[\mathsf{E}\left[\mu(\bar{G}_{p}^{2})\right]^{\mathbb{\mathbb{I}}(S_{p}=p)}\right]=\prod_{p=1}^{n}\left[1+C/(N-p+1)\right].\qedhere
\]
\end{proof}

\section*{Supplementary materials}

The supplementary materials consist of proofs of Propositions~\ref{prop:simple_analysis},~\ref{prop:biased_analysis},~\ref{prop:identicalGs_momentcond}
and~\ref{prop:indicator_better}.

\section{Proofs of Propositions~\ref{prop:simple_analysis} and~\ref{prop:biased_analysis}}
\begin{proof}[Proof of Proposition~\ref{prop:simple_analysis}]
Let $Z_{p,M}:=\frac{1}{M}\sum_{i=1}^{M}G_{p}(\zeta_{(p-1)M+i})$
for $p\in\left\llbracket 1,n\right\rrbracket $. Since $\mathbb{E}\left[Z_{p,M}\right]=\mu(G_{p})$
for each $p\in\left\llbracket 1,n\right\rrbracket $ and $Z_{1,M},\ldots,Z_{n,M}$
are independent random variables we obtain $\mathbb{E}\left[\gamma_{{\rm simple}}^{N}\right]=\prod_{p=1}^{n}\mu(G_{p})=\gamma$.
As $M\rightarrow\infty$, $Z_{p,M}\overset{P}{\rightarrow}\mu(G_{p})$
for each $p\in\left\llbracket 1,n\right\rrbracket $ by the Weak Law
of Large Numbers, and so $\gamma_{{\rm simple}}^{N}\overset{P}{\rightarrow}\gamma$
as $M\rightarrow\infty$ by Slutsky's Theorem. To obtain the expression
for the second moment \citep[see, e.g.,][]{goodman1962variance} we
have
\[
\mathbb{E}\left[\left(\gamma_{{\rm simple}}^{N}\right)^{2}\right]=\prod_{p=1}^{n}\mathbb{E}\left[Z_{p,M}^{2}\right]=\prod_{p=1}^{n}\left[{\rm var}(Z_{p,M})+\mu(G_{p})^{2}\right],
\]
from which we conclude using ${\rm var}(Z_{p,M})=\left[\mu(G_{p}^{2})-\mu(G_{p})^{2}\right]/M$
and the definition of $\bar{G}_{1},\ldots,\bar{G}_{n}$.
\end{proof}
\begin{proof}[Proof of Proposition~\ref{prop:biased_analysis}]
Let $Z_{p,N}:=\frac{1}{N}\sum_{i=1}^{N}G_{p}(\zeta_{i})$ for $p\in\left\llbracket 1,n\right\rrbracket $.
The Weak Law of Large Numbers provides that $Z_{p,N}\overset{P}{\rightarrow}\mu(G_{p})$
for each $p\in\left\llbracket 1,n\right\rrbracket $ as $N\rightarrow\infty$
and so $\gamma_{{\rm biased}}^{N}\overset{P}{\rightarrow}\gamma$
as $N\rightarrow\infty$ by Slutsky's Theorem. However, we observe
that
\[
\mathbb{E}\left[\gamma_{{\rm biased}}^{N}\right]=\mathbb{E}\left[\prod_{p=1}^{n}\frac{1}{N}\sum_{i=1}^{N}G_{p}(\zeta_{i})\right]=\mathbb{E}\left[\prod_{p=1}^{n}G_{p}(\zeta_{K_{p}})\right],
\]
where $K$ is a vector of $n$ independent ${\rm Uniform}(\left\llbracket 1,N\right\rrbracket )$
random variables and this is not in general equal to $\gamma$.
\end{proof}

\section{Proofs of Propositions~\ref{prop:identicalGs_momentcond} and~\ref{prop:indicator_better}}

To prove Propositions~\ref{prop:identicalGs_momentcond} and~\ref{prop:indicator_better}
we need some additional lemmas.
\begin{lem}[See, e.g., \citealt{esary1967association}]
\label{lem:esary}For any random variable $X$ and non-decreasing
real-valued functions $g_{1}$ and $g_{2}$, $\mathbb{E}\left[g_{1}(X)g_{2}(X)\right]\geq\mathbb{E}\left[g_{1}(X)\right]\mathbb{E}\left[g_{2}(X)\right]$.
\end{lem}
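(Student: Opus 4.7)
The plan is to use a symmetrization argument with an independent copy. I would let $X'$ be independent of $X$ and identically distributed, and observe that since $g_1$ and $g_2$ are both non-decreasing on the range of $X$, for every pair of realizations $(x, x')$ the two differences $g_1(x) - g_1(x')$ and $g_2(x) - g_2(x')$ carry the same sign: both are $\geq 0$ when $x \geq x'$ and both are $\leq 0$ when $x \leq x'$. Hence the pointwise product $(g_1(x) - g_1(x'))(g_2(x) - g_2(x'))$ is non-negative deterministically, and in particular its expectation is non-negative.

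Next, I would take expectations over $(X, X')$ and expand the product bilinearly. Using independence of $X$ and $X'$ together with their identical distribution, the cross terms $\mathbb{E}[g_1(X)g_2(X')]$ and $\mathbb{E}[g_1(X')g_2(X)]$ both equal $\mathbb{E}[g_1(X)]\mathbb{E}[g_2(X)]$, while the diagonal terms $\mathbb{E}[g_1(X)g_2(X)]$ and $\mathbb{E}[g_1(X')g_2(X')]$ are equal to one another. Rearranging the inequality $\mathbb{E}[(g_1(X) - g_1(X'))(g_2(X) - g_2(X'))] \geq 0$ then gives $2\mathbb{E}[g_1(X)g_2(X)] \geq 2\mathbb{E}[g_1(X)]\mathbb{E}[g_2(X)]$, which is the desired bound.

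The main obstacle is integrability: the bilinear expansion requires that $\mathbb{E}[g_1(X)g_2(X)]$, $\mathbb{E}[g_1(X)]$, and $\mathbb{E}[g_2(X)]$ all be well-defined, and also that the various mixed terms $\mathbb{E}[g_i(X)g_j(X')]$ be finite so they can be legitimately manipulated. If one assumes these are finite, as is implicit in the statement for the inequality to be meaningful, the argument goes through directly. Otherwise, one can truncate by replacing $g_i$ with $(g_i \wedge N) \vee (-N)$; monotonicity is preserved under truncation, so the inequality holds at each level $N$, and one then passes to the limit via monotone or dominated convergence, splitting each $g_i$ into positive and negative parts if necessary to apply the appropriate convergence theorem.
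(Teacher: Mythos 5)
Your proof is correct: the symmetrization with an independent copy $X'$, the pointwise non-negativity of $(g_1(X)-g_1(X'))(g_2(X)-g_2(X'))$, and the bilinear expansion give exactly the claimed inequality, and your truncation remark adequately handles integrability. The paper itself offers no proof of this lemma---it is cited as a known result from \citet{esary1967association}---and your argument is precisely the standard (Chebyshev-type) proof of that result, so there is nothing to add.
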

\begin{lem}
\label{lem:indicator-lem-2}Let $B_{1},\ldots,B_{n}$ be Bernoulli
r.v.s with $B_{1}\sim{\rm Bernoulli}(n/N)$ and 
\begin{equation}
B_{p}\mid(B_{1},\ldots,B_{p-1})\sim{\rm Bernoulli}\left(\frac{n-\sum_{j=1}^{p-1}B_{j}}{N-p+1}\right).\label{eq:Bernoulli_assumption}
\end{equation}
Then, for any $m_{1},\ldots,m_{n}$ all greater than or equal to $1$,
$\mathbb{E}\left[\prod_{p=1}^{n}m_{p}^{B_{p}}\right]\leq\prod_{p=1}^{n}\mathbb{E}\left[m_{p}^{B_{p}}\right]$.
\end{lem}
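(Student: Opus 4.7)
The plan is to exploit that each $B_p \in \{0,1\}$ so that every factor linearises as $m_p^{B_p} = 1 + (m_p - 1) B_p$, and then reduce the claim to a negative-correlation inequality for the events $\{B_p = 1\}$. Expanding the product and using linearity of expectation,
\[
\mathbb{E}\!\left[\prod_{p=1}^{n} m_p^{B_p}\right]
= \sum_{S \subseteq \left\llbracket 1,n\right\rrbracket} \prod_{p \in S}(m_p - 1)\;\mathbb{P}\!\Bigl(\bigcap_{p \in S}\{B_p = 1\}\Bigr),
\]
and similarly $\prod_{p=1}^n \mathbb{E}[m_p^{B_p}] = \sum_{S} \prod_{p \in S}(m_p - 1)\prod_{p \in S}\mathbb{P}(B_p = 1)$. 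Because $m_p \geq 1$ forces every coefficient $\prod_{p \in S}(m_p - 1)$ to be non-negative, it is enough to compare the two sums term by term and to establish, for each $S \subseteq \left\llbracket 1,n\right\rrbracket$,
\[
\mathbb{P}\!\Bigl(\bigcap_{p \in S}\{B_p = 1\}\Bigr) \leq \prod_{p \in S}\mathbb{P}(B_p = 1).
\]

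The core step is this last inequality. I would verify first that the conditional specification \eqref{eq:Bernoulli_assumption} coincides with the joint law of the indicators obtained by drawing $n$ items sequentially and without replacement from a population of $N$ items of which $n$ are ``marked''; this is a matter of multiplying out the conditional probabilities and matching the resulting joint mass function to the hypergeometric formula. A direct consequence is exchangeability of $(B_1, \ldots, B_n)$ and $\mathbb{P}(B_p = 1) = n/N$ for every $p$. Hence, for any $S$ of cardinality $k$,
\[
\mathbb{P}\!\Bigl(\bigcap_{p \in S}\{B_p = 1\}\Bigr)
= \prod_{i=0}^{k-1}\frac{n-i}{N-i}
\quad \text{and} \quad
\prod_{p \in S}\mathbb{P}(B_p = 1) = (n/N)^{k},
\]
and the elementary inequality $(n-i)/(N-i) \leq n/N$, valid for $0 \leq i \leq k-1$ because $n \leq N$, delivers the bound factor by factor.

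The main obstacle is just bookkeeping in the subset expansion; once the $\{0,1\}$-valued linearisation is in place, the rest amounts to recognising $(B_1, \ldots, B_n)$ as indicators of sampling without replacement and invoking the classical negative upper-orthant dependence of such indicator sequences. No deeper input (beyond exchangeability and the elementary ratio inequality) is needed.
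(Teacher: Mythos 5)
Your proof is correct, and it takes a genuinely different route from the paper's. The paper peels off one factor at a time: writing $Z_{\ell-1}=\sum_{p<\ell}B_p$, it uses the conditional independence of $B_\ell$ and $(B_1,\ldots,B_{\ell-1})$ given $Z_{\ell-1}$, shows that $\mathbb{E}[-m_\ell^{B_\ell}\mid Z_{\ell-1}]$ and $\mathbb{E}[\prod_{p<\ell}m_p^{B_p}\mid Z_{\ell-1}]$ are both non-decreasing in $Z_{\ell-1}$ (the latter via a coupling argument for the hypergeometric experiment), and invokes the Chebyshev-type correlation inequality for monotone functions of a single random variable to get $\mathbb{E}[\prod_{p\le\ell}m_p^{B_p}]\le\mathbb{E}[m_\ell^{B_\ell}]\,\mathbb{E}[\prod_{p<\ell}m_p^{B_p}]$, then iterates. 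You instead exploit the $\{0,1\}$-valued structure directly: linearising $m_p^{B_p}=1+(m_p-1)B_p$, expanding multilinearly, and reducing the claim (since every coefficient $\prod_{p\in S}(m_p-1)$ is non-negative) to the negative upper-orthant inequality $\mathbb{P}(\cap_{p\in S}\{B_p=1\})\le\prod_{p\in S}\mathbb{P}(B_p=1)$, which you verify explicitly after identifying the joint law of $(B_1,\ldots,B_n)$ as that of sampling-without-replacement indicators --- the same hypergeometric identification the paper uses inside its coupling step. Your argument is more elementary and self-contained (no association inequality, no coupling), at the cost of being tied to indicator-valued exponents; the paper's monotonicity-plus-correlation machinery would survive replacing $m_p^{B_p}$ by more general non-decreasing functions of the partial counts, though that generality is not needed here. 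One small point worth making explicit in a final write-up: the factorwise bound $(n-i)/(N-i)\le n/N$ requires $N\ge n$, which is implicit in the lemma since otherwise the stated Bernoulli parameters could exceed one.
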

\begin{proof}
For any $\ell\in\left\llbracket 1,n\right\rrbracket $, define $Z_{\ell}=\sum_{p=1}^{\ell}B_{p}$.
From (\ref{eq:Bernoulli_assumption}), $(B_{1},\ldots,B_{\ell-1})$
and $B_{\ell}$ are conditionally independent given $Z_{\ell-1}$
and so
\[
\mathbb{E}\left[-\prod_{p=1}^{\ell}m_{p}^{B_{p}}\right]=\mathbb{E}\left[\mathbb{E}\left[-m_{\ell}^{B_{\ell}}\mid Z_{\ell-1}\right]\mathbb{E}\left[\prod_{p=1}^{\ell-1}m_{p}^{B_{p}}\mid Z_{\ell-1}\right]\right].
\]
We now show that $g_{1}(Z_{\ell-1}):=\mathbb{E}\left[-m_{\ell}^{B_{\ell}}\mid Z_{\ell-1}\right]$
and $g_{2}(Z_{\ell-1}):=\mathbb{E}\left[\prod_{p=1}^{\ell-1}m_{p}^{B_{p}}\mid Z_{\ell-1}\right]$
are non-decreasing functions of $Z_{\ell-1}$ so that $\mathbb{E}\left[g_{1}(Z_{\ell-1})g_{2}(Z_{\ell-1})\right]\geq\mathbb{E}\left[g_{1}(Z_{\ell-1})\right]\mathbb{E}\left[g_{2}(Z_{\ell-1})\right]$
by Lemma~\ref{lem:esary}. That $g_{1}$ is non-decreasing follows
from (\ref{eq:Bernoulli_assumption}) and $m_{p}\geq1$. Interpreting
$(B_{1},\ldots,B_{\ell-1})$ as a draw from a hypergeometric experiment,
we can rewrite $g_{2}(Z_{\ell-1})=\mathbb{E}\left[\prod_{j=1}^{Z_{\ell-1}}m_{I_{j}}\right]$
where $I_{1},\ldots I_{Z_{\ell-1}}$ are drawn uniformly without replacement
from $\left\llbracket 1,\ell-1\right\rrbracket $. Hence, from $m_{p}\geq1$
for all $p\in\left\llbracket 1,\ell-1\right\rrbracket $ and a simple
coupling argument we obtain that $g_{2}$ is also non-decreasing.
It follows that $\mathbb{E}\left[\prod_{p=1}^{\ell}m_{p}^{B_{p}}\right]\leq\mathbb{E}\left[m_{\ell}^{B_{\ell}}\right]\mathbb{E}\left[\prod_{p=1}^{\ell-1}m_{p}^{B_{p}}\right]$
and since $\ell$ is arbitrary we can conclude.
\end{proof}
\begin{proof}[Proof of Proposition~\ref{prop:identicalGs_momentcond}]
Let $m=\mu(\bar{G}_{1}^{2})$ and observe that for every $s$ with
$s_{p}\in\left\llbracket p+1,n\right\rrbracket $
\[
\psi_{N}(s)\leq\left(\prod_{p=1}^{n}m^{\left|\{j:s_{j}=p\}\right|}\right)\prod_{p=n+1}^{N}m^{\left(\left|\{j:s_{j}=p\}\right|-1\right)\vee0}=m^{\sum_{p=1}^{n}b_{p}},
\]
where $b_{p}=\mathbb{I}\left(s_{p}\leq n\text{ or }s_{p}\in\{s_{1},\ldots,s_{p-1}\}\right)$.
Hence, $\mathbb{E}\left[(\gamma_{{\rm recycle}}^{N}/\gamma)^{2}\right]=\mathbb{E}\left[\psi_{N}(S)\right]\leq\mathbb{E}\left[m^{\sum_{p=1}^{n}B_{p}}\right]$
where $S$ is as defined in Theorem~\ref{thm:approx_consistency}.
From Lemma~\ref{lem:indicator-lem-2}, we conclude that
\[
\mathbb{E}\left[(\gamma_{{\rm recycle}}^{N}/\gamma)^{2}\right]\leq\mathbb{E}\left[m^{\sum_{p=1}^{n}B_{p}}\right]\leq\prod_{p=1}^{n}\mathbb{E}\left[m^{B_{p}}\right]=\mathbb{E}\left[(\gamma_{{\rm simple}}^{N}/\gamma)^{2}\right].\qedhere
\]
\end{proof}
\begin{lem}
\label{lem:indicator-lem-1}Let $G_{1},\ldots,G_{n}$ be as in the
statement of Proposition~\ref{prop:indicator_better}. Then for any
$s\in\left\llbracket 1,N\right\rrbracket ^{n}$ such that $s_{p}\geq p$
for all $p\in\left\llbracket 1,n\right\rrbracket $ we have $\psi_{N}(s)\leq\prod_{p=1}^{n}\mu(\bar{G}_{p}^{2})^{b_{p}}$,
where $b_{p}=\mathbb{I}\left(s_{p}\in\left\llbracket 1,n\right\rrbracket \cup\{s_{1},\ldots,s_{p-1}\}\right)$.
\end{lem}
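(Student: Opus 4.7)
The plan is to exploit the indicator structure aggressively. Since $\bar{G}_p = \mathbb{I}_{A_p}/\mu(A_p)$ and indicators are idempotent, any product $\prod_{j \in J} \bar{G}_j$ collapses to $\mathbb{I}_{\bigcap_{j \in J} A_j}/\prod_{j \in J} \mu(A_j)$. Writing $J_i := \{j \in \llbracket 1,n \rrbracket : s_j = i\}$ and using that every $j \in \llbracket 1,n \rrbracket$ lies in exactly one $J_{s_j}$, the normalising constants collect to give
\[
\psi_N(s) = \frac{\prod_{p=1}^n \mu\bigl(A_p \cap \bigcap_{j \in J_p} A_j\bigr) \cdot \prod_{i=n+1}^N \mu\bigl(\bigcap_{j \in J_i} A_j\bigr)}{\prod_{p=1}^n \mu(A_p)^2}.
\]

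Next I would bound each factor in the numerator by a single $\mu(A_q)$. For $p \leq n$, monotonicity of $\mu$ gives $\mu(A_p \cap \cdots) \leq \mu(A_p)$; for $i > n$ with $J_i \neq \emptyset$, pick the representative $q = \min J_i$ and bound by $\mu(A_q)$; for $i > n$ with $J_i = \emptyset$ the factor equals $1$. The key combinatorial observation is that the chosen representatives are exactly the indices $p$ with $b_p = 0$: by definition $b_p = 0$ iff $s_p > n$ and $s_p \notin \{s_1, \ldots, s_{p-1}\}$, equivalently iff $p = \min J_{s_p}$ for some $s_p > n$. Hence the map $i \mapsto \min J_i$ is a bijection from $\{i > n : J_i \neq \emptyset\}$ onto $\{p : b_p = 0\}$.

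Substituting, the numerator is at most $\prod_{p=1}^n \mu(A_p) \cdot \prod_{p : b_p = 0} \mu(A_p) = \prod_{p=1}^n \mu(A_p)^{2 - b_p}$, so
\[
\psi_N(s) \leq \prod_{p=1}^n \mu(A_p)^{-b_p} = \prod_{p=1}^n \mu(\bar{G}_p^2)^{b_p},
\]
as required, using that $\mu(\bar{G}_p^2) = 1/\mu(A_p)$. The hypothesis $s_p \geq p$ is not strictly needed for this bound; it simply reflects the support of the random vector $S$ of Lemma~\ref{lem:second_moment_permapprox} in the application to Proposition~\ref{prop:indicator_better}. The main obstacle is the bookkeeping: once one commits to the $\min J_i$ representative and spots the bijection with $\{p : b_p = 0\}$, the remaining steps are routine manipulation of products.
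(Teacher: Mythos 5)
Your proof is correct and follows essentially the same route as the paper's: both arguments bound each nonempty intersection term $\mu\big(\bar{G}_{i_1}\cdots\bar{G}_{i_p}\big)$ by dropping exactly one normalizing factor $1/\mu(A_q)$ for a chosen representative index $q$, and then count the surviving factors via the $b_p$. The only difference is presentational --- the paper passes through a ``worst case is nested sets'' reduction and leaves the final accounting as ``plugging this inequality carefully,'' whereas you use monotonicity of $\mu$ directly and make explicit the bijection between $\{i>n : J_i\neq\emptyset\}$ and $\{p : b_p=0\}$, which also substantiates your correct observation that the hypothesis $s_p\geq p$ is not actually needed for the bound.
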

\begin{proof}
Define $m_{p}:=1/\mu(A_{p})$. It follows that $\bar{G}_{p}=m_{p}\mathbb{I}_{A_{p}}$
and $\mu(\bar{G}_{p}^{2})=m_{p}$. Moreover, for $i_{1},\ldots,i_{p}\in\left\llbracket 1,n\right\rrbracket $
we have $\mu(\prod_{j=1}^{p}\bar{G}_{i_{j}})\leq\left(\prod_{j=1}^{p}m_{i_{j}}\right)/\max_{j\in\left\llbracket 1,p\right\rrbracket }m_{i_{j}}$,
with equality if the sets $A_{i_{1}},\ldots,A_{i_{p}}$ are nested,
i.e. $A_{i_{j}}\subseteq A_{i_{k}}$ or $A_{i_{k}}\subseteq A_{i_{j}}$
for distinct $j,k\in\left\llbracket 1,p\right\rrbracket $. Since
$\psi_{N}(s)$ is a non-decreasing function of products of expressions
of the form $\mu(\prod_{j=1}^{p}\bar{G}_{i_{j}})$, we can upper bound
$\psi_{N}(s)$ by assuming henceforth that $A_{1},\ldots,A_{n}$ are
nested, in which case we observe that $\mu(\prod_{j=1}^{p}\bar{G}_{i_{j}})=\left(\prod_{j=1}^{p}m_{i_{j}}\right)/\max_{j\in\left\llbracket 1,p\right\rrbracket }m_{i_{j}}\leq\prod_{j=2}^{p}m_{i_{j}}$.
Plugging this inequality carefully into $\psi_{N}(s)$ using the definition
of $b_{1},\ldots,b_{n}$ gives $\psi_{N}(s)\leq\prod_{p=1}^{n}m_{p}^{b_{p}}$.
\end{proof}
\begin{proof}[Proof of Proposition~\ref{prop:indicator_better}]
From Theorem~\ref{thm:approx_consistency} and Lemma~\ref{lem:indicator-lem-1}
we have 
\[
\mathbb{E}\left[\left(\gamma_{{\rm recycle}}^{N}/\gamma\right)^{2}\right]=\mathbb{E}\left[\psi_{N}(S)\right]\leq\mathbb{E}\left[\prod_{p=1}^{n}\mu(\bar{G}_{p}^{2})^{B_{p}}\right],
\]
where $S$ is as defined in Theorem~\ref{thm:approx_consistency},
and $B_{p}=\mathbb{I}\left(S_{p}\in\left\llbracket 1,n\right\rrbracket \cup\{S_{1},\ldots,S_{p-1}\}\right)$
for $p\in\left\llbracket 1,n\right\rrbracket $. From Lemma~\ref{lem:indicator-lem-2},
\[
\mathbb{E}\left[\prod_{p=1}^{n}\mu(\bar{G}_{p}^{2})^{B_{p}}\right]\leq\prod_{p=1}^{n}\mathbb{E}\left[\mu(\bar{G}_{p}^{2})^{B_{p}}\right]=\prod_{p=1}^{n}\mathbb{E}\left[1+\frac{\mu(\bar{G}_{p}^{2})-1}{N/n}\right]=\mathbb{E}\left[\left(\gamma_{{\rm simple}}^{N}/\gamma\right)^{2}\right].\qedhere
\]
\end{proof}
\bibliographystyle{agsm}
\bibliography{uape}

\end{document}